\DeclareMathOperator{\wdeg}{wdeg}
\newcommand{\bbN}{\mathbb{N}}
\newcommand{\bbNp}{\mathbb{N}^*}
\newcommand{\bbZ}{\mathbb{Z}}
\newcommand{\bbQ}{\mathbb{Q}}
\newcommand{\ef}{\mathbb{F}}
\newcommand{\efq}{\ef_q}
\newcommand{\eftwo}{\ef_2}
\newcommand{\eftwomm}[1]{\ef_{2^{#1}}}
\newcommand{\eftwos}{\eftwomm{s}}
\newcommand{\bs}{\boldsymbol}
\newcommand{\veps}{\varepsilon}
\newcommand{\tl}{\tilde}
\newcommand{\bch}{\mathrm{BCH}}
\newcommand{\lm}{\textsc{lm}}
\newcommand{\bx}{\bs{X}}
\newcommand{\by}{\bs{Y}}
\newcommand{\ba}{\bs{a}}
\newcommand{\bi}{\bs{i}}
\newcommand{\bj}{\bs{j}}
\newcommand{\bxbi}{\bx^{\bi}}
\newcommand{\bybi}{\by^{\bi}}
\newcommand{\bybj}{\by^{\bj}}
\newcommand{\hasse}[2]{{#1}^{(#2)}}
\newcommand{\hasseg}[1]{\hasse{g}{#1}}
\newcommand{\hassegi}{\hasseg{\bi}}
\newcommand{\hassegj}{\hasseg{\bj}}
\newcommand{\wt}{\mathrm{wt}}
\newcommand{\isom}{\stackrel{\sim}{\to}}
\newcommand{\roots}{\mathrm{Roots}}
\newcommand{\wdegpm}{\wdeg_{(1,-1)}}
\newcommand{\wdegw}{\wdeg_{(1,w)}}
\newcommand{\odd}[1]{\overset{\mathrm{odd}}{#1}}
\newcommand{\even}[1]{\overset{\mathrm{even}}{#1}}
\newcommand{\varared}[2]{\overset{(#2)}{#1}}
\newcommand{\ared}[1]{\varared{#1}{r}}
\newcommand{\armoed}[1]{\varared{#1}{r-1}}
\newcommand{\bg}{\bs{g}}
\newcommand{\bh}{\bs{h}}
\newcommand{\plus}[1]{{#1}^{+}}
\newcommand{\Gplus}{\plus{G}}
\newcommand{\gplus}{\plus{g}}
\newcommand{\bgplus}{\plus{\bg}}
\newcommand{\Mplus}{\plus{M}}
\newcommand{\tlftwomx}{L}
\newcommand{\ord}{\mathrm{ord}}
\newcommand{\wieven}[1]{w_{#1}^{\mathrm{even}}}
\newcommand{\wiodd}[1]{w_{#1}^{\mathrm{odd}}}
\newcommand{\woneeven}{\wieven{1}}
\newcommand{\woneodd}{\wiodd{1}}
\newcommand{\wtwoeven}{\wieven{2}}
\newcommand{\wtwoodd}{\wiodd{2}}
\newcommand{\homog}{\mathrm{Homog}_{Y,Z}}
\newcommand{\mult}{\mathrm{mult}}
\newcommand{\hath}{\hat{h}}
\newcommand{\bm}{\bs{m}}
\newcommand{\am}{\frak{a}(\bm)}
\newcommand{\nvars}[3]{N_{1,#1,#2}^{#3}}
\newcommand{\nvarsrwt}{\nvars{w'_1}{w'_2}{\rho}}
\newcommand{\mindeg}[3]{\Delta_{1,#1,#2}^{#3}}
\newcommand{\mindegrw}{\mindeg{w_1}{w_2}{\rho}}
\newcommand{\mindegrwt}{\mindeg{w'_1}{w'_2}{\rho}}
\newcommand{\eneq}{n_{\mathrm{eq}}}
\newcommand{\cost}{\mathrm{cost}}
\newcommand{\one}{\mathds{1}}
\newcommand{\naive}{na\"ive}
\newcommand{\kavcic}{Kav\v ci\' c} 
\newcommand{\downed}[1]{{#1}_{\downarrow}}
\newcommand{\predowned}[1]{{#1}^{\downarrow}}
\newcommand{\tlf}{\downed{f}}
\newcommand{\hatf}{\predowned{f}}
\newcommand{\lmvar}[1]{\lm_{<_{#1}}}
\newcommand{\lmw}{\lmvar{w}}
\newcommand{\lmmo}{\lmvar{-1}}
\newcommand{\pade}{Pad\'e}
\newcommand{\rmax}{r_{\max}}
\newcommand{\grobner}{Gr\"obner}
\newcommand{\kotter}{K\"otter}
\newtheorem{proposition}{Proposition}[section]
\newtheorem{definition}[proposition]{Definition}
\newtheorem{corollary}[proposition]{Corollary}
\newtheorem{theorem}[proposition]{Theorem}
\newtheorem{remark}[proposition]{Remark}
\newtheorem*{remarknn}{Remark}
\newtheorem*{examplenn}{Example}
\title{Fast syndrome-based Chase decoding of binary BCH codes through
Wu list decoding}  
\date{\today} 
\author{Yaron Shany and Amit Berman\thanks{The
authors are with Samsung  Semiconductor Israel R\&D Center, 146 Derech
Menachem Begin St., Tel Aviv, 6492103, Israel. Emails: \{yaron.shany,
amit.berman\}@samsung.com}}
\begin{document}
\maketitle

\begin{abstract}
We present a new fast Chase decoding algorithm for binary BCH codes.
The new algorithm reduces the
complexity in comparison to a recent fast Chase decoding algorithm
for Reed--Solomon (RS) codes by the authors (IEEE Trans.~IT, 2022), by
requiring only a single \kotter{} iteration per edge of the 
decoding tree. In comparison to  
the fast Chase algorithms presented by Kamiya (IEEE Trans.~IT, 2001)
and Wu (IEEE Trans.~IT, 2012) for binary BCH codes, the polynomials
updated throughout the algorithm of the current paper typically have a
much lower degree. 

To achieve the complexity reduction, we build on a new
isomorphism between two solution modules in the binary case, and on a
degenerate case of the soft-decision (SD) version of the Wu list
decoding algorithm. Roughly speaking, we prove that when the maximum
list size is $1$ in Wu list decoding of binary BCH codes, assigning a
multiplicity of $1$ to a coordinate has the same effect as flipping
this coordinate in a Chase-decoding trial. 
  
The solution-module isomorphism also provides a systematic
way to benefit from the binary alphabet for reducing the complexity in
bounded-distance hard-decision (HD) decoding. Along the way, we 
briefly develop the \grobner{}-bases formulation of the Wu list
decoding algorithm for binary BCH codes, which is missing in the
literature. 
\end{abstract}

\section{Introduction}

\subsection{Motivation and known results}
Binary BCH codes are widely used in storage and communication
systems. Traditionally, these codes are decoded by {\it
hard-decision (HD)} decoding algorithms that perform unique decoding up
to half the minimum distance, such as the Berlekamp--Massey (BM)
algorithm \cite{Ma69}. 

The revolutionary work \cite{GS99} of Guruswami and Sudan (following
Sudan's original work \cite{S97}) introduced a polynomial time list
decoding algorithm for Reed--Solomon (RS) codes over $\efq$ (where $q$
is a prime power and $\efq$ is the finite field of $q$ elements) up to
the so-called $q$-ary  {\it Johnson bound} \cite[Sec.~9.6]{Roth},
\cite[p.~3276]{BHNW13}. Wu \cite{Wu08}  
introduced an even more efficient algorithm for HD list decoding of RS
codes up to the Johnson bound. 

Since BCH codes are subfield subcodes of RS codes with the same
minimum distance, every algorithm for decoding RS codes is also
automatically an algorithm for decoding BCH codes. Hence, the list
decoding algorithms for RS codes from \cite{GS99} and \cite{Wu08} can
be used for decoding BCH codes up to the $q$-ary Johnson radius.
However, if the BCH code is defined over a proper subfield $\ef_{q'}$
of $\efq$, then one may expect a larger decoding radius, since the
$q'$-ary Johnson bound is in general larger than the $q$-ary Johnson
bound for $q'<q$. For $q'=2$, \cite{Wu08} includes also
a list decoding algorithm for binary BCH codes that takes advantage of
the binary alphabet, and reaches the binary Johnson bound. 

It is well known that using {\it channel reliability information} and
moving from HD decoding to {\it soft-decision (SD)} decoding may
significantly improve the decoding performance. \kotter{} and Vardy
\cite{KV03} presented a polynomial time SD list decoding algorithm for RS
codes that converts the channel reliability information into a {\it
multiplicity matrix}, with one dimension corresponding to the
coordinates of the transmitted codeword, and the other to the
 $q$ possible symbols of the alphabet. 

In the conclusion of \cite{Wu08}, Wu mentions in passing that his list
decoding algorithms for RS and BCH codes can be converted to SD
list decoding algorithms, by using  
different multiplicities for different coordinates, depending on
the channel reliability information. While Wu's ``one-dimensional''
multiplicity assignment is not as general as the two-dimensional
multiplicity assignment of \cite{KV03} for $q>2$, this is not the
situation in the binary case.  

Before \cite{KV03} and \cite{Wu08}, the main algebraic SD decoding
algorithms for binary BCH codes were the generalized minimum distance
(GMD) decoding \cite{F66}, and the Chase decoding algorithms
\cite{Chase}. GMD decoding consists of successively erasing an even
number of the least reliable coordinates and applying
errors-and-erasures decoding.  
In Chase decoding, there is a predefined list of test error
patterns on the $\eta$ least reliable coordinates for some small
$\eta$. For example, this list may consist of a random list of
vectors, all possible non-zero
vectors, all vectors of a low enough weight,\footnote{Throughout,
``weight'' and ``distance'' refer to Hamming weight and Hamming
distance, respectively.} 
etc.. The decoder runs on error patterns from the list and
subtracts them from the received word, feeding the result to an HD
decoder. If the HD decoder succeeds, then its output is saved into the
output list of the decoder.

While the Chase algorithms typically have exponential complexity, they
are known to have better performance than the algorithms of
\cite{KV03} and \cite{Wu08} for high-rate short to medium length codes
\cite{Wu12}. Therefore, there is still a great interest in finding
low-complexity Chase decoding algorithms. 

In {\it fast} Chase decoding algorithms, the decoder shares
computations between HD decodings of different test error
patterns (see, e.g., \cite{BK10}, \cite{Wu12}, \cite{ZZW09},
\cite{XCB20}, \cite{SB21}, \cite{K01}, \cite{Wu12},
and \cite{Z13}). Several papers used special properties of
\emph{binary} BCH codes  
to further reduce the complexity of their Chase decoding
algorithms. Kamiya \cite{K01} presented a fast Chase decoding
algorithm for binary BCH codes building on the Welch--Berlekamp
algorithm. Wu \cite[Appendix C]{Wu12} presented a simplified version
of his BM-based polynomial-update algorithm for binary BCH codes. 
Zhang {\it et al.}~\cite{ZZW11} extend \cite{Wu12} by  
introducing  a ``backward'' step, enabling to order the test vectors
according to a Gray map instead of using Wu's tree. Finally, Zhang
\cite{Z13} suggested an optimized version of \cite{BK10}, using both
optimizations from \cite{ZZ10} and \cite{ZZ12} (for RS codes in
general), and the binary alphabet of the BCH codes. 

It should be noted that \cite{BK10}, \cite{Z13} are
``time domain'' algorithms,  that is, they work directly on the
received vector, and not on the {\it syndrome} vector. For
practical applications using high-rate BCH codes, it is typically
beneficial to replace the long received vector by the short syndrome
vector before the decoding begins. 

Because our fast Chase algorithm is based on a \grobner{} bases
formulation of the Wu list decoding algorithm, let
us recall the main works in this line of research. For RS
codes, the \grobner{} bases formulation of the Wu list decoding algorithm
was introduced by Trifonov \cite{Trif10}, using a time 
domain approach. Beelen {\it et al.} \cite{BHNW13} presented a
\grobner{} bases formulation working with the syndrome
vector. Irreducible binary Goppa codes were also considered in
\cite{BHNW13}, where an algorithm for HD list decoding up to the binary
Johnson radius was developed.

\subsection{Results and method}
We present a syndrome-based fast Chase decoding algorithm for
(primitive, narrow-sense) binary
BCH codes that considerably reduces the complexity in comparison to 
\cite[Alg.~C]{SB21} (reduction by a factor of about $5$ for polynomial
updates; see Subsection \ref{subsec:complexity} for a precise
statement). To do that, we use a different approach than that of
\cite{SB21}.  

First, we establish an isomorphism between two solution
modules for binary BCH codes, one for the {\it key equation} and one
for a {\it modified key equation}. This isomorphism gives a general
framework for reducing the decoding complexity in the binary case,
both in HD bounded-distance decoding, and in fast Chase decoding. For
example, for HD bounded-distance decoding, this isomorphism enables to
benefit from the binary alphabet with practically any existing
algorithm (such as the Euclidean algorithm), without being tied only
to Berlekamp's well-known simplification of his algorithm
\cite[pp.~24--32]{Berl66}. We remark 
that a similar idea exists in the literature for specific
algorithms \cite{Ca88}, \cite{JK95}. Also, \cite{JK95} introduces a
general method with a different approach (using Newton's identities)
and with a distinction between even and odd correction radius.

Building on the above isomorphism, we then use an SD version of the Wu
list decoding algorithm to derive a 
syndrome-based fast Chase decoding algorithm for binary BCH
codes. In the new algorithm there is just one
\kotter{} iteration per edge of the decoding tree (see ahead for
details), as opposed to two iterations in \cite{SB21} for RS
codes. Halving the number of \kotter{} iterations 
has a double effect in reducing the complexity: once in reducing the
degrees of the maintained polynomials, and once in performing less
substitutions and multiplications by scalars. 

We note that it is not possible to simply omit an iteration from the
algorithm of \cite{SB21} in the binary case: omitting the {\it derivative
iteration} from the algorithm of \cite{SB21} makes it into
a fast GMD algorithm, where coordinates are dynamically \emph{erased},
which is different from dynamically \emph{flipping} bits as in Chase
decoding. 

Similarly to \cite[Alg.~C]{SB21}, instead of tracking the
ELP itself, we track low-degree ``coefficient polynomials'' in the
decomposition of the ELP according to a suitable basis for the
solution module of the modified key equation. Consequently, if the unique
error-correction radius is $t$, the total number of errors is $t+r$,
and the total number of errors in the non-reliable coordinates is at
least $r$, then the maximum sum of degrees of all maintained
polynomials is about $2r$ (for a precise statement, see Proposition
\ref{prop:degs} ahead), as opposed to about $2(t+r)$ in the algorithm
of Wu \cite{Wu12}.   

To benefit from the low degree of the updated polynomials, we avoid
working with evaluation vectors.  
While working with evaluation vectors is a theoretical means for
achieving a complexity of $O(n)$ per modified coordinate, it is
inefficient for practical code parameters. For this reason, Wu
\cite[Sec.~5]{Wu12} suggests a criterion for entering the evaluation
step, based on a length variable of his algorithm. As in \cite{SB21},
we use such a 
criterion which is based on a {\it discrepancy} of the algorithm: ELP
evaluation is performed when the discrepancy is zero. This criterion
never misses a required evaluation, but might 
invoke an unnecessary evaluation with a low probability, and also
slightly degrades the probability of decoding success: When the total
number of errors is $t+r$, we require $r+1$ (instead of $r$) errors on
the unreliable coordinates, similarly to \cite{Wu12}.

Because our algorithm tracks low-degree coefficient polynomials,
evaluating an ELP hypothesis amounts to evaluating two polynomials of
degree about $r$. Hence the number of finite-field multiplications of the
evaluation step (when the above criterion is satisfied) is
$O(n\cdot \min\{r,\log(n)\})$, where the minimum is between \naive\ 
point-by-point evaluation, and the fast additive FFT 
algorithm of Gao and Mateer \cite{GM10}.  

We use the language of \grobner{}-bases for $\efq[X]$-modules, following
the works of Fitzpatrick \cite{Fitz95}, Trifonov \cite{Trif10},
Trifonov and Lee \cite{TL12}, and Beelen {\it et
al.}~\cite{BHNW13}. Along the way, we
briefly derive the \grobner{}-bases formulation of the (SD) Wu list
decoding algorithm for binary BCH codes, which is missing in the
literature. While quite similar to the 
case of binary irreducible Goppa codes considered in \cite{BHNW13},
there are some differences that must be considered. For example, the
method of \cite{BHNW13} builds on the fact that for an irreducible
$g(X)\in \eftwos[X]$ (for $s\in \bbNp$), 
$\eftwos[X]/(g(X))$ is a field, so that any non-zero element  has a
multiplicative inverse. However,  
this is not the case for the ring $\eftwos[X]/(X^{d-1})$ involved in
the key equation for binary BCH codes.

In summary, the main contributions of the paper are as follows:

\begin{itemize}

\item We introduce a new fast Chase algorithm for binary BCH codes,
namely, 
Algorithm A, that requires a single \kotter{} iteration per edge of the
decoding tree, and reduces the complexity by a factor of about $5$
in comparison to the corresponding algorithm for RS codes
\cite{SB21}.\footnote{Again, we note that it is not possible to simply
omit an iteration from the algorithm 
of \cite{SB21}.} In comparison to existing fast Chase algorithms for
binary BCH codes, the new algorithm maintains polynomials of a lower
degree, and consequently the complexity is considerably lower; see
Subsection \ref{subsec:complexity} for details.

\item We use a degenerate version of the SD Wu list decoding algorithm for
devising the new fast Chase algorithm, thus 
establishing a connection between two prominent algebraic SD decoding
algorithms for binary BCH codes. Roughly speaking, the idea is that
when the list size is $1$ in the Wu list decoding algorithm, assigning
a multiplicity of $1$ to a 
coordinate has the same effect as flipping the coordinate in a Chase 
trial (Theorem \ref{thm:success}). We believe that this connection is
more natural than that of \cite{BK10}, where the 
Guruswami--Sudan list decoding algorithm was used to devise a fast
Chase algorithm for RS codes. 

\item We establish an isomorphism between two solution modules for
decoding binary BCH codes. This isomorphism provides a general
framework for benefiting in complexity from the binary alphabet, even
in bounded distance decoding,
without being tide only to Berlekamp's simplification of his algorithm
\cite{Berl66}. 

\item We derive the \grobner{}-bases formulation of the SD Wu list
decoding algorithm for binary BCH codes, which is missing in the
literature. 

\end{itemize}

\subsection{Organization}
Section \ref{sec:preliminaries} contains some definitions and results
used throughout the paper. In Section \ref{sec:isom}, we prove the
isomorphism between two solution modules for decoding binary BCH
codes, and show how this isomorphism can be used for benefiting from
the binary alphabet in a systematic
way. Section \ref{sec:softwu} contains a brief derivation of the
\grobner{}-bases approach for SD Wu list decoding of binary BCH
codes. 
The new fast Chase decoding algorithm is derived in Section
\ref{sec:chase}, in which we describe the update rule on an edge
of the decoding tree, the stopping criterion for entering
exhaustive evaluation, and efficient methods for preforming the
evaluation. The same section also includes a detailed complexity
analysis. Finally, Section \ref{sec:conclusions} includes some
conclusions and open questions for further research.

\section{Preliminaries}\label{sec:preliminaries}

\subsection{Binary BCH codes and the key equation}

For $s,t\in\bbNp$, let $\bch(t)$ be the primitive binary BCH
code of length $2^s-1$ and designed distance $d:=2t+1$, that is, let
$\bch(t)$ be the cyclic binary code of length $n:=2^s-1$ with zeroes
$\gamma,\gamma^3,\ldots,\gamma^{2t-1}$ and their conjugates, where
$\gamma$ is some primitive element in $\eftwos^*$. 

Suppose that a codeword $\bs{x}\in \bch(t)$ is transmitted, and the  
HD received word is $\bs{y}:=\bs{x}+\bs{e}$ for some
error vector $\bs{e}\in \eftwo^n$.  For $j\in\bbN$, let
$S_j:=y(\gamma^j)$, where for a field $K$ and a vector
$\bs{f}=(f_0,f_1,\ldots,f_{n-1})\in K^n$, we let
$f(X):=f_0+f_1X+\cdots + f_{n-1}X^{n-1}\in 
K[X]$. The {\bf syndrome polynomial}
associated with $\bs{y}$ is
$S(X):=S_1+S_2X+\cdots+S_{d-1}X^{d-2}$. 

Suppose that the error locations are some distinct elements
$\alpha_1,\ldots,\alpha_{\veps}\in \eftwos^*$ (where 
$\veps\in\{1,\ldots,n\}$ is the number of errors), and let
the {\bf error locator polynomial (ELP)}, $\sigma(X)\in\eftwos[X]$, be
defined by
$$
\sigma(X):=\prod_{i=1}^{\veps}(1+\alpha_iX), 
$$
and the {\bf error evaluator polynomial (EEP)}, $\omega(X)\in\eftwos[X]$,
be defined by
$$
\omega(X):=\sum_{i=1}^{\veps} \alpha_i\prod_{j\neq
i}(1+\alpha_jX)=\sigma'(X),
$$
where $\sigma'(X)$ is the formal derivative of $\sigma(X)$.
Then $\omega$ and $\sigma$ satisfy the following well-known {\bf key equation}:
\begin{equation}\label{eq:key}
\omega\equiv S\sigma \mod (X^{2t})
\end{equation}
(see, e.g., \cite[Sect.~6.3]{Roth} for a proof).  

\subsection{A useful monomial ordering}
For monomial orderings and \grobner{} bases for modules over polynomial 
rings, we refer to \cite[Sec.~5.2]{CLO2}. In this section, we recall
a monomial ordering used throughout the paper.  

\begin{definition}\label{def:ord}
{\rm

Let $K$ be an arbitrary field.

\begin{enumerate}

\item For $\bs{w}=(w_1,\ldots,w_n)\in \bbQ^{n}$ (where
$n\in \bbNp$) and for $0\neq f(X_1,\ldots,X_n)\in
K[X_1,\ldots, X_n]$, we let $\wdeg_{\bs{w}}(f)$ be the maximum value of
$w_1i_1+w_2i_2+\cdots +w_{n}i_{n}$ over all
monomials $X_1^{i_1}X_2^{i_2}\cdots X_{n}^{i_n}$ appearing in
$f$. For $f=0$, we put $\wdeg_{\bs{w}}(f):=-\infty$. We call
$\wdeg_{\bs{w}}(f)$ the {\bf $\bs{w}$-weighted degree} of 
$f$.

\item For $w\in \bbQ$, and for $f_0(X),f_1(X)\in K[X]$, we define 
$\wdegw(f_0,f_1):=\wdegw(f_0(X)+Yf_1(X))$. In other words,
$$
\wdegw(f_0,f_1):=\max\{\deg(f_0),\deg(f_1)+w\}.
$$ 

\item Using the notation of \cite{Roth}, put
$\ord(f_0,f_1):=\max\{\deg(f_0)+1,\deg(f_1)\}$, and note that
$\ord(f_0,f_1)=\wdegpm(f_0,f_1)+1$. As in \cite{Roth}, we refer to
$\ord(f_0,f_1)$ as the {\bf order} of the pair
$(f_0,f_1)$.  

\item Following \cite{Fitz95}, define a monomial ordering $<_w$ on
$K[X]^2$ as follows. For monomials\footnote{Recall that a monomial in
$K[X]^{2}$ is a pair of the form $(X^j,0)$ or $(0,X^j)$ for some $i\in
\bbN$.} $\bs{m}_1,\bs{m}_2\in K[X]^2$,
define $\bs{m}_1<_w\bs{m}_2$ if $\wdegw(\bs{m}_1)<\wdegw(\bs{m}_2)$,
or $\wdegw(\bs{m}_1)=\wdegw(\bs{m}_2)$, and, when a pair $(f(X),g(X))$
is considered as the bivariate polynomial $f(X)+Yg(X)$, $\bs{m}_1$ is
smaller than $\bs{m}_2$ with respect to the lex ordering with
$Y>X$. Explicitly, we define $(X^{j_1},0)<_w(X^{j_2},0)$ iff $j_1<j_2$,
$(0,X^{j_1})<_w(0,X^{j_2})$ iff $j_1<j_2$, and
$(X^{j_1},0)<_w(0,X^{j_2})$ iff $j_1\leq j_2+w$.  Note that this is a
monomial ordering even when $w$ is not positive or not an integer (as
in \cite{BHNW13}). 

\end{enumerate}
}
\end{definition}

In what follows, for a monomial ordering $\prec$ on $K[X]^{\ell}$ (for
$\ell\in \bbNp$), and for
$\bs{f}\in K[X]^{\ell}$, we will write $\lm_{\prec}(\bs{f})$ for the
leading monomial of $\bs{f}$  with respect to $\prec$, that is, for
the $\prec$-largest monomial appearing in $\bs{f}$. 
When $\prec$ is clear from the context, we will sometimes write
$\lm(\bs{f})$ for $\lm_{\prec}(\bs{f})$. 
Finally, for monomials in $K[X]^2$, we say that the monomial $(X^i,0)$
is {\bf on the left}, while the monomial $(0,X^i)$ is on {\bf on the
right}.   

\subsection{\kotter{}'s iteration}
In this subsection we recall the general form of \kotter{}'s
iteration \cite{Koetter}, \cite{NH98}, as presented
by McEliece \cite[Sect.~VII.C]{McE03}. Throughout this subsection, $K$
is an arbitrary field.

For  a $K[X]$-submodule $M\subseteq K[X]^{\ell+1}$ (for $\ell\in
\bbNp$) of rank $\ell+1$, let
$G=\{\bs{g}_0,\ldots,\bs{g}_{\ell}\}$ be a \grobner{} basis for $M$ with 
respect to some monomial ordering $\prec$ on $K[X]^{\ell+1}$. We will
assume that the leading monomial of $\bs{g}_j$ contains the $j$-th unit
vector, for all $j\in\{0,\ldots, \ell\}$, where coordinates of vectors
are indexed by $0,\ldots, \ell$ (it is easily verified that there is
no loss of generality in this assumption).  

Let $D\colon K[X]^{\ell+1}\to K$ be a non-zero linear functional
for which $\Mplus:= M\cap \ker(D)$ is a $K[X]$-module.
\kotter{}'s iteration converts the $(\ell+1)$-element
\grobner{} basis\footnote{From this point on in this subsection,
``\grobner{} basis'' means a \grobner{} basis with respect to $\prec$.} 
$G$ of $M$ to an $(\ell+1)$-element \grobner{} basis
$\Gplus=\{\bgplus_0,\ldots,\bgplus_{\ell}\}$ of $\Mplus$, while 
maintaining the property that $\lm(\bgplus_j)$ contains the 
$j$-th unit vector for all $j\in\{0,\ldots,\ell\}$. 
It is presented in the following pseudo-code.
\hfill\\ 

\begin{tabular}{|c|}
\hline
{\bf \kotter{}'s iteration}\\
\hline
\end{tabular}

\begin{description}

\item[\bf{Input}] A \grobner{} basis
$G=\{\bs{g}_0,\ldots,\bs{g}_{\ell}\}$ for the submodule $M\subseteq
K[X]^{\ell+1}$, with $\lm(\bs{g}_j)$ containing the $j$-th unit vector for
all $j$ 

\item[\bf{Output}] A \grobner{} basis
$\Gplus=\{\bgplus_0,\ldots,\bgplus_{\ell}\}$ for $\Mplus$ with
$\lm(\bgplus_j)$ containing the $j$-th unit vector for all $j$

\item[\bf{Algorithm}]

\begin{itemize} 

\item For $j=0, \ldots, \ell$, calculate $\Delta_j:=D(\bs{g}_j)$

\item Set $J:=\big\{j\in\{0,\ldots,\ell\}|\Delta_j\neq 0\big\}$

\item For $j \in\{0, \ldots, \ell\}\smallsetminus J$,

\begin{itemize}

\item  Set $\bgplus_j:=\bs{g}_j$

\end{itemize}

\item Let $j^*\in J$ be such that  $\lm(\bs{g}_{j^*})=\min_{j\in
J}\{\lm(\bs{g}_j)\}$ 

\item For $j\in J$ 

\begin{itemize}

\item If $j\neq j^*$

\begin{itemize}

\item Set $\bgplus_j := \bs{g}_j-\frac{\Delta_j}{\Delta_{j^*}}
\bs{g}_{j^*}$  

\end{itemize}

\item Else /* $j=j^*$ */

\begin{itemize}

\item Set $\bgplus_{j^*} :=  X \bs{g}_{j^*}  -
\frac{D(X\bs{g}_{j^*})}{\Delta_{j^*}}\cdot \bs{g}_{j^*}$\\ 
/* $=\Big(X - \frac{D(X\bs{g}_{j^*})}{\Delta_{j^*}}\Big)\bs{g}_{j^*}$
*/ 

\end{itemize}

\end{itemize}

\end{itemize}

\end{description}

Note that the introduction of the additional set of variables
$\{\bs{g}_j^+\}$ is only for clarity of presentation. 
For a proof that at the end of \kotter{}'s iteration it indeed holds
that $\Gplus=\{\bgplus_0,\ldots,\bgplus_{\ell}\}$ is a \grobner{} basis
for $\Mplus$ and for all $j$, $\lm(\bgplus_j)$ contains the $j$-th
unit vector, see \cite[Sec.~VII.C]{McE03}.

\begin{remarknn}
{\rm
Algorithms similar to \kotter{}'s iteration were
already presented in \cite{BL94}, \cite{BL97},
which came before \cite{Koetter} (see \cite[Sec.~2.6]{JNSV17}). For some
problems over univariate polynomial rings, algorithms for computing
the (weak or canonical) shifted {\it Popov form} of $K[X]$-matrices
have the lowest asymptotic complexity; see, e.g., \cite{N16},
\cite{RS21} and the references therein. 
We currently do not know if such methods can be used to further reduce
the complexity in the context of the present paper. 
}
\end{remarknn}

\subsection{Hasse derivatives and multiplicities}
In this subsection, we recall some definitions and results
regarding Hasse derivatives and multiplicities of polynomials 
in $K[X_1,\ldots, X_n]$ for $K$ a field and $n\in \bbNp$. 

In what follows, we let $\bx:=(X_1,\ldots,X_n)$ and
$\by:=(Y_1,\ldots,Y_n)$, where $X_1,\ldots,X_n,Y_1,\ldots,Y_n$ are
algebraically independent over $K$. 
For $g(\bx)\in K[\bx]$ and $\bi=(i_1,\ldots,i_n)\in\bbN^n$, the
$\bi$-th {\bf Hasse derivative} of $g$ is defined as the unique coefficient
$\hassegi(\bx)$ of $\bybi:=Y_1^{i_1}\cdots Y_n^{i_n}$ in the
decomposition $g(\bx+\by)=\sum_{\bj} \hassegj(\bx)\bybj$. 

For binomial coefficients, we will use the convention that
$\binom{k}{\ell}=0$ if $\ell>k$.  We omit the straightforward proof of
the following well-known proposition.

\begin{proposition}\label{cw_prop:hassecalc}
If $g(\bx)=\sum_{\bi} a_{\bi}\bxbi$
with $a_{\bi}\in K$ for all $\bi$ and with $a_{\bi}=0$ for almost all
$\bi$, then for all $\bj$,  
\begin{equation}\label{cw_eq:hasse_sum}
\hasseg{\bj}(\bx) = \sum_{\bi}
a_{\bi}\binom{i_1}{j_1}\cdots\binom{i_n}{j_n}\bx^{\bi-\bj} 
\end{equation}
(note that all terms with a non-zero coefficient involve only
non-negative exponents).
\end{proposition}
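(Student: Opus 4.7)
The plan is to unwind the definition $g(\bx+\by)=\sum_{\bj}\hassegj(\bx)\bybj$ by expanding $g(\bx+\by)$ directly using the multivariate binomial theorem, then to compare coefficients of $\bybj$ on both sides.

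First, I would substitute the given expansion of $g$ into $g(\bx+\by)$:
$$
g(\bx+\by)=\sum_{\bi} a_{\bi}(\bx+\by)^{\bi}=\sum_{\bi} a_{\bi}\prod_{k=1}^{n}(X_k+Y_k)^{i_k}.
$$
Applying the univariate binomial theorem to each factor gives
$$
(X_k+Y_k)^{i_k}=\sum_{j_k\geq 0}\binom{i_k}{j_k}X_k^{i_k-j_k}Y_k^{j_k},
$$
where the convention $\binom{i_k}{j_k}=0$ for $j_k>i_k$ guarantees that every term appearing with a non-zero coefficient has non-negative exponents in the $X_k$ variables.

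Next, I would expand the product over $k$ as a single sum over multi-indices $\bj\in\bbN^n$:
$$
\prod_{k=1}^{n}(X_k+Y_k)^{i_k}=\sum_{\bj}\binom{i_1}{j_1}\cdots\binom{i_n}{j_n}\,\bx^{\bi-\bj}\bybj.
$$
Substituting back and interchanging the order of the two (finite) summations yields
$$
g(\bx+\by)=\sum_{\bj}\Bigg(\sum_{\bi}a_{\bi}\binom{i_1}{j_1}\cdots\binom{i_n}{j_n}\bx^{\bi-\bj}\Bigg)\bybj.
$$

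Finally, I would appeal to the uniqueness of the representation of an element of $K[\bx,\by]$ as a polynomial in $\by$ with coefficients in $K[\bx]$: comparing with $g(\bx+\by)=\sum_{\bj}\hassegj(\bx)\bybj$ forces the parenthesized expression to equal $\hassegj(\bx)$, which is exactly \eqref{cw_eq:hasse_sum}. There is no real obstacle here; the only thing to be careful about is the bookkeeping of the multi-index convention and the vanishing of binomial coefficients ensuring that every surviving term has non-negative exponents, so that the identity lives in $K[\bx]$ rather than in a formal Laurent extension.
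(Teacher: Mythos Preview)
Your proof is correct and is precisely the standard direct computation one expects; the paper itself omits the proof, calling the proposition well-known and straightforward, so your approach is exactly in line with what the authors had in mind.
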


For $\bi=(i_1,\ldots,i_n)$, let $\wt(\bi):=\sum_{j=1}^n i_j$. For
non-zero $g\in K[\bx]$ and for $\ba\in K^n$, we define the {\bf 
multiplicity} $\mult(g,\ba)$ of $g$ at $\ba$ to be the unique integer
$m$ for which for all $\bi=(i_1,\ldots,i_n)$ with $\wt(\bi) < m$,
$\hassegi(\ba)=0$, while there exists an $\bi$ with $\wt(\bi) = m$ for
which $\hassegi(\ba)\neq 0$.  If
$g$ is the zero polynomial, then we define $\mult(g,\ba):=\infty$ for
all $\ba$.  

Note that by the definition of the Hasse derivative, $\hassegi(\ba)$ is
just the coefficient of $\bxbi$ in the decomposition
$g(\ba+\bx)=\sum_{\bj}c_{\bj}\bx^{\bj}$ with $c_{\bj}\in K$ for all
$\bj$. Hence, $\mult(g,\ba)$ is simply the smallest value of
$\wt(i_1,\ldots,i_n)$ for a monomial $X_1^{i_1}\cdots X_n^{i_n}$
appearing in $g(\bx+\ba)$.   

The following proposition, involving polynomials with a (possibly)
different number of variables, is a special case of \cite[Prop.~6]{DKSS09}.  

\begin{proposition}{\cite{DKSS09}}\label{cw_prop:compmult}
Let $g\in K[X_1,\ldots,X_n]$, and let $h_1,\ldots,h_n\in K[Y_1,\ldots,
Y_r]$. Then for all $\ba\in K^r$,
\begin{equation}\label{cw_eq:compmult}
\mult\Big(g\big(h_1(\by),\ldots,h_n(\by)\big),\ba\Big)\geq 
\mult\Big(g,\big(h_1(\ba),\ldots,h_n(\ba)\big)\Big).
\end{equation}
\end{proposition}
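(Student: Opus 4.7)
The plan is to combine the characterization of multiplicity stated immediately before the proposition --- namely, that $\mult(g, \ba)$ equals the smallest weight of a monomial in the $\bx$-expansion of $g(\bx + \ba)$ --- with the obvious fact that $h_i(\by) - h_i(\ba)$ vanishes at $\by = \ba$. Set $\bs{b} := (h_1(\ba), \ldots, h_n(\ba))$ and $m := \mult(g, \bs{b})$. The first step is to expand $g$ around $\bs{b}$: by the above characterization there exist coefficients $c_{\bj} \in K$, with $c_{\bj} = 0$ whenever $\wt(\bj) < m$, such that
$$g(\bs{b} + \bx) = \sum_{\bj} c_{\bj}\, \bx^{\bj}.$$

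Next, I would introduce the auxiliary polynomials $\tilde{h}_i(\by) := h_i(\by + \ba) - h_i(\ba) = h_i(\by + \ba) - b_i$, each of which vanishes at $\by = \bs{0}$ and hence has no constant term. Writing $F(\by) := g(h_1(\by), \ldots, h_n(\by))$ and substituting $X_i = \tilde{h}_i(\by)$ in the expansion of $g(\bs{b} + \bx)$, I obtain
$$F(\by + \ba) \;=\; g\bigl(b_1 + \tilde{h}_1(\by), \ldots, b_n + \tilde{h}_n(\by)\bigr) \;=\; \sum_{\wt(\bj) \geq m} c_{\bj} \prod_{i=1}^{n} \tilde{h}_i(\by)^{j_i}.$$
Since each $\tilde{h}_i(\by)$ has no constant term, every monomial appearing in $\tilde{h}_i(\by)^{j_i}$ has total $\by$-degree at least $j_i$, so every monomial appearing in $\prod_i \tilde{h}_i(\by)^{j_i}$ has total $\by$-degree at least $\wt(\bj) \geq m$. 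Consequently every monomial in $F(\by + \ba)$ has $\by$-weight at least $m$, and applying the characterization once more gives $\mult(F, \ba) \geq m$, which is exactly the desired inequality.

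There is essentially no obstacle here: the proof is just an unwinding of definitions, provided one recognises the characterization of $\mult$ as the minimum weight after shifting to the point, and provided one carefully keeps the outer variables $X_1,\ldots,X_n$ of $g$ distinct from the inner variables $Y_1,\ldots,Y_r$ of the $h_i$. A slightly more mechanical alternative would proceed through Proposition \ref{cw_prop:hassecalc} and a multivariate chain rule for Hasse derivatives, but the shift-to-origin argument above avoids any binomial-coefficient bookkeeping in positive characteristic and is therefore preferable.
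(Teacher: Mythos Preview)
Your proof is correct. Note, however, that the paper does not actually supply its own proof of this proposition: it is stated with a citation to \cite{DKSS09} and used as a black box, so there is no argument in the paper to compare yours against. Your shift-to-origin argument is the standard one and matches the characterization of $\mult(g,\ba)$ given just before the proposition.
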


\subsection{Partially homogenized polynomials}
Trifonov \cite{Trif10} introduced partially homogenized trivariate
polynomials as a means for avoiding points at infinity in Wu's
list decoding algorithm \cite{Wu08}. While in this paper we eventually
do (implicitly) use points at infinity, it is still convenient to work
with partially homogenized polynomials.

For a field $K$, a $(Y,Z)$-{\bf homogeneous polynomial} is a polynomial of the form
$$
f(X,Y,Z):=\sum_{j=0}^{\rho}\sum_{i}a_{ij}X^iY^jZ^{\rho-j}\in K[X,Y,Z]
$$ 
for some $\rho\in \bbN$, where $a_{ij}\in K$ for all $i,j$. 
We will say that $f$ has $(Y,Z)$-{\bf homogeneous degree} $\rho$, and write
$\homog^{K}(\rho)$ for the $K[X]$-module of polynomials
in $K[X,Y,Z]$ which are 
$(Y,Z)$-homogeneous with $(Y,Z)$-homogeneous degree $\rho$. We also
write $\homog^K:=\bigcup_{\rho\in \bbN^+} \homog^K(\rho)$. When the
underlying field $K$ is clear from the context, we will write simply
$\homog(\rho)$ and $\homog$ for $\homog^K(\rho)$ and $\homog^K$,
respectively.    

\begin{definition}\label{cw_def:tlf}
{\rm 
For $f(X,Y,Z):=\sum_{j=0}^{\rho}\sum_{i}a_{ij}X^iY^jZ^{\rho-j}$, where
$\rho\in\bbN$ and the $a_{ij}$ are in $K$, define the {\bf associated
(de-homogenized) bivariate polynomial of $f$}, $\tlf(X,Y)$, by setting
$$
\tlf(X,Y):=\sum_{j=0}^{\rho}\sum_i a_{ij}X^i Y^j=f(X,Y,1).
$$
Also, let $\hatf(X,Y)$ by the $Y$-reversed version of $\tlf(X,Y)$,
that is, 
$$
\hatf(X,Y):=Y^{\rho}\tlf(X,Y^{-1})=f(X,1,Y).
$$
}
\end{definition}

Note that $f$ is determined by $\tlf$ by homogenization, as
$f(X,Y,Z)=Z^{\rho}\tlf(X,Y/Z)$. Hence $f\mapsto \tlf$ is a
bijection between $(Y,Z)$-homogeneous trivariate polynomials of
$(Y,Z)$-homogeneous degree $\rho$ and bivariate polynomials of
$Y$-degree at most $\rho$. In fact, this is an isomorphism of
$K[X]$-modules, as $f\mapsto \tlf$ is clearly $K[X]$-linear.

The following proposition from \cite{Trif10} assures that
multiplicities are preserved when moving from $f$ to $\tlf$ (or
$\hatf$), where below we take special care of points of the form
$(x_0,0,0)$. 

\begin{proposition}{\cite[Lemma 1]{Trif10}}\label{cw_prop:yzhomog}
Let
$$
f(X,Y,Z):=\sum_{j=0}^{\rho}\sum_{i}a_{ij}X^iY^jZ^{\rho-j}\in\homog(\rho),
$$
and let $(x_0,y_0,z_0)\in K^3$. Then
\begin{enumerate}

\item For all $\alpha\in K^*$, $\mult\big(f,(x_0,y_0,z_0)\big) =
\mult\big(f,(x_0,\alpha y_0,\alpha z_0)\big)$.

\item If $z_0\neq 0$, then
$\mult\big(f,(x_0,y_0,z_0)\big)=\mult\big(\tlf,(x_0,y_0/z_0)\big)$. 

\item If $y_0\neq 0$, then
$\mult\big(f,(x_0,y_0,z_0)\big)=\mult\big(\hatf,(x_0,z_0/y_0)\big)$.

\item $\mult\big(f,(x_0,0,0)\big)\geq \mult\big(f,(x_0,y_0,z_0)\big)$.

\end{enumerate}
\end{proposition}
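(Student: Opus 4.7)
The plan is to exploit the $(Y,Z)$-homogeneity of $f$ together with the characterization
\[
\mult(g,\ba)=\min\{\wt(\bi):g^{(\bi)}(\ba)\neq 0\},
\]
equivalently the minimum total degree of a monomial appearing in the shifted polynomial $g(\ba+\bx)$. For part 1, the linear substitution $Y\mapsto\alpha Y$, $Z\mapsto\alpha Z$ is invertible and sends $(x_0,y_0,z_0)$ to $(x_0,\alpha y_0,\alpha z_0)$, while $(Y,Z)$-homogeneity gives $f(X,\alpha Y,\alpha Z)=\alpha^{\rho}f(X,Y,Z)$. A monomial-by-monomial comparison then shows that $f(x_0+X,\alpha y_0+Y,\alpha z_0+Z)$ and $\alpha^{\rho}f(x_0+X,y_0+Y',z_0+Z')$ (with $Y=\alpha Y'$, $Z=\alpha Z'$) have nonzero coefficients in exactly the same multidegrees, so the minimum weights agree. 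Parts 2 and 3 can then be reduced via part 1 to the normalized cases $z_0=1$ and $y_0=1$, respectively.

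For part 2 with $z_0=1$, the easy inequality $\mult(\tlf,(x_0,y_0))\geq\mult(f,(x_0,y_0,1))$ follows from the identity $\tlf^{(i,j)}(X,Y)=f^{(i,j,0)}(X,Y,1)$, which one reads off by setting $Z'=0$ in the Taylor expansion of $f(X+X',Y+Y',1+Z')$ and comparing with $\tlf(X+X',Y+Y')$: the minimum weight over the subset $\{(i,j,0)\}$ of multi-indices is at least the minimum over all $(i,j,k)$. For the reverse inequality, I would use the formal identity $f(X,Y,Z)=Z^{\rho}\tlf(X,Y/Z)$ to obtain, after multiplying by $(1+Z)^{\rho}$ to clear denominators,
\[
f(x_0+X,y_0+Y,1+Z)=\sum_{i,j}c_{ij}X^i(Y-y_0Z)^j(1+Z)^{\rho-j},
\]
where $c_{ij}$ are the Taylor coefficients of $\tlf$ at $(x_0,y_0)$; since $j\leq\rho$ in the sum, the right-hand side is genuinely polynomial. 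Each summand has only monomials of total degree $\geq i+j$, and for any $(i,j)$ with $c_{ij}\neq 0$ the coefficient of $X^iY^j$ in the whole sum equals exactly $c_{ij}$ (only the $(i,j)$-summand contributes, via the $Y^j$-term of $(Y-y_0Z)^j$ and the constant term of $(1+Z)^{\rho-j}$). Part 3 follows identically after interchanging $Y$ and $Z$ and using $\hatf$.

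For part 4, write $f(x_0+X,Y,Z)=\sum_j F_j(X)Y^jZ^{\rho-j}$ with $F_j(X):=\sum_i a_{ij}(x_0+X)^i$, and let $m_j$ denote the order of $F_j$ at $X=0$. By $(Y,Z)$-homogeneity every monomial of $f(x_0+X,Y,Z)$ has $(Y,Z)$-degree exactly $\rho$, so $\mult(f,(x_0,0,0))=\rho+\min_j m_j=:\rho+m$. Pick $j^*$ attaining the minimum, and compute the coefficient of $X^{m}Y^{j^*}Z^{\rho-j^*}$ in the expansion $f(x_0+X,y_0+Y,z_0+Z)=\sum_j F_j(X)(y_0+Y)^j(z_0+Z)^{\rho-j}$: the sum over $j$ collapses because $\binom{j}{j^*}\binom{\rho-j}{\rho-j^*}$ vanishes unless both $j\geq j^*$ and $j\leq j^*$, and what remains is the nonzero leading coefficient of $F_{j^*}$. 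Hence a nonzero monomial of weight $m+\rho$ appears in the shifted polynomial, giving $\mult(f,(x_0,y_0,z_0))\leq\mult(f,(x_0,0,0))$. The main technical point is the polynomial identity displayed in part 2; once one verifies that multiplying $\tlf(X,(y_0+Y)/(1+Z))$ by $(1+Z)^{\rho}$ indeed produces an honest polynomial identity, the remaining degree bookkeeping in each part is routine.
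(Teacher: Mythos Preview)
Your proof is correct. The paper only supplies a proof for part~4 (deferring parts~1--3 to \cite{Trif10}), and your argument for part~4 is essentially the same as the paper's: both hinge on the observation that the Hasse derivative $f^{(j_1,j_2,\rho-j_2)}(x_0,y_0,z_0)$ is independent of $(y_0,z_0)$, which in your formulation appears as the binomial collapse $\binom{j}{j^*}\binom{\rho-j}{\rho-j^*}=0$ for $j\neq j^*$. The only organizational difference is direction: the paper shows that every Hasse derivative of weight $<m:=\mult(f,(x_0,y_0,z_0))$ vanishes at $(x_0,0,0)$, whereas you first compute $\mult(f,(x_0,0,0))=\rho+\min_j m_j$ directly from homogeneity and then exhibit a nonzero monomial of that weight in the expansion at $(x_0,y_0,z_0)$. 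Your self-contained proofs of parts~1--3 (in particular the clean use of the identity $f(X,Y,Z)=Z^{\rho}\tlf(X,Y/Z)$ to get the nontrivial inequality in part~2) are a welcome addition, since the paper does not reproduce them.
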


Parts 1--3 were proved in \cite{Trif10}. For a proof of part 4, see
Appendix \ref{app:homog}.

\section{An isomorphism between two solution modules}\label{sec:isom}
The solution set of the key equation $u'\equiv Su\mod(X^{2t})$ is an 
$\eftwos$-vector subspace of $\eftwos[X]$ that is not an
$\eftwos[X]$-submodule (i.e., an ideal). Also, to work with 
\grobner{} bases, it will be convenient to work with a submodule of
$\eftwos[X]^2$. In this section, we will see that with an appropriate scalar
polynomial multiplication, the key equation does define a module, and
this module is isomorphic to a sub-module of $\eftwos[X]^2$ which will be
useful for a \grobner{}-bases formulation.  
\begin{definition}
{\rm
For a field $K$, an even\footnote{There is no loss in generality in
the requirement that $s$ is even.} integer $s\in \bbN$, and a
polynomial $f(X)=f_0+f_1X+\cdots+f_s X^s\in K[X]$, the {\bf
odd part} of $f(X)$, $\odd{f}(X)$, is  
$$
\odd{f}(X):=f_1+f_3X+\cdots+f_{s-1}X^{s/2-1},
$$
while the {\bf even part} of $f(X)$, $\even{f}(X)$, is  
$$
\even{f}(X):=f_0+f_2X+\cdots+f_sX^{s/2}.
$$
}
\end{definition}

In the above definition, note that $f(X)=\even{f}(X^2)+X\odd{f}(X^2)$.

\begin{definition}\label{def:mu}
{\rm

\begin{enumerate}

\item Let us define an $\eftwos[X]$-module, $\tlftwomx$, as
follows. As an abelian group, $\tlftwomx=\eftwos[X]$, but the scalar
multiplication $\cdot_1\colon\eftwos[X]\times \tlftwomx\to L$ is
defined by $f(X)\cdot_1 g(X):=f(X^2)g(X)$.

\item Let 
\begin{eqnarray*}
\mu\colon \eftwos[X]\times\eftwos[X] & \to & \tlftwomx\\
(u(X),v(X)) & \mapsto & v(X^2)+Xu(X^2),
\end{eqnarray*}
and note that $\mu$ is an isomorphism of
$\eftwos[X]$-modules, with inverse
$f\mapsto(\odd{f},\even{f})$.\footnote{Characteristic $2$ is  
\emph{not} required for making $\tlftwomx$ a module and for showing
that $\mu$ is a homomorphism.} 

\end{enumerate}
}
\end{definition}

Let us now define two solution sets, one for the key equation,
and the other for a modified key equation. 
\begin{definition}\label{def:solmod}
{\rm
Let 
$$
M:=\big\{u\in \eftwos[X]\big|u'\equiv S u\mod (X^{2t})\big\},
$$
and
$$
N:=\bigg\{(u,v)\in \eftwos[X]\times \eftwos[X]\bigg|u\equiv
\frac{\even{S}(X)}{1+X\odd{S}(X)}v\mod (X^t)\bigg\}.
$$
}
\end{definition}

The following proposition relates $M$ and $N$. We note that the proof
is inspired by the proof of the lemma on p.~25 of \cite{Berl66}.
\begin{proposition}\label{prop:isom}
$M$ is an $\eftwos[X]$-submodule of $\tlftwomx$ and $\mu(N)=M$. Hence
$\mu$ restricts to an isomorphism $\mu'\colon N\isom M$, as in the
following commutative diagram: 
$$
\begin{gathered}
\xymatrix{
\eftwos[X]\times\eftwos[X]\ar[rr]^(0.7){\sim}_(0.7){\mu} && \tlftwomx\\
\\
N \ar[uu]^{\cup}\ar@{.>}[rr]^(0.7){\sim}_(0.7){\mu'} && M\ar[uu]^{\cup}
}
\end{gathered}
$$
\end{proposition}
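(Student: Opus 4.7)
My plan is to handle the submodule claim and the identification $\mu(N)=M$ in sequence. For the submodule assertion, additive closure of $M$ in $\tlftwomx$ is immediate, and the only point to check is scalar closure under the twisted action $g\cdot_1 h = g(X^2)\,h$. Here I would exploit the one characteristic-$2$ fact that drives the whole proof: for every $g\in\eftwos[X]$, $\bigl(g(X^2)\bigr)'=0$. Consequently, if $h\in M$ then $\bigl(g(X^2)\,h\bigr)' = g(X^2)\,h' \equiv g(X^2)\,S\,h = S\bigl(g(X^2)\,h\bigr)\pmod{X^{2t}}$, so $g\cdot_1 h \in M$.

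For the equality $\mu(N)=M$, I would recast the defining relation $f'\equiv Sf\pmod{X^{2t}}$ in terms of the pair $(u,v):=(\odd{f},\even{f})$, where $f=\mu(u,v)=v(X^2)+Xu(X^2)$. In characteristic $2$ one has $f'=u(X^2)$, while the expansion
\[
Sf=\bigl(\even{S}(X^2)+X\odd{S}(X^2)\bigr)\bigl(v(X^2)+Xu(X^2)\bigr)
\]
splits cleanly into an even-power piece $\bigl(\even{S}\,v+X\odd{S}\,u\bigr)(X^2)$ and an odd-power piece $X\bigl(\even{S}\,u+\odd{S}\,v\bigr)(X^2)$. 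Since $f'$ contains only even powers, $f'\equiv Sf\pmod{X^{2t}}$ decouples into two independent congruences modulo $X^t$:
\begin{align*}
\text{(i)}\quad (1+X\odd{S})\,u &\equiv \even{S}\,v,\\
\text{(ii)}\quad \even{S}\,u+\odd{S}\,v &\equiv 0.
\end{align*}
Inverting the unit $1+X\odd{S}\in\eftwos[[X]]$ (its constant term is $1$) shows that (i) is precisely the defining condition of $N$, so $M\subseteq\mu(N)$ is immediate, and the remaining inclusion $\mu(N)\subseteq M$ is equivalent to the implication (i)$\Rightarrow$(ii).

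The main obstacle, and the place where the binary BCH structure really enters, is to prove the polynomial identity
\[
\even{S}(X)^2 \equiv \odd{S}(X)\bigl(1+X\,\odd{S}(X)\bigr)\pmod{X^t}.
\]
Granting it, multiplying (i) by $\even{S}$ and substituting $\even{S}^2\,v\equiv\odd{S}(1+X\odd{S})\,v$ yields $(1+X\odd{S})\bigl(\even{S}\,u+\odd{S}\,v\bigr)\equiv 0\pmod{X^t}$; cancelling the unit $1+X\odd{S}$ then gives (ii). To establish the identity I would appeal to the Frobenius-driven syndrome relation $S_{2i}=S_i^2$ (valid for $2i\le 2t$ since the transmitted word is binary) and compare coefficients in the expansions $\even{S}^2=\sum_{j=1}^t S_{2j-1}^2 X^{2j-2}$, $\odd{S}=\sum_{j=1}^t S_{2j} X^{j-1}$, and $X\odd{S}^2=\sum_{j=1}^t S_{2j}^2 X^{2j-1}$. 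At each $X^k$ with $0\le k<t$, the even- and odd-index contributions cancel in pairs by $S_{2i}=S_i^2$; this is the algebraic shadow of Berlekamp's classical simplification, consistent with the attribution in the remark preceding the statement.
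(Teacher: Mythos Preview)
Your proposal is correct and follows essentially the same route as the paper: the submodule check via $(g(X^2))'=0$, the even/odd splitting of $f'\equiv Sf$ into the pair (i)--(ii), the identification of (i) with membership in $N$, and the implication (i)$\Rightarrow$(ii) via the identity $\even{S}^2\equiv\odd{S}(1+X\odd{S})\pmod{X^t}$ are exactly what the paper does. The only cosmetic difference is that the paper derives the identity more compactly by first noting $S(X)^2\equiv\odd{S}(X^2)\pmod{X^{2t}}$ (since $S_i^2=S_{2i}$), then expanding $S^2=\even{S}(X^2)^2+X^2\odd{S}(X^2)^2$ and passing from $X^2$ to $X$, whereas you propose a direct coefficient comparison; both arrive at the same place.
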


\begin{proof}
While the first assertion follows from the second, it is worthwhile
(and straightforward) to prove it directly: First, $M$ is clearly an
abelian subgroup of $\tlftwomx$, and so it remains to verify that for
all $f\in \eftwos[X]$ and all $u\in M$, $f\cdot_1 u\in M$. But
\begin{multline*}
\big(f(X)\cdot_1 u(X)\big)' = \big(f(X^2)u(X)\big)' =f(X^2)u'(X)
\equiv f(X^2)Su = S\cdot \big(f(X)\cdot_1 u(X)\big),
\end{multline*}
where characteristic $2$ was used in the second equation, and where
$\equiv$ stands for congruence modulo $(X^{2t})$. 

Let us now turn to the second assertion. For a polynomial $u\in \eftwos[X]$
we have $u'(X)=\odd{u}(X^2)$, and so
$u$ is in $M$ iff 
\begin{eqnarray}
\odd{u}(X^2) & \equiv &
\big(\even{S}(X^2)+X\odd{S}(X^2)\big)\big(\even{u}(X^2)+X\odd{u}(X^2)\big)
\nonumber\\
 & = &
\even{S}(X^2)\even{u}(X^2)+X^2\odd{S}(X^2)\odd{u}(X^2) +\nonumber\\
&& X\Big(\odd{S}(X^2)\even{u}(X^2)+\even{S}(X^2)\odd{u}(X^2)\Big).\label{eq:double} 
\end{eqnarray}
Now,
(\ref{eq:double}) is equivalent to the following pair of equations:
\begin{equation}\label{eq:even}
\odd{u}(X^2)\equiv
\even{S}(X^2)\even{u}(X^2)+X^2\odd{S}(X^2)\odd{u}(X^2) \mod (X^{2t}),
\end{equation}
and
\begin{equation}\label{eq:odd}
0\equiv \odd{S}(X^2)\even{u}(X^2)+\even{S}(X^2)\odd{u}(X^2)\mod
(X^{2t}). 
\end{equation}

Hence, $u\in M$ iff (\ref{eq:even}) and (\ref{eq:odd}) hold. To prove
the assertion, we will show that for all $u\in 
\eftwos[X]$, (i)
(\ref{eq:odd}) follows from (\ref{eq:even}) (so that $u\in M$ iff
(\ref{eq:even}) holds), and (ii) (\ref{eq:even}) is equivalent to
$\mu^{-1}(u)\in N$ (so that $u\in M$ iff $\mu^{-1}(u)\in N$). 

Let us start with  (ii). Observe that for all $u$, (\ref{eq:even}) is
equivalent to 
\begin{equation}\label{eq:form}
\odd{u}(X)\equiv
\even{S}(X)\even{u}(X)+X\odd{S}(X)\odd{u}(X) \mod (X^{t}),
\end{equation}
Collecting terms in the last equation, and noting that
$1+X\odd{S}(X)$ indeed has an inverse in 
$\eftwos[X]/(X^t)$, we see that for all $u\in
\eftwos[X]$, (\ref{eq:even}) is equivalent to $\mu^{-1}(u)\in N$, as
required. 

Moving to (i), recall that in the binary case in question, for all 
$i$, $S_{2i}=S_i^2$, and hence 
\begin{eqnarray}
S(X)^2 & =   & S_2+S_4 X^2+\cdots+S_{2d-2}X^{2d-4} \nonumber \\
       & \equiv & \odd{S}(X^2)\mod (X^{2t}) \label{eq:sodd}
\end{eqnarray}
(recall that $d=2t+1$). Hence 
$$
\Big(\even{S}(X^2)+X\odd{S}(X^2)\Big)^2\equiv \odd{S}(X^2)\mod (X^{2t}),
$$
that is, 
$$
\Big(\even{S}(X^2)\Big)^2 \equiv X^2\Big(\odd{S}(X^2)\Big)^2 + 
\odd{S}(X^2)\mod (X^{2t}),  
$$
which is equivalent to 
\begin{equation}\label{eq:sqsynd}
\Big(\even{S}(X)\Big)^2\equiv X\Big(\odd{S}(X)\Big)^2 + \odd{S}(X)\mod
(X^{t}). 
\end{equation}
Now, to prove (i), assume that (\ref{eq:even}) holds, so that
(\ref{eq:form}) also holds. 
Multiplying (\ref{eq:form}) by $\even{S}(X)$ and writing ``$\equiv$''
for congruence modulo $(X^{t})$, we get
\begin{eqnarray*}
\odd{u}(X)\even{S}(X) & \equiv &
\Big(\even{S}(X)\Big)^2\even{u}(X)+X\odd{S}(X)\even{S}(X)\odd{u}(X) \\
& \stackrel{(*)}{\equiv} &\Big(1+X\odd{S}(X)\Big)
\odd{S}(X)\even{u}(X)+X\odd{S}(X)\even{S}(X)\odd{u}(X)
\end{eqnarray*}
(where $(*)$ follows from (\ref{eq:sqsynd})). Collecting terms and
canceling $1+X\odd{S}(X)$ (which is invertible in
$\eftwos[X]/(X^t)$), we get 
$$
0\equiv \odd{S}(X)\even{u}(X)+\even{S}(X)\odd{u}(X)\mod
(X^{t}),
$$
which is equivalent to (\ref{eq:odd}), as required.
\end{proof}

The following proposition shows that the isomorphism $\mu$ behaves
properly with respect to the order (equivalently, with respect to
$\wdegpm$) of a pair of polynomials. 

\begin{proposition}\label{prop:ordmu}
For $f\in \eftwos[X]$, it holds that 
\begin{equation}\label{eq:ordmu}
\ord\big(\mu^{-1}(f)\big) =
\left\lceil\frac{\deg(f)}{2}\right\rceil =
\left\lceil\frac{\ord(f',f)}{2}\right\rceil . 
\end{equation}
\end{proposition}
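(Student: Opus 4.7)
The plan is to prove both equalities by case analysis on the parity of $n := \deg(f)$, using two elementary observations. The first is purely definitional: writing $f(X) = \even{f}(X^2) + X\odd{f}(X^2)$, the coefficient of $X^n$ in $f$ goes to the coefficient of $X^{n/2}$ in $\even{f}$ when $n$ is even, and to the coefficient of $X^{(n-1)/2}$ in $\odd{f}$ when $n$ is odd. The second, which uses characteristic $2$, is that $f'(X) = \sum_i i f_i X^{i-1}$ kills all even-indexed monomials, so $f'(X) = \odd{f}(X^2)$ and hence $\deg(f') = 2\deg(\odd{f})$ (with the convention $\deg(0) = -\infty$).

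For the left-hand equality $\ord(\mu^{-1}(f)) = \lceil \deg(f)/2\rceil$, I would unfold $\ord(\odd{f}, \even{f}) = \max\{\deg(\odd{f}) + 1, \deg(\even{f})\}$ and split cases. If $n = 2k$, then the leading coefficient $f_n$ lands in $\even{f}$, giving $\deg(\even{f}) = k$ and $\deg(\odd{f}) \leq k - 1$, so the order equals $k = \lceil n/2\rceil$. If $n = 2k+1$, then $f_n$ lands in $\odd{f}$, giving $\deg(\odd{f}) = k$ and $\deg(\even{f}) \leq k$, so the order equals $k + 1 = \lceil n/2\rceil$.

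For the right-hand equality $\lceil \deg(f)/2\rceil = \lceil \ord(f',f)/2\rceil$, I would again split on parity. In the even case $n = 2k$, the identity $f' = \odd{f}(X^2)$ gives $\deg(f') \leq 2k - 2$, so $\ord(f', f) = \max\{\deg(f') + 1, n\} = n = 2k$, and its ceiling-half is $k$. In the odd case $n = 2k+1$, we have $\deg(\odd{f}) = k$, so $\deg(f') = 2k$ and $\ord(f',f) = \max\{2k + 1, 2k + 1\} = 2k + 1$, whose ceiling-half is $k + 1$. Both cases match $\lceil n/2\rceil$, closing the chain.

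I do not anticipate a substantive obstacle; the argument is bookkeeping. The only small care-point is to track the $\deg(0) = -\infty$ convention, which ensures that the degenerate case $f = 0$ (and hence $\odd{f} = \even{f} = 0$ and $f' = 0$) is handled uniformly, with all three quantities in \eqref{eq:ordmu} equal to $-\infty$. The use of characteristic $2$ enters only through the identity $f' = \odd{f}(X^2)$ needed for the second equality; the first equality is characteristic-free.
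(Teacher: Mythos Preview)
Your proposal is correct and follows essentially the same approach as the paper: a parity case analysis on $\deg(f)$, using that the leading coefficient lands in $\even{f}$ or $\odd{f}$ accordingly, to compute $\ord(\odd{f},\even{f})=\max\{\deg(\odd{f})+1,\deg(\even{f})\}$. The paper's proof in fact only spells out the first equality and leaves the second implicit; your argument via $f'=\odd{f}(X^2)$ and the same parity split fills that in cleanly.
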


\begin{proof}
We have
\begin{eqnarray*}
\ord\big(\mu^{-1}(f)\big) & = & \ord\big(\odd{f},\even{f}\big)\\
& \stackrel{(\text{a})}{=} & \begin{cases}
\deg\big(\even{f}\big)& \text{if $2|\deg(f)$}\\
\deg\big(\odd{f}\big)+1&\text{if $2\nmid \deg(f)$}
\end{cases}\\
& = & \left\lceil\frac{\deg(f)}{2}\right\rceil, 
\end{eqnarray*}
where for (a), note that if $2|\deg(f)$, then
$\deg\big(\even{f}\big)>\deg\big(\odd{f}\big)$, while if $2\nmid
\deg(f)$, then $\deg\big(\odd{f}\big)\geq
\deg\big(\even{f}\big)$. 
\end{proof}

Let us now describe an application of Proposition \ref{prop:ordmu} for
HD unique decoding when the number $\veps$ of errors satisfies
$\veps\leq t$. Recall that in this case, up to a non-zero
multiplicative constant, $\sigma$ is the unique element
of $M$ with minimum degree \cite[Prop.~6.6]{Roth}.\footnote{
This Proposition from \cite{Roth} states that when $\veps\leq t$,
$(\omega,\sigma)$ is the unique order minimizing pair in
$M':=\{(u,v)|u\equiv Sv\mod(X^{2t})\}$, up to a non-zero
multiplicative constant. Hence in the binary case, where $w=\sigma'$,
$(\sigma',\sigma)$ is the unique order minimizing pair in
$\{(u',u)|u'\equiv Su\mod(X^{2t})\}\subseteq M'$.  
}

\begin{proposition}\label{prop:minord}
If $\veps\leq t$, then $\mu^{-1}(\sigma)$ has the minimum leading
monomial in $N\smallsetminus \{(0,0)\}$ with
respect to $<_{-1}$ (this identifies $\mu^{-1}(\sigma)$ uniquely
up to a non-zero multiplicative constant).
\end{proposition}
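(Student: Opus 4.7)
The plan is to transport the known minimum-degree characterization of $\sigma$ in $M$ to a minimum-leading-monomial characterization of $\mu^{-1}(\sigma)$ in $N$, using the isomorphism $\mu'$ from Proposition \ref{prop:isom} and the order formula from Proposition \ref{prop:ordmu}. Since $<_{-1}$ is tie-broken by $\wdegpm$, I would first translate everything into $\wdegpm$-speak: for a non-zero pair $(f_0,f_1)\in N$, the identity $\wdegpm(f_0,f_1)=\ord(f_0,f_1)-1$ from Definition \ref{def:ord}, combined with Proposition \ref{prop:ordmu}, gives
$$\wdegpm\big(\mu^{-1}(f)\big)=\left\lceil\frac{\deg f}{2}\right\rceil-1\qquad\text{for all }f\in M\smallsetminus\{0\}.$$

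Next, I would determine on which \emph{side} the leading monomial of $\mu^{-1}(f)=(\odd{f},\even{f})$ lives, by a short parity check. When $\deg f$ is odd, the top coefficient of $f$ sits at an odd index, so $\deg\odd{f}=(\deg f-1)/2$ while $\deg\even{f}\leq(\deg f-1)/2$, forcing the LM to be the left monomial $(X^{(\deg f-1)/2},0)$. When $\deg f$ is even, the top coefficient of $f$ sits at an even index, so $\deg\even{f}=\deg f/2$ while $\deg\odd{f}<\deg f/2$, forcing the LM to be the right monomial $(0,X^{\deg f/2})$. In both cases the $\wdegpm$-value matches the formula above.

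I would then invoke \cite[Prop.~6.6]{Roth} to the effect that every non-zero $f\in M$ satisfies $\deg f\geq\veps$, with equality iff $f$ is a non-zero scalar multiple of $\sigma$. Splitting into cases according to $\deg f$: scalars of $\sigma$ trivially realize the same LM as $\mu^{-1}(\sigma)$; any $f$ with $\deg f\geq\veps+2$ gives $\wdegpm(\mu^{-1}(f))>\lceil\veps/2\rceil-1=\wdegpm(\mu^{-1}(\sigma))$ and hence a strictly larger $<_{-1}$-LM; and when $\deg f=\veps+1$ with $\veps$ even, the $\wdegpm$ value jumps from $\veps/2-1$ to $\veps/2$, again giving a strictly larger LM.

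The main obstacle is the one remaining subcase: $\veps$ odd and $\deg f=\veps+1$. Here both $\mu^{-1}(\sigma)$ and $\mu^{-1}(f)$ share the tied $\wdegpm$-value $(\veps-1)/2$, so $<_{-1}$-comparison must be resolved by the lex tiebreak in Definition \ref{def:ord}(4). Fortunately the parity analysis of the second step already puts the two LMs on opposite sides: $\mu^{-1}(\sigma)$'s LM is the left monomial $(X^{(\veps-1)/2},0)$ (since $\veps$ is odd), while $\mu^{-1}(f)$'s LM is the right monomial $(0,X^{(\veps+1)/2})$ (since $\veps+1$ is even). The rule $(X^{j_1},0)<_{-1}(0,X^{j_2})\Longleftrightarrow j_1\leq j_2-1$, with $j_1=(\veps-1)/2=j_2-1$, then delivers $\lm(\mu^{-1}(\sigma))<_{-1}\lm(\mu^{-1}(f))$ exactly. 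It is the choice $w=-1$ that makes this tiebreak go the right way; with any $w>-1$ the tie would flip. Combining the three cases, $\mu^{-1}(\sigma)$ is the unique non-zero element of $N$, up to a non-zero scalar, attaining the minimum $<_{-1}$-leading monomial. \qed
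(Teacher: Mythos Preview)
Your proof is correct and follows essentially the same strategy as the paper's: transport the minimum-degree characterization of $\sigma$ in $M$ to $N$ via Proposition~\ref{prop:ordmu}, determine on which side the leading monomial of $\mu^{-1}(f)$ lies from the parity of $\deg f$, and resolve the single tied case ($\veps$ odd, $\deg f=\veps+1$) using the lex tiebreak in $<_{-1}$. The only differences are organizational---the paper splits first on the parity of $\veps$, while you split on $\deg f$---and your closing remark that ``with any $w>-1$ the tie would flip'' is not quite accurate (for integer $w>-1$ the two leading monomials are no longer tied in $(1,w)$-weighted degree, and the comparison still favors $\mu^{-1}(\sigma)$); but this is side commentary and does not affect the validity of the argument.
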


\begin{proof}
First, it follows from Proposition
\ref{prop:ordmu} that there is no element of $N\smallsetminus
\{(0,0)\}$ with a strictly lower order than
$\ord(\mu^{-1}(\sigma))$.   

If $\veps=\deg(\sigma)$ is even, then it follows from Proposition
\ref{prop:ordmu} that 
elements of $N$ with the same order as 
$\mu^{-1}(\sigma)$ must be of the form $\mu^{-1}(f)$ for $f\in M$ with
$\deg(f)\in\{\deg(\sigma),\deg(\sigma)-1\}$. By minimality, such an
$f$ must be $c\cdot \sigma$ for some $c\neq 0$.  
We conclude that when $\veps$ is even, $\mu^{-1}(\sigma)$ is
the unique (up to a multiplicative constant) 
order-minimizing element of $N$. 

On the other hand, if $\veps$ is odd, then it follows from
Proposition \ref{prop:ordmu} that the only elements of $N$ with the same
order as $\mu^{-1}(\sigma)$ (other than $\mu^{-1}(c\cdot \sigma)$)
must be of the form $\mu^{-1}(f)$ for 
$f\in M$ with $\deg(f)=\deg(\sigma)+1$. Such an $f$ has an even
degree, which implies that $\lmmo(\mu^{-1}(f))$ is on the
right. However, $\lmmo(\mu^{-1}(\sigma))$ is on the left (as
$\deg(\sigma)$ is odd), and also
$\lmmo(\mu^{-1}(\sigma)) = (X^{(\deg(\sigma)-1)/2},0) <_{-1}
(0,X^{(\deg(\sigma)+1)/2})=\lmmo(\mu^{-1}(f))$, 
by the definition of  
$<_{-1}$. 
\end{proof}

It follows from Proposition \ref{prop:minord} that HD decoding for up to
$t$ errors can be performed by finding a \grobner{} basis for the module $N$
with respect to $<_{-1}$, and taking $\sigma$ as the $\mu$-image of
the minimal element in the \grobner{} basis. 
Finding a \grobner{} basis for $N$ can be done, e.g., with any one of
the algorithms described in \cite{Fitz95}. Note that these algorithms
include a BM-like algorithm (Alg.~4.7), 
as well as an Euclidean algorithm (Alg.~3.7). 

As the problem dimensions are halved in $N$, this part
of HD decoding is substantially faster than decoding without
considering the binary alphabet: for a
typical $O(t^2)$ algorithm, the complexity is reduced by a
factor of about $4$. However, for performing the decoding 
in $N$, we need to calculate the {\bf modified syndrome} $\hat{S}(X)$,
defined as the unique polynomial of degree $<t$ in the image of
$$
R(X):=\frac{\even{S}(X)}{1+X\odd{S}(X)}\in\eftwos[[X]]
$$
in $\eftwos[X]/(X^t)$ (so that $\hat{S}(X)$ consists of the first $t$
terms of $R$). 

Writing 
\begin{eqnarray*}
R(X) & = & a_0+a_1X+a_2X^2+\cdots \\
\even{S}(X) &= & b_0 + b_1X+\cdots +b_{t-1}X^{t-1}\\
\odd{S}(X) & = & c_0 + c_1X+\cdots +c_{t-1}X^{t-1}
\end{eqnarray*}
(with $b_i=S_{2i+1}$ and $c_i=S_{2(i+1)}$ for all $i$), it follows
from  
$$
(a_0+a_1X+a_2X^2+\cdots)(1+c_0X+c_1X^2+\cdots+c_{t-1}X^t) = b_0 +
b_1X+\cdots +b_{t-1}X^{t-1}  
$$
that 
$a_0=b_0$, and for all $i\in\{1,\ldots,t-1\}$, $a_i$ can be recovered
recursively as
\begin{equation}\label{eq:recur}
a_i=b_i+a_{i-1}c_0+a_{i-2}c_1+\cdots+a_0 c_{i-1}.
\end{equation}

The overall number of multiplications required for calculation
$a_1,\ldots, a_{t-1}$ is therefore $t(t-1)/2$. Using the complexity
estimations from \cite{FJ98}, this is again about $1/4$ of the
complexity of decoding with \cite[Alg.~4.7]{Fitz95} without using the
binary alphabet. So, calculating the 
modified syndrome and a \grobner{}-basis for $N$
requires about half the operations for finding the ELP without taking
advantage of the binary alphabet. 

Similarly to \cite{JK95}, additional calculations are required if only
one coordinate is maintained in the \grobner{}-basis algorithm, and
its output  includes only the even part of the 
ELP; this is the case for \cite[Alg.~4.7]{Fitz95}, 
but not for the Euclidean algorithm \cite[Alg.~3.7]{Fitz95}. 
In such a case, the modified key equation should be used to calculate
the odd part of the ELP, at a cost of  $t(t+1)/2$ additional
multiplications. Hence in such a case, the gain over
decoding without using the binary alphabet is by factor of about 
$3/4$. 

We note that the idea of calculating a
modified syndrome and working with half the dimensions appears also in
\cite{Ca88} (with the Euclidean algorithm) and in \cite{JK95} (with
the BM algorithm), where the modified key equation appears implicitly,
including the recursion (\ref{eq:recur}). However, \cite{Ca88} and
\cite{JK95} do not show that this is a consequence of an isomorphism
between two solution modules, and that any algorithm for finding a 
\grobner{} basis can be used. Also, the derivation in
\cite{JK95} uses an entirely different method (building on Newton's
identities), and results in two different modified key equations for
even/odd $t$.  

We also note that Berlekamp's well-known 
method of using his algorithm only for half of the iterations in the
binary case \cite[pp.~24ff]{Berl66} is not applicable to other decoding
algorithms, such as the Euclidean algorithm.  Moving to the modified
syndrome enables to benefit from the binary alphabet with any
algorithm for computing a \grobner{} basis to the solution module of a
key equation.  

\section{A \grobner{}-bases formulation of the SD Wu list decoding
algorithm for binary BCH codes}\label{sec:softwu}
The \grobner{}-bases formulation of the (HD and SD) Wu list decoding
algorithm for binary BCH codes is missing in the literature. While
quite similar to the case of irreducible Goppa codes \cite{BHNW13},
there are some differences that must be considered, even for the HD
case. For example, while any non-zero polynomial has a multiplicative
inverse modulo an irreducible 
polynomial, this is not the case when working modulo $X^{d-1}$. As
another example, for BCH codes it seems more natural to use a
different monomial ordering than that of \cite{BHNW13}. Also,
considering SD decoding, where different coordinates may have
different multiplicities, a refinement of the interpolation theorem
is required. Hence,
for completeness we will briefly sketch the \grobner{} bases
formulation of the (SD) Wu list decoding algorithm for binary BCH
codes. 

The following proposition is a small adjustment of
\cite[Prop.~2]{BHNW13} to the monomial ordering of the current
paper. The proof is similar to that of \cite[Prop.~2]{BHNW13}, and is
omitted.   

\begin{proposition}{\cite[Prop.~2]{BHNW13}}\label{prop:degbounds}
Let $\{\bh_1=(h_{10},h_{11}),\bh_2=(h_{20},h_{21})\}$ be a \grobner{}
basis for the module $N$ of Definition \ref{def:solmod} with respect to
the monomial ordering $<_{-1}$, and assume w.l.o.g.~that $\lm(\bh_1)$
contains $(1,0)$ and $\lm(\bh_2)$ contains $(0,1)$. Then  

\begin{enumerate}

\item $\deg(h_{10})+\deg(h_{21})=t$.

\item For $\bs{u}=(u_0,u_1)\in N$, let $f_1,f_2\in \eftwos[X]$ be
the unique polynomials such
that $\bs{u}=f_1\bh_1+f_2\bh_2$. Put
$\veps':=\deg(\mu(\bs{u}))=\deg\big(u_1(X^2)+Xu_0(X^2)\big)$. 

\begin{enumerate}

\item If $\veps'$ is even, so that $\veps'=2\deg(u_1)$, then
\begin{eqnarray*}
\deg(f_1(X)) &\leq &\deg(u_1(X))-t+\deg(h_{21}(X))-1=:\woneeven\\
\deg(f_2(X)) &= &\deg(u_1(X))-\deg(h_{21}(X))=:\wtwoeven
\end{eqnarray*}

\item Otherwise, if $\veps'$ is odd, so that $\veps'=2\deg(u_0)+1$,
then 
\begin{eqnarray*}
\deg(f_1(X)) & = &\deg(u_0(X))-t+\deg(h_{21}(X))=:\woneodd\\
\deg(f_2(X)) &\leq &\deg(u_0(X))-\deg(h_{21}(X))=:\wtwoodd
\end{eqnarray*}

\end{enumerate}

\end{enumerate}

\end{proposition}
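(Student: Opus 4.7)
The plan is to follow the template of \cite[Prop.~2]{BHNW13}, exploiting the fact that $N$ sits inside $\eftwos[X]^2$ as the kernel of the surjective $\eftwos[X]$-linear map $(u_0, u_1) \mapsto u_0 - \hat{S}(X)\, u_1 \bmod (X^t)$, where $\hat{S}(X)$ is $\even{S}(X)/(1 + X\odd{S}(X))$ reduced modulo $X^t$ (a well-defined unit since the denominator has constant term $1$). Consequently $\eftwos[X]^2/N \cong \eftwos[X]/(X^t)$ as $\eftwos[X]$-modules, so $N$ is free of rank $2$, and for any basis $\{\bh_1, \bh_2\}$ of $N$ the determinant of the coefficient matrix equals $\alpha X^t$ for some $\alpha \in \eftwos^*$.

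For Part 1, I would translate the \grobner{}-basis hypothesis into degree inequalities via Definition \ref{def:ord}: $\lm(\bh_1)$ on the left forces $\deg h_{10} \geq \deg h_{11}$, while $\lm(\bh_2)$ on the right forces the strict inequality $\deg h_{21} > \deg h_{20}$ (the strictness here is specifically a consequence of $w = -1$, since $(X^{j},0) <_{-1} (0,X^{j})$ fails at equal exponents). Adding these yields $\deg(h_{10}h_{21}) > \deg(h_{11}h_{20})$, so no cancellation occurs in $h_{10}h_{21} - h_{11}h_{20}$, and $\deg h_{10} + \deg h_{21}$ equals the degree of the determinant, namely $t$.

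For Part 2, the decisive observation is that since $\lm(\bh_1) = (X^{\deg h_{10}}, 0)$ and $\lm(\bh_2) = (0, X^{\deg h_{21}})$ lie on opposite sides of $\eftwos[X]^2$, so do the monomials $\lm(f_1 \bh_1) = (X^{\deg f_1 + \deg h_{10}}, 0)$ and $\lm(f_2 \bh_2) = (0, X^{\deg f_2 + \deg h_{21}})$. These cannot cancel each other, so $\lm(\bs{u}) = \max_{<_{-1}}\!\bigl(\lm(f_1\bh_1),\,\lm(f_2\bh_2)\bigr)$. Since $\mu(\bs{u}) = u_1(X^2) + X u_0(X^2)$, the parity of $\veps'$ is controlled by whether $\deg u_1 > \deg u_0$ (even case, placing $\lm(\bs{u})$ on the right at $X$-degree $\deg u_1$) or $\deg u_0 \geq \deg u_1$ (odd case, placing $\lm(\bs{u})$ on the left at $X$-degree $\deg u_0$). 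In either case the max identity pins down the degree of one of $f_1, f_2$ exactly and bounds the other, and substituting $\deg h_{10} = t - \deg h_{21}$ from Part 1 recovers the claimed formulas for $\woneeven, \wtwoeven$ and $\woneodd, \wtwoodd$.

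The main obstacle is not conceptual but notational: one has to carefully track the position (left versus right) of each leading monomial and translate correctly between the opposite-sides non-cancellation principle and the directed inequalities of $<_{-1}$, especially since the $\bh_1$ inequality is non-strict while the $\bh_2$ inequality is strict. Once opposite-side non-cancellation is secured, every degree bound drops out at once from the two identities $\lm(\bs{u}) = \max(\lm(f_1\bh_1), \lm(f_2\bh_2))$ and $\deg h_{10} + \deg h_{21} = t$.
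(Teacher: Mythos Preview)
Your proposal is correct and follows precisely the approach the paper has in mind: the paper omits the proof and points to \cite[Prop.~2]{BHNW13}, whose template you explicitly reproduce. Your determinant argument (via the surjection $\eftwos[X]^2\twoheadrightarrow\eftwos[X]/(X^t)$, or equivalently via the explicit basis $\{(X^t,0),(\hat S,1)\}$), together with the opposite-side non-cancellation of $\lm(f_1\bh_1)$ and $\lm(f_2\bh_2)$ under $<_{-1}$, yields exactly the stated equalities and inequalities; the paper itself uses the same $\deg h_{10}\geq\deg h_{11}$ and $\deg h_{21}>\deg h_{20}$ facts later in the proof of Proposition~\ref{prop:sigdeg}.
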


\begin{definition}\label{def:fiwi}
{\rm 

\begin{enumerate}

\item Fixing $\bh_i$ ($i=1,2$) as in Proposition
\ref{prop:degbounds}, let $\hath_i:=\mu(\bh_i)$, so that $\hath_i$
is the polynomial obtained by gluing the odd and even parts specified
in $\bh_i$.

\item Take $\bs{u}:=\mu^{-1}(\sigma)$ in part 2 of Proposition
\ref{prop:degbounds}, so that $\veps'=\deg(\sigma)=\veps$ is the
number of errors in the received word $\bs{y}$. For $i=1,2$, let
$f_i(X)\in\eftwos[X]$ be the unique polynomials from the proposition
for this choice of $\bs{u}$, and let 
$$
w_i:=\begin{cases}
\wieven{i} &\text{if }\veps\text{ is even}\\
\wiodd{i}  &\text{if }\veps\text{ is odd}.
\end{cases}
$$ 

\end{enumerate}
}
\end{definition}

\begin{remark}\label{rem:wsum}
{\rm 
Note that $w_1+w_2=\veps-t-1$, regardless of the parity of $\veps$.
}
\end{remark}

To continue, we note that since $\mu$ restricts to an isomorphism
$N\isom M$, $M$ is free of rank $2$, and $\mu$ induces a bijection
between the set of bases of $N$ and the set of bases of $M$. The
following proposition is a counterpart to the last part of the proof
of \cite[Prop.~5.27]{Niel}. 

\begin{proposition}\label{prop:gcd}
If $\{g_1(X),g_2(X)\}$ is any free-module basis for $M$ (as a
submodule of $L$), then $\gcd(g_1,g_2)=1$.
\end{proposition}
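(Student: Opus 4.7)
The plan is to exploit the explicit form of the twisted scalar multiplication on $M\subseteq\tlftwomx$, and then exhibit two concrete elements of $M$ whose ordinary polynomial $\gcd$ is $1$.

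First I would set $d(X):=\gcd(g_1,g_2)$ (ordinary $\eftwos[X]$-gcd) and use the basis property to write every $u\in M$ as
$$
u = f_1\cdot_1 g_1 + f_2\cdot_1 g_2 = f_1(X^2)\,g_1(X) + f_2(X^2)\,g_2(X)
$$
for unique $f_1,f_2\in \eftwos[X]$. Since divisibility by $d(X)$ in $\eftwos[X]$ is preserved by ordinary multiplication (by $f_i(X^2)$ or anything else), it follows that $d(X)$ divides every element of $M$ as an ordinary polynomial. So it suffices to produce two specific elements of $M$ whose polynomial $\gcd$ is $1$.

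For the first element, I would take $u_1:=X^{2t}$. In characteristic $2$, $(X^{2t})'=0$, and $S\cdot X^{2t}\equiv 0\pmod{X^{2t}}$, so $u_1\in M$. Consequently $d(X)\mid X^{2t}$, and hence $d(X)=X^k$ for some $0\le k\le 2t$. For the second element, I would use Proposition \ref{prop:isom} to move into $N$: the pair $(\hat S(X)\bmod X^t,\,1)$ trivially lies in $N$ (as $\hat S\equiv\hat S\cdot 1\pmod{X^t}$), so its image
$$
u_2 := \mu\!\left(\hat S(X)\bmod X^t,\,1\right) \;=\; 1 + X\cdot\bigl(\hat S(X^2)\bmod X^{2t}\bigr)
$$
lies in $M$ and has constant term $1$. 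Thus $X\nmid u_2$, which forces $k=0$, so $d(X)=1$.

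The argument is short once one has the explicit formula $f\cdot_1 g = f(X^2)g(X)$; the only mildly subtle step is producing the element $u_2$ with nonzero constant term, for which invoking the isomorphism $\mu$ from Proposition \ref{prop:isom} is the cleanest route. No real obstacle is anticipated beyond verifying that $u_1,u_2$ are genuinely in $M$, both of which are immediate from the definitions.
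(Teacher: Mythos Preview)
Your proof is correct and follows essentially the same strategy as the paper: observe that any common divisor of $g_1,g_2$ must divide every element of $M$ (since the twisted scalar multiplication $f\cdot_1 g=f(X^2)g(X)$ stays within the ordinary ideal), and then exhibit two coprime elements of $M$---a power of $X$ and a polynomial with constant term $1$. The only difference is cosmetic: the paper uses $X^{2t+1}$ and $1+XS(X)$ (verified directly via $S'\equiv S^2\bmod X^{2t}$, equation~(\ref{eq:sodd})), whereas you use $X^{2t}$ and pull back $(\hat S,1)\in N$ through the isomorphism $\mu$ of Proposition~\ref{prop:isom}.
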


\begin{proof}
Note that the $L$-submodule generated by a set of polynomials is
contained in the ideal generated by the same polynomials. Hence if
$r(X)$ is a common factor of $g_1,g_2$, then $r(X)$ is a factor of
every element of $M$. To prove the assertion, it is therefore
sufficient to show that there are \emph{some} coprime polynomials in
$M$. Clearly, $X^{2t+1}\in M$. Also, using (\ref{eq:sodd}), it can be
verified that $1+XS(X)\in M$. 
\end{proof}

Recall from Definition \ref{def:fiwi}
that $f_1,f_2\in \eftwos[X]$ are the unique polynomials such
that $\mu^{-1}(\sigma)=f_1\bh_1+f_2\bh_2$, and that
$\hath_i:=\mu(\bh_i)$, $i=1,2$. Since $\mu$ is a
homomorphism, we have
\begin{multline*}
\sigma=\mu(f_1\bh_1+f_2\bh_2) = f_1(X)\cdot_1\mu(\bh_1) +
f_2(X)\cdot_1\mu(\bh_2) 
= f_1(X^2)\hath_1+f_2(X^2)\hath_2,
\end{multline*}
that is, 
\begin{equation}\label{eq:sig}
\sigma(X)=f_1(X^2)\hath_1(X)+f_2(X^2)\hath_2(X).
\end{equation}
Hence, 
\begin{equation}\label{eq:preinterp}
\forall x\in\roots(\sigma,\eftwos),\quad
f_1(x^2)\cdot\hath_1(x) +
f_2(x^2)\cdot\hath_2(x)=0,
\end{equation}
and also, by Proposition \ref{prop:gcd},
\begin{equation}\label{eq:preinterpt}
\forall x\in\eftwos^*,\quad \hath_1(x)=0 \implies \hath_2(x)\neq 0.
\end{equation}

\begin{remark}\label{rem:coprime}
{\rm
Since every element of $\eftwos$ has a square root, we can write 
$f_i(X^2)=g_i(X)^2$ for some $g_i$, $i=1,2$. As noted in
\cite{BHNW13}, since $\sigma$ is square-free, we must therefore have 
$\gcd(f_1,f_2)=1$. 
}
\end{remark}

\begin{remark}\label{rem:root}
{\rm
It follows from (\ref{eq:preinterpt}) that for all
$\bs{g}\in N$, written uniquely as
$\bs{g}=g_0\bs{h}_1+g_1\bs{h}_2$,\footnote{We use subscripts starting
from $0$ to comply with the notation of the following sections, where
the subscript stands for the power of the intermediate $Y$.} 
it holds that for all $x\in \eftwos$,  
\begin{equation}\label{eq:}
\mu(\bs{g})(x) = 0 \quad\iff\quad 
\begin{cases}
g_0(x^2)+\frac{\hath_2(x)}{\hath_1(x)}g_1(x^2)=0 & \text{if
}\hath_1(x)\neq 0\\
g_1(x^2) = 0 & \text{if }\hath_1(x)= 0.
\end{cases}
\end{equation}
}
\end{remark}

Note that $\bh_1$ and $\bh_2$ depend only on the received word
(through the syndrome), and the decoder's task is to find $f_1,f_2$,
as $\sigma$ can be restored from $f_1,f_2$. The interpolation theorem
indicates how (\ref{eq:preinterp}) and (\ref{eq:preinterpt}) can be
used for finding $f_1$ and $f_2$.

The following interpolation theorem is a refinement of a theorem of
Trifonov \cite{Trif10}. We note that although for decoding we are only
interested in the case where $u_1,u_2$ in the theorem are coprime,
this assumption is not required for the theorem. For completeness, we
include here the simple proof.

\begin{theorem}{\cite{Trif10}}\label{thm:interp} 
For any field $K$, let $Q(X,Y,Z)\in \homog^K$, and let $u_1(X),u_2(X)\in K[X]$ be
polynomials with
$\deg(u_1(X))\leq w'_1$ and $\deg(u_2(X))\leq w'_2$ for some $w'_1,w'_2\in
\bbQ$. For $e\in\bbNp$, let $\{(x_i,y_i,z_i)\}_{i=1}^{e}\subseteq K^3$
be a set of triples satisfying the following conditions:
\begin{enumerate}

\item The $x_i$ are distinct.

\item The $y_i$ and $z_i$ are not simultaneously zero, that
is, for all $i$, $y_i=0\implies z_i\neq 0$.

\item For all $i$, $z_iu_1(x_i)+y_iu_2(x_i)=0$.

\end{enumerate}
Then, if 
$$
\sum_{i=1}^e\mult\big(Q(X,Y,Z),(x_i,y_i,-z_i)\big) >
\wdeg_{1,w'_1,w'_2}\big(Q(X,Y,Z)\big), 
$$
then $Q(X,u_1(X),u_2(X))=0$ in $K[X]$.
\end{theorem}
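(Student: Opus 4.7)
The plan is to define $R(X) := Q(X, u_1(X), u_2(X)) \in K[X]$ and show two things: (i) $\deg(R) \leq \wdeg_{1,w'_1,w'_2}(Q)$, and (ii) each $x_i$ is a root of $R$ with multiplicity at least $\mult(Q,(x_i, y_i, -z_i))$. Together with the distinctness of the $x_i$ (condition 1), these imply
$$
\deg(R) \;\geq\; \sum_{i=1}^e \mult(R, x_i) \;\geq\; \sum_{i=1}^e \mult\bigl(Q, (x_i, y_i, -z_i)\bigr) \;>\; \wdeg_{1,w'_1,w'_2}(Q) \;\geq\; \deg(R),
$$
unless $R = 0$, which is the desired conclusion.

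For (i), I would expand $Q = \sum_{a,b} q_{a,b}\, X^a Y^b Z^{\rho-b}$ and note that each contribution to $R$ is $q_{a,b} X^a u_1(X)^b u_2(X)^{\rho-b}$, of degree at most $a + b w'_1 + (\rho-b) w'_2 = \wdeg_{1,w'_1,w'_2}(X^a Y^b Z^{\rho-b})$, using the degree bounds on $u_1, u_2$ and the $(Y,Z)$-homogeneity of $Q$. Taking the maximum over monomials gives the bound.

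For (ii), the key observation is that condition (3), $z_i u_1(x_i) + y_i u_2(x_i) = 0$, forces the vector $\bigl(u_1(x_i), u_2(x_i)\bigr)$ to lie in the line $K\cdot(y_i,-z_i)$ (well-defined by condition 2, which guarantees $(y_i,-z_i)\neq(0,0)$). Hence there exists $\lambda_i \in K$ with $\bigl(u_1(x_i), u_2(x_i)\bigr) = \lambda_i (y_i, -z_i)$. Applying Proposition \ref{cw_prop:compmult} with the substitution $(X, u_1(X), u_2(X))$ evaluated at $x_i$ yields
$$
\mult(R, x_i) \;\geq\; \mult\bigl(Q, (x_i, u_1(x_i), u_2(x_i))\bigr) \;=\; \mult\bigl(Q, (x_i, \lambda_i y_i, -\lambda_i z_i)\bigr).
$$

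The main subtlety, and where I expect to need to be careful, is the split depending on whether $\lambda_i = 0$. When $\lambda_i \neq 0$, part 1 of Proposition \ref{cw_prop:yzhomog} gives $\mult\bigl(Q, (x_i, \lambda_i y_i, -\lambda_i z_i)\bigr) = \mult\bigl(Q, (x_i, y_i, -z_i)\bigr)$ directly. When $\lambda_i = 0$, we instead have $\bigl(u_1(x_i), u_2(x_i)\bigr) = (0,0)$, and we fall back on part 4 of Proposition \ref{cw_prop:yzhomog}, which gives $\mult(Q, (x_i, 0, 0)) \geq \mult(Q, (x_i, y_i, -z_i))$. Either way, the required multiplicity lower bound holds for every $i$, closing the argument. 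It is precisely part 4 of the $(Y,Z)$-homogeneity proposition that makes this interpolation theorem work uniformly without having to exclude the degenerate case $u_1(x_i) = u_2(x_i) = 0$.
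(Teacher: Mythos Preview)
Your proposal is correct and follows essentially the same approach as the paper's proof: both bound $\deg R$ by the weighted degree, apply Proposition~\ref{cw_prop:compmult} to pass from $R$ to $Q$, and then use parts~1 and~4 of Proposition~\ref{cw_prop:yzhomog} to handle the rescaling and the degenerate case $(u_1(x_i),u_2(x_i))=(0,0)$. The only cosmetic difference is that the paper splits into the cases $z_i\neq 0$ and $z_i=0$ in order to solve condition~(3) explicitly for one variable, whereas you observe directly that condition~(3) forces $(u_1(x_i),u_2(x_i))\in K\cdot(y_i,-z_i)$, which is a slightly cleaner way to arrive at the same point.
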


\begin{proof}
Write $I:=\{1,\ldots,e\}$, $I_1:=\{i\in I, z_i\neq 0\}$ and $I_2:=I\smallsetminus
I_1=\{i\in I, z_i=0\}$.  Note that by assumption, for all $i\in I_2$,
$y_i\neq 0$. Then 
\begin{equation}\label{cw_eq:forall}
\forall i\in I_1,
u_1(x_i)=-\frac{y_iu_2(x_i)}{z_i}\qquad\text{and}\qquad\forall i\in I_2,
u_2(x_i)=-\frac{z_i u_1(x_i)}{y_i}.
\end{equation}
Now, it follows from Proposition \ref{cw_prop:compmult} that
\begin{equation}\label{cw_eq:firststep}
\sum_{i=1}^e \mult\big(Q(X,u_1(X),u_2(X)), x_i\big) \geq  \sum_{i=1}^e
\mult\big(Q(X,Y,Z), (x_i,u_1(x_i),u_2(x_i))\big).
\end{equation}
Writing $m$ for the right-hand side of (\ref{cw_eq:firststep}), we have
\begin{eqnarray*}
m&\stackrel{(\text{a})}{=}&  \sum_{i\in
I_1}\mult\Big(Q,\big(x_i,-\frac{y_iu_2(x_i)}{z_i},u_2(x_i)\big)\Big)
+  \sum_{i\in I_2}\mult\Big(Q,\big(x_i,u_1(x_i),-\frac{z_i
u_1(x_i)}{y_i}\big)\Big) \\
& \stackrel{(\text{b})}{=} & \sum_{i\in
I_1}\mult\big(Q,(x_i,-y_iu_2(x_i),z_iu_2(x_i))\big)
+ 
\sum_{i\in I_2}\mult\big(Q,(x_i,y_iu_1(x_i),-z_i
u_1(x_i))\big) \\
& \stackrel{(\text{c})}{\geq} & \sum_{i\in
I_1}\mult\big(Q,(x_i,y_i,-z_i)\big) 
+ 
\sum_{i\in I_2}\mult\big(Q,(x_i,y_i,-z_i)\big)\\
& \stackrel{(\text{d})}{>} & \wdeg_{1,w'_1,w'_2}\big(Q(X,Y,Z)\big)
\stackrel{(\text{e})}{\geq} \deg\big(Q(X,u_1(X),u_2(X)\big), 
\end{eqnarray*}
where (a) follows from (\ref{cw_eq:forall}), (b) follows from part 1 of
Proposition \ref{cw_prop:yzhomog}, (c) follows from parts 1 and 4 of
Proposition \ref{cw_prop:yzhomog} (part 4 is required if $u_2(x_i)=0$
or $u_1(x_i)=0$ for some $i$), (d) is by assumption, and (e) holds
since $\deg f_i\leq w'_i, i=1,2$. It follows that $Q(X,u_1(X),u_2(X))$
is the 
zero polynomial. 
\end{proof}

\begin{corollary}\label{coro:interp}
For $Q(X,Y,Z)\in \homog^{\eftwos}$ and for the $f_i,w_i,\hath_i$
($i=1,2$) from Definition \ref{def:fiwi}, suppose that 
$$
\sum_{x\in\roots(\sigma)}\mult\Big(Q(X,Y,Z),
\big(x^2,\hath_2(x),\hath_1(x)\big)\Big) >
\wdeg_{1,w_1,w_2}\big(Q(X,Y,Z)\big). 
$$
Then $Q(X,f_1(X),f_2(X))=0$. Consequently, if $f_2(X)\neq 0$, then 
$\big(Yf_2(X)+f_1(X)\big)|\downed{Q}(X,Y)$ in $\eftwos[X,Y]$. 
\end{corollary}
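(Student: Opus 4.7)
The plan is to derive the first assertion by a direct application of Theorem \ref{thm:interp} to the points $\{(x^2,\hath_2(x),\hath_1(x))\}_{x\in\roots(\sigma)}$, and then to obtain the divisibility statement by a Gauss's-lemma argument, using the coprimality of $f_1$ and $f_2$ noted in Remark \ref{rem:coprime}.

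First, I would invoke Theorem \ref{thm:interp} over $K=\eftwos$ with $u_1:=f_1$, $u_2:=f_2$, $w'_i:=w_i$, and triples $(x_i,y_i,z_i):=(x^2,\hath_2(x),\hath_1(x))$ indexed by $x\in\roots(\sigma)$. The three hypotheses of the theorem are verified as follows. The $x_i$ are distinct because $\sigma=\prod_i(1+\alpha_iX)$ has $\veps$ pairwise distinct roots and the Frobenius $x\mapsto x^2$ is injective on $\eftwos$. The condition ``$y_i,z_i$ not simultaneously zero'' is precisely \eqref{eq:preinterpt}. The linear relation $z_iu_1(x_i)+y_iu_2(x_i)=\hath_1(x)f_1(x^2)+\hath_2(x)f_2(x^2)=0$ is \eqref{eq:preinterp}. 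Moreover, $\deg(f_i)\leq w_i$ by Proposition \ref{prop:degbounds} together with Definition \ref{def:fiwi}. Since $\charac(\eftwos)=2$, we have $-\hath_1(x)=\hath_1(x)$, so the weighted multiplicity sum in the corollary's hypothesis coincides with the one required by the theorem. Theorem \ref{thm:interp} then yields $Q(X,f_1(X),f_2(X))=0$.

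For the second assertion, assume $f_2\neq 0$ and let $\rho$ be the $(Y,Z)$-homogeneous degree of $Q$. Since $Q\in \homog^{\eftwos}(\rho)$, the identity $Q(X,Y,Z)=Z^{\rho}\downed{Q}(X,Y/Z)$ holds in $\eftwos(X,Y,Z)$, and substituting $Y=f_1(X)$, $Z=f_2(X)$ in $\eftwos(X)$ gives
$$
0=Q(X,f_1,f_2)=f_2^{\rho}\cdot\downed{Q}\bigl(X,f_1/f_2\bigr).
$$
Since $f_2\neq 0$, this yields $\downed{Q}(X,f_1/f_2)=0$ in $\eftwos(X)$, so $Y-f_1/f_2=Y+f_1/f_2$ divides $\downed{Q}(X,Y)$ in $\eftwos(X)[Y]$. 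Equivalently, $Yf_2(X)+f_1(X)$ divides $\downed{Q}(X,Y)$ in $\eftwos(X)[Y]$.

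To upgrade from $\eftwos(X)[Y]$ to $\eftwos[X,Y]$, I would use Gauss's lemma. By Remark \ref{rem:coprime}, $\gcd(f_1,f_2)=1$ in $\eftwos[X]$, so $Yf_2(X)+f_1(X)\in\eftwos[X][Y]$ has unit content and is therefore primitive. Applying Gauss's lemma to the divisibility $(Yf_2+f_1)\mid \downed{Q}$ in $\eftwos(X)[Y]$ gives the same divisibility in $\eftwos[X][Y]=\eftwos[X,Y]$, as required. The main technical point — beyond bookkeeping of the characteristic-$2$ sign — is precisely this passage from a rational-function factorization to a polynomial factorization, which is handled by coprimality of $f_1,f_2$.
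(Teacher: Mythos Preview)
Your proposal is correct and follows essentially the same route as the paper: the first assertion is obtained by applying Theorem~\ref{thm:interp} via \eqref{eq:preinterp}, \eqref{eq:preinterpt}, and the degree bounds from Proposition~\ref{prop:degbounds}/Definition~\ref{def:fiwi} (with the characteristic-$2$ sign observation), and the second by de-homogenizing, obtaining divisibility in $\eftwos(X)[Y]$, and then using $\gcd(f_1,f_2)=1$ to pass to $\eftwos[X,Y]$. The only cosmetic difference is that you invoke Gauss's lemma by name, whereas the paper carries out the same argument by writing $\downed{Q}=(u_1/v_1)(Yf_2+f_1)$ in lowest terms and concluding $v_1$ is constant.
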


\begin{proof}
The first assertion is a straightforward consequence of
(\ref{eq:preinterp}), (\ref{eq:preinterpt}), and Theorem
\ref{thm:interp}. For the second assertion, let $\rho$ be the
$(Y,Z)$-homogeneous degree of $Q$. Then 
$Q(X,Y,Z)=Z^{\rho}\downed{Q}(X,Y/Z)$, and hence if $f_2(X)\neq 0$,
$\downed{Q}(X,f_1(X)/f_2(X))=0$. It follows that $Y+f_1(X)/f_2(X)$
divides $\downed{Q}(X,Y)$ in $\eftwos(X)[Y]$.  Hence, we can write
$$
\downed{Q}(X,Y) = \frac{u(X,Y)}{v(X)}\big(Y+\frac{f_1(X)}{f_2(X)}\big)
 =  \frac{u_1(X,Y)}{v_1(X)}(Yf_2(X)+f_1(X)),
$$
for some $u,u_1\in\eftwos[X,Y], v,v_1\in\eftwos[X]$, and where
$u_1(X,Y)/v_1(X)$ is a reduced fraction. As $f_1,f_2$ are coprime and
$\downed{Q}(X,Y)\in\eftwos[X,Y]$, $v_1(X)$ must be a constant.   
\end{proof}

\begin{definition}
{\rm
For $\bm=\{m_x\}_{x\in\eftwos^*}\in \bbN^{\eftwos^*}$, let 
$$
\am:=\Big\{Q\in \eftwos[X,Y,Z]\Big|\forall x\in
\eftwos^*,\mult\Big(Q,\big(x^2,\hath_2(x),\hath_1(x)\big)\Big)\geq
m_{x^{-1}}\Big\}
$$
be the ideal of all polynomials having the multiplicities specified in
$\bm$. Also,
For $\rho\in\bbN$, let $M(\bm,\rho)$ be the $\eftwos[X]$-submodule of
$\eftwos[X,Y,Z]$ defined by 
$$
M(\bm,\rho):=\am\cap \homog^{\eftwos}(\rho).
$$
}
\end{definition}

To continue, it will be convenient to introduce some notation similar
to that of \cite{KV03}. Regarding the following definition, we note
that it is important to take special care when considering the number
of free variables with non-integer weights. In this context,
we will restrict attention to weights (and weighted degrees) in
$\frac{1}{2}\bbN$. 

\begin{definition}\label{def:nvar}
{\rm
\begin{enumerate}
\item For $\rho\in\bbN$ and $w'_1,w'_2,d\in \frac{1}{2}\bbN$, let 
$$
\nvarsrwt(d):=\big|\big\{(i,j)\in\bbN^2\big|i+w'_1j+w'_2(\rho-j)\leq
d\text{ and }j\leq \rho\big\}\big| 
$$
be the number of free coefficients in a polynomial $Q\in
\homog(\rho)$ with $\wdeg_{1,w'_1,w'_2}(Q)\leq d$. Note that
$\nvarsrwt(d)$ is always an integer.

\item For $\eneq\in \bbN$, let 
$\mindegrwt(\eneq):=\min\big\{d\in
\frac{1}{2}\bbN\big|\nvarsrwt(d)>\eneq\big\}$ 
be the minimum $(1,w'_1,w'_2)$-weighted degree required for getting at
least $\eneq+1$ free coefficients in a polynomial $Q\in\homog(\rho)$. 

\item For $\bm=\{m_x\}_{x\in\eftwos^*}$, let $\cost(\bm) :=
\frac{1}{2}\sum_{x\in\eftwos^*}m_x(m_x + 1).$\footnote{Recall that
$\cost(\bm)$ is the number of  
homogeneous linear equations on the coefficients of a bivariate 
polynomial $P\in \eftwos[X,Y]$ describing the condition
$\forall x\in \eftwos^*, \mult(P,(x,y_x))\geq m_x$
(for some family $\{y_x\}_x$).}

\end{enumerate}
}
\end{definition}

The following simple bound on $\mindegrwt(\eneq)$ appears implicitly
in \cite{Trif10} for integer weights. It also appears implicitly in
\cite{BHNW13}, \cite{Niel}
for non-integer weights.  

\begin{proposition}\label{prop:delta}  
With the notation of Definition \ref{def:nvar}, suppose that exactly
one of $w'_1,w'_2$ is not an integer (and hence equals an odd multiple
of $1/2$). Then 
\begin{equation}\label{eq:half}
\mindegrwt(\eneq)\leq \begin{cases}
\displaystyle
\frac{\eneq}{\rho+1}+\frac{\rho}{2}(w'_1+w'_2) & \text{if $\rho$ is
odd}\\ \displaystyle
\frac{\eneq+1/2}{\rho+1}+\frac{\rho}{2}(w'_1+w'_2) & \text{if $\rho$ is
even}.
\end{cases}
\end{equation}
If both $w'_1,w_2$ are integers, then the first case of
(\ref{eq:half}) holds, with an integer $\mindegrwt(\eneq)$.  
\end{proposition}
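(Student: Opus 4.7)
The plan is to bound $\nvarsrwt(d)$ from below by an explicit linear function of $d$ whose residual depends only on parity, and then, for each case of the proposition, to identify the smallest admissible $d\in\tfrac{1}{2}\bbN$. Setting $a_j:=w'_1 j+w'_2(\rho-j)$ for $j=0,\ldots,\rho$, a polynomial in $\homog(\rho)$ with $(1,w'_1,w'_2)$-weighted degree $\leq d$ is parameterised by choosing, for each $j$, a coefficient polynomial in $X$ of degree at most $d-a_j$, so
$$
\nvarsrwt(d)\;=\;\sum_{j=0}^{\rho}\max\big\{0,\,\lfloor d-a_j\rfloor +1\big\}.
$$
Using $\max(0,y)\geq y$ and $\lfloor y\rfloor=y-\{y\}$, where $\{y\}\in[0,1)$ is the fractional part, I obtain the uniform lower bound
$$
\nvarsrwt(d)\;\geq\;(\rho+1)(d+1)\;-\;\tfrac{\rho(\rho+1)}{2}(w'_1+w'_2)\;-\;F(d),\qquad F(d):=\sum_{j=0}^{\rho}\{d-a_j\}.
$$
Since $d$ and each $a_j$ lie in $\tfrac{1}{2}\bbZ$, every summand in $F(d)$ equals $0$ or $1/2$. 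Writing $x:=\frac{\eneq}{\rho+1}+\frac{\rho}{2}(w'_1+w'_2)$, the required inequality $\nvarsrwt(d)>\eneq$ is then implied by the sufficient condition $d>x-1+F(d)/(\rho+1)$.

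I would then perform a case analysis, choosing the parity of $d$ to drive $F(d)$ small. If both weights are integers, every $a_j$ is an integer and integer $d$ gives $F(d)=0$; the smallest integer $d$ satisfying $d>x-1$ is at most $x$ itself (equal to $x$ when $x\in\bbN$ and to $\lceil x\rceil-1<x$ otherwise), which establishes the last sentence of the proposition. For the half-integer case I assume WLOG that $w'_2$ is the half-integer weight, so that $a_j$ is integer precisely when $\rho-j$ is even. Letting $A$ and $B$ denote the sizes of $\{j:\rho-j\text{ even}\}$ and $\{j:\rho-j\text{ odd}\}$ respectively, a direct count gives $F(d)=B/2$ when $d\in\bbN$ and $F(d)=A/2$ when $d\in\bbN+1/2$. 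When $\rho$ is odd, $A=B=(\rho+1)/2$, so $F(d)=(\rho+1)/4$ independently of $d$; the condition reduces to $d>x-3/4$, and the smallest $d\in\tfrac{1}{2}\bbN$ exceeding $x-3/4$ is at most $x-1/4\leq x$, giving the odd-$\rho$ bound.

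When $\rho$ is even, $A=\rho/2+1$ and $B=\rho/2$; taking $d\in\bbN+1/2$ yields $F(d)=(\rho+2)/4$ and hence the condition $d>x-1+(\rho+2)/(4(\rho+1))$. The smallest element of $\bbN+1/2$ satisfying this is at most $x-1/2+(\rho+2)/(4(\rho+1))=x-\rho/(4(\rho+1))$, which lies below the claimed bound $x+1/(2(\rho+1))$ for $\rho\geq 1$. The main subtlety I anticipate lies precisely at this last step: although the integer-$d$ choice has the smaller residual $F(d)=\rho/4$, its coarser rounding step of $1$ (as opposed to $1/2$ for half-integers) makes it inferior for the final upper bound, and the additive slack $\tfrac{1}{2(\rho+1)}$ in the even-$\rho$ case of the proposition is exactly what accommodates this parity asymmetry.
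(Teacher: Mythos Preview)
Your approach is essentially the paper's: lower-bound $\nvarsrwt(d)$ by summing $\lfloor d-a_j\rfloor+1$, control the fractional parts by parity, then round. The integer-weight case and the odd-$\rho$ case are fine. The even-$\rho$ case, however, has a genuine gap.

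You fix $d\in\bbN+1/2$ so that $F(d)=A/2=(\rho+2)/4$, obtain the threshold $c_2:=x-1+\frac{\rho+2}{4(\rho+1)}$, and then assert that the smallest element of $\bbN+1/2$ exceeding $c_2$ is at most $c_2+1/2$. This is false: the set $\bbN+1/2=\{1/2,3/2,5/2,\dots\}$ has spacing $1$, not $1/2$, so the best you can say is $c_2+1$, and that is \emph{not} below the claimed bound $x+\frac{1}{2(\rho+1)}$ for any even $\rho\geq 2$. Your closing paragraph reveals the source of the confusion: you write that integers have ``coarser rounding step of $1$ (as opposed to $1/2$ for half-integers)'', but $\bbN$ and $\bbN+1/2$ have the \emph{same} spacing $1$; only their union $\frac{1}{2}\bbN$ has spacing $1/2$.

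The repair is immediate and is exactly what the paper does. Do not restrict $d$ to half-integers. Instead observe that $F(d)\leq(\rho+2)/4$ for \emph{every} $d\in\frac{1}{2}\bbN$, since the only other case, $d\in\bbN$, gives the smaller value $F(d)=\rho/4$. The sufficient condition $d>c_2$ then applies uniformly on $\frac{1}{2}\bbN$, and now the spacing really is $1/2$, so the smallest admissible $d$ is at most $c_2+1/2=x-\frac{\rho}{4(\rho+1)}\leq x+\frac{1}{2(\rho+1)}$, as required. Your arithmetic for this last display was already correct; only the justification of the rounding step needs to change.
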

Since the case of non-integer weights is subtle, we include the proof
of Proposition \ref{prop:delta} in Appendix \ref{app:nvar}. 

\begin{corollary}\label{coro:cond}
Let $\bm\in \bbN^{\eftwos^*}$  be a vector of multiplicities, let $\rho\in
\bbN$, and let $f_i,w_i,\hath_i$ ($i=1,2$) be as in  
Definition \ref{def:fiwi}. Then,
\begin{enumerate}

\item There exists a non-zero polynomial $P\in M(\bm,\rho)$ with
$\wdeg_{1,w_1,w_2}(P)\leq  \mindegrw(\cost(\bm))$.

\item Let $P\in M(\bm,\rho)$ be a polynomial satisfying the condition
in part 1. Then, if
\begin{equation}\label{eq:mcond}
\sum_{x\in \roots(\sigma)}m_{x^{-1}} > \mindegrw(\cost(\bm)),
\end{equation}
then $P(X,f_1(X),f_2(X))=0$. In particular, if (\ref{eq:mcond}) holds,
then a non-zero polynomial $Q\in M(\bm,\rho)$ of minimum
$(1,w_1,w_2)$-weighted degree satisfies $Q(X,f_1(X),f_2(X))=0$. 

\end{enumerate}
\end{corollary}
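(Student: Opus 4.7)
The plan is to treat the two parts separately: part 1 is a standard dimension-counting argument for the existence of a low-weighted-degree interpolant, while part 2 is essentially a direct application of Corollary \ref{coro:interp} after checking the multiplicity bookkeeping.

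For part 1, I would argue as follows. The condition $Q \in M(\bm,\rho)$ requires $Q \in \homog^{\eftwos}(\rho)$ together with, for each $x \in \eftwos^*$, the multiplicity constraint $\mult(Q,(x^2,\hath_2(x),\hath_1(x))) \geq m_{x^{-1}}$. The latter imposes $\binom{m_{x^{-1}}+1}{2}$ homogeneous linear constraints on the coefficients of $Q$, so all together we get exactly $\cost(\bm)$ homogeneous linear equations (recall the footnote in Definition \ref{def:nvar}; one needs to note that the analogous count for trivariate $(Y,Z)$-homogeneous polynomials matches the bivariate count via the bijection $Q \leftrightarrow \tlf{Q}$, which preserves multiplicities at points with nonzero $Z$-coordinate by Proposition \ref{cw_prop:yzhomog}). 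Imposing the additional constraint $\wdeg_{1,w_1,w_2}(Q) \leq d$ restricts $Q$ to an $\eftwos$-vector space of dimension $\nvarsrw(d)$. By the very definition of $\mindegrw$, choosing $d := \mindegrw(\cost(\bm))$ yields $\nvarsrw(d) > \cost(\bm)$, so the homogeneous linear system has more unknowns than equations and admits a non-zero solution $P$.

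For part 2, let $P$ be any polynomial as in part 1. I would apply Corollary \ref{coro:interp} to $P$. The roots of $\sigma$ all lie in $\eftwos^*$, so the multiplicity constraints defining $M(\bm,\rho)$ apply at each point $(x^2, \hath_2(x), \hath_1(x))$ for $x \in \roots(\sigma)$, giving
\begin{equation*}
\sum_{x \in \roots(\sigma)} \mult\bigl(P, (x^2, \hath_2(x), \hath_1(x))\bigr) \;\geq\; \sum_{x \in \roots(\sigma)} m_{x^{-1}}.
\end{equation*}
The hypothesis (\ref{eq:mcond}) together with $\wdeg_{1,w_1,w_2}(P) \leq \mindegrw(\cost(\bm))$ shows that this sum strictly exceeds $\wdeg_{1,w_1,w_2}(P)$. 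Corollary \ref{coro:interp} then gives $P(X,f_1(X),f_2(X)) = 0$. The ``In particular'' clause follows immediately, since a minimum $(1,w_1,w_2)$-weighted degree element $Q \in M(\bm,\rho) \setminus \{0\}$ satisfies $\wdeg_{1,w_1,w_2}(Q) \leq \mindegrw(\cost(\bm))$ by part 1, so the same argument applies to $Q$.

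The only real subtlety I foresee is the bookkeeping in part 1: one must be careful that the multiplicity conditions on a $(Y,Z)$-homogeneous trivariate $Q$ really do impose $\cost(\bm)$ independent homogeneous constraints (and not more), which is where Proposition \ref{cw_prop:yzhomog} --- specifically the bijection $Q \mapsto \tlf{Q}$ preserving multiplicities at points with $Z = \hath_1(x) \neq 0$, and part 4 handling the anomalous case $\hath_1(x) = 0$ --- is used. Everything else is routine.
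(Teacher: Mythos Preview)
Your proposal is correct and follows essentially the same approach as the paper: the paper's own proof is just a terse ``more variables than equations (recalling Proposition \ref{cw_prop:yzhomog})'' for part 1, and exactly your chain of inequalities combined with Corollary \ref{coro:interp} for part 2. One small citation slip: in your final paragraph, the anomalous case $\hath_1(x)=0$ has $\hath_2(x)\neq 0$, so the point is $(x^2,\hath_2(x),0)$ with nonzero $Y$-coordinate, and it is part~3 of Proposition \ref{cw_prop:yzhomog} (the $\hatf$ bijection) that reduces the constraint count to the bivariate one, not part~4.
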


\begin{proof}
1. Follows from the usual ``more variables than equations'' argument
(recalling Proposition \ref{cw_prop:yzhomog}). 

2. Writing $\bs{p}_x:=\Big(x^2,\hath_2(x),
\hath_1(x)\Big)$, we have 
$$
\sum_{x\in \roots(\sigma)}\mult(P,\bs{p}_x)  \geq 
\sum_{x\in \roots(\sigma)}m_{x^{-1}}  
> \mindegrw(\cost(\bm))\geq \wdeg_{1,w_1,w_2}\big(P\big).
$$
The proposition now follows from Corollary \ref{coro:interp}.   
\end{proof}

Corollary \ref{coro:cond} indicates that the decoder should search for
a non-zero $Q\in M(\bm,\rho)$ of minimum $(1,w_1,w_2)$-weighted
degree.  Up until now, we considered partially homogeneous trivariate
polynomials (following Trifonov) for avoiding points at infinity.
While the most efficient algorithms for list decoding 
require this setup, for the fast Chase decoding
algorithm of the following section it is sufficient to consider
\kotter{}'s iteration. For this purpose, it will be convenient to
work with bivariate polynomials, by de-homogenizing. 

\begin{proposition}\label{prop:tlm}
Let $\downed{M}(\bm,\rho)$ be the image of $M(\bm,\rho)$ under the
isomorphism $\downed{(\cdot)}$ of Definition 
\ref{cw_def:tlf}. Then $\downed{M}(\bm,\rho)$ is the
$\eftwos[X]$-module of all polynomials $P\in \eftwos[X,Y]$ satisfying
\begin{enumerate}

\item $\wdeg_{(0,1)}(P)\leq \rho$,

\item for all $x$ with $\hath_1(x)\neq 0$, 
$$
\mult\Bigg(P,\bigg(x^2,\frac{\hath_2(x)}{\hath_1(x)}\bigg)\Bigg)\geq
m_{x^{-1}}, 
$$

\item for all $x$ with $\hath_1(x)= 0$ (and hence with
$\hath_2(x)\neq 0$), 

$$
\mult\Big(Y^{\rho}P(X,Y^{-1}),(x^2,0)\Big)\geq m_{x^{-1}}. 
$$
\end{enumerate}
Moreover, for a polynomial $R\in M(\bm,\rho)$, we have
$$
\wdeg_{1,w_1,w_2}(R)=\wdeg_{1,w_1-w_2}(\downed{R})+w_2\rho. 
$$
Hence, if $P\in \downed{M}(\bm,\rho)\smallsetminus\{0\}$ has the
minimal $(1,w_1-w_2)$-weighted degree in
$\downed{M}(\bm,\rho)\smallsetminus\{0\}$, then 
the $R\in M(\bm,\rho)$ with $\downed{R}=P$ has the minimal
$(1,w_1,w_2)$-weighted degree in $M(\bm,\rho)\smallsetminus\{0\}$. 
\end{proposition}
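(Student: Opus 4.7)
The plan is to verify the three conditions describe exactly $\downed{M}(\bm,\rho)$ by transferring each piece of the definition of $M(\bm,\rho)$ through the $K[X]$-linear isomorphism $R \mapsto \downed{R}$, then to check the degree identity by expanding both sides, and finally to read off the minimization correspondence.

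\textbf{Membership characterization.} I would start from an arbitrary $R\in \homog^{\eftwos}(\rho)$ and note that the condition $\wdeg_{0,1}(\downed{R})\leq \rho$ is immediate, since $R\mapsto \downed{R}$ is precisely the dehomogenization in which powers of $Y$ of degree $>\rho$ cannot appear. The real work is the multiplicity conditions at each $x\in \eftwos^*$. For such $x$, consider the point $\bs{p}_x:=(x^2,\hath_2(x),\hath_1(x))$. I would split cases depending on whether $\hath_1(x)=0$.

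If $\hath_1(x)\neq 0$, Proposition~\ref{cw_prop:yzhomog}(2) gives directly
$$
\mult(R,\bs{p}_x) = \mult\!\left(\downed{R},\Big(x^2,\tfrac{\hath_2(x)}{\hath_1(x)}\Big)\right),
$$
which shows that the multiplicity condition in the definition of $M(\bm,\rho)$ translates exactly to condition 2. If instead $\hath_1(x)=0$ (so by~(\ref{eq:preinterpt}), $\hath_2(x)\neq 0$), I would first apply Proposition~\ref{cw_prop:yzhomog}(1) with $\alpha=1/\hath_2(x)$ to replace $\bs{p}_x$ by $(x^2,1,0)$, and then apply Proposition~\ref{cw_prop:yzhomog}(3) to obtain
$$
\mult(R,\bs{p}_x)=\mult\!\big(\hat{R},(x^2,0)\big),
$$
where $\hat{R}(X,Y)=Y^{\rho}\downed{R}(X,Y^{-1})$. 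This is exactly condition 3. Since these equivalences are bidirectional (and $R$ is recovered from $\downed{R}$ by homogenization in $(Y,Z)$), this establishes that $\downed{M}(\bm,\rho)$ coincides with the stated module. $K[X]$-linearity of both $\downed{(\cdot)}$ and the multiplicity constraints ensures that this is an $\eftwos[X]$-submodule structure.

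\textbf{Degree identity and minimization.} Writing $R=\sum_{j=0}^{\rho}\sum_i a_{ij}X^i Y^j Z^{\rho-j}$, a direct computation gives
$$
\wdeg_{1,w_1,w_2}(R)=\max_{a_{ij}\neq 0}\bigl(i+w_1 j+w_2(\rho-j)\bigr)=\max_{a_{ij}\neq 0}\bigl(i+(w_1-w_2)j\bigr)+w_2\rho,
$$
while $\downed{R}(X,Y)=\sum_{i,j} a_{ij}X^i Y^j$ satisfies $\wdeg_{1,w_1-w_2}(\downed{R})=\max_{a_{ij}\neq 0}(i+(w_1-w_2)j)$, giving the claimed identity. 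The last statement then follows because $\downed{(\cdot)}\colon M(\bm,\rho)\to \downed{M}(\bm,\rho)$ is a bijection and the two weighted degrees differ by the constant $w_2\rho$ independent of $R$, so minimizers correspond.

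The step most likely to require care is the $\hath_1(x)=0$ case of the multiplicity transfer, because there one has to justify both the scaling step (using Proposition~\ref{cw_prop:yzhomog}(1), which requires $\hath_2(x)\neq 0$, supplied by (\ref{eq:preinterpt})) and the switch from $\downed{R}$ to its $Y$-reversal $\hat{R}$; the rest is bookkeeping.
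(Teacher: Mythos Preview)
Your proposal is correct and follows exactly the route the paper takes: the paper's proof is a one-liner citing Proposition~\ref{cw_prop:yzhomog} for the characterization and saying the degree identity is ``easily verified,'' and you have simply filled in those details. One tiny redundancy: in the $\hath_1(x)=0$ case you could apply Proposition~\ref{cw_prop:yzhomog}(3) directly (since $\hath_2(x)\neq 0$ already gives $z_0/y_0=0$) without first scaling via part~(1), but this does no harm.
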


\begin{proof}
The characterization of $\downed{M}(\bm,\rho)$ Follows from
Proposition \ref{cw_prop:yzhomog}, and the second assertion is easily
verified. 
\end{proof}

Note that
\begin{equation}\label{eq:deltaw}
w_1-w_2=\begin{cases}
2\deg(h_{21})-t-1 & \text{if } \veps\text{ is even}\\
2\deg(h_{21})-t & \text{if } \veps\text{ is odd}.
\end{cases}
\end{equation}
Hence $w_1-w_2$ depends only on the parity of $\veps$, and the decoder
may know this parity if the even subcode of the BCH code considered
up to this point is used instead of the BCH code itself (at the cost
of loosing a single information bit). Alternatively, one may use a
non-integer weight, as suggested in \cite{BHNW13}, in order to avoid
loosing an information bit. We note that while for list decoding
there is a small gain in knowing the parity of $\veps$, this is not
the case for the fast Chase algorithm of the following section. Hence,
we do \emph{not} require working with the even subcode.

Let us now consider multiplicity vectors that take only a single
non-zero value, say, $m$. Typically and informally, this non-zero
multiplicity is assigned to all coordinates whose reliability is below
some threshold.   

\begin{corollary}\label{coro:simpcond}
Suppose that $I\subseteq \eftwos^*$ is the set of erroneous
coordinates (so that $|I|=\veps$). Let  $J\subseteq \eftwos^*$ be
some subset, and for $m\in \bbNp$, define $\bm$ by setting, for all
$x\in \eftwos^*$, 
$$
m_x:=\begin{cases}
m & \text{if } x \in J,\\
0 & \text{otherwise}.
\end{cases}
$$
Let also $\rho\in \bbNp$, and put $w:=2\deg(h_{21})-t-1/2$. Then if  
\begin{equation}\label{eq:listcond}
m\cdot|I\cap J|>\begin{cases}
\displaystyle
\frac{|J|\cdot
m(m+1)}{2(\rho+1)}+\frac{\rho}{2}(\veps-t-1/2) & \text{if }\rho \text{
is odd}\\
\displaystyle
\frac{|J|\cdot
m(m+1)+1}{2(\rho+1)}+\frac{\rho}{2}(\veps-t-1/2) & \text{if }\rho
\text{ is even}
\end{cases}
\end{equation}
then a non-zero polynomial $Q(X,Y)$ minimizing the
$(1,w)$-weighted degree in $\downed{M}(\bs{m},\rho)$ satisfies
$\big(Yf_2(X)+f_1(X)\big)|Q(X,Y)$ when $f_2(X)\neq 0$.
\end{corollary}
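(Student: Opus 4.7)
The plan is to use the $\eftwos[X]$-module isomorphism of Proposition \ref{prop:tlm} to lift the bivariate minimizer $Q$ to a trivariate $R \in M(\bm,\rho)$ with $\downed{R} = Q$, and then apply Corollary \ref{coro:interp} to $R$. Under this correspondence the desired divisibility $(Yf_2(X)+f_1(X)) \mid Q(X,Y)$ is exactly the ``consequently'' clause of Corollary \ref{coro:interp} (applied to $R$, so that $\downed{R}=Q$), so the whole argument reduces to showing that
\[
\sum_{x \in \roots(\sigma)} \mult\bigl(R,(x^2,\hath_2(x),\hath_1(x))\bigr) \;>\; \wdeg_{1,w_1,w_2}(R) \qquad (\ast)
\]
holds under hypothesis (\ref{eq:listcond}).

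For the left-hand side of $(\ast)$, observe that $\roots(\sigma) = \{\alpha^{-1} : \alpha \in I\}$ by the definition of $\sigma$, and since $R \in \am$ the sum is at least $\sum_{x \in \roots(\sigma)} m_{x^{-1}} = m \cdot |I \cap J|$, matching exactly the left-hand side of (\ref{eq:listcond}). For the right-hand side, I would first bound $\wdeg_{1,w}(Q)$ by a standard variables-versus-equations count: the multiplicity conditions defining $\downed{M}(\bm,\rho)$ impose $\cost(\bm) = |J|m(m+1)/2$ linear equations on the coefficients of $Q$, so by minimality of $Q$ we get $\wdeg_{1,w}(Q) \leq \mindeg{w}{0}{\rho}(\cost(\bm))$. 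Applying Proposition \ref{prop:delta} with the half-integer weight $w'_1=w$ and the integer weight $w'_2=0$ (exactly one is non-integer) then yields the explicit upper bound $\tfrac{|J|m(m+1)}{2(\rho+1)} + \tfrac{\rho w}{2}$ when $\rho$ is odd (with the $|J|m(m+1)+1$ analogue for $\rho$ even). I would then translate this into a bound on $\wdeg_{1,w_1,w_2}(R)$ via Proposition \ref{prop:tlm}, combined with the elementary estimate $\wdeg_{1,w_1-w_2}(Q) \leq \wdeg_{1,w}(Q) + \max\bigl(0,(w_1-w_2)-w\bigr)\rho$, valid because $Q$ has $Y$-degree at most $\rho$.

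The delicate step is the algebraic collapse that completes the argument. Substituting the two cases of (\ref{eq:deltaw}) together with the explicit values $w_2 = \veps/2 - \deg(h_{21})$ (for $\veps$ even) and $w_2 = (\veps-1)/2 - \deg(h_{21})$ (for $\veps$ odd) coming from Definition \ref{def:fiwi}, one must verify that
\[
\frac{\rho w}{2} + \max\bigl(0,(w_1-w_2)-w\bigr)\rho + w_2\rho \;=\; \frac{\rho(\veps - t - 1/2)}{2}
\]
in \emph{both} parities of $\veps$. This is precisely what the half-integer choice $w = 2\deg(h_{21}) - t - 1/2$ is engineered to deliver: $w$ is exactly the midpoint of the two values of $w_1 - w_2$ listed in (\ref{eq:deltaw}), so the extra $\rho/2$ generated by the ``$\max$'' term when $\veps$ is odd is absorbed exactly, while for $\veps$ even the ``$\max$'' vanishes and the larger value of $w_2$ compensates. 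Once this collapse is verified, hypothesis (\ref{eq:listcond}) becomes precisely $(\ast)$, and Corollary \ref{coro:interp} yields $R(X,f_1,f_2)=0$ together with the divisibility $(Yf_2(X)+f_1(X)) \mid \downed{R}(X,Y) = Q(X,Y)$ whenever $f_2(X) \neq 0$.
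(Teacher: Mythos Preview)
Your proposal is correct and follows the same high-level route as the paper: lift $Q$ to $R\in M(\bm,\rho)$ via Proposition~\ref{prop:tlm}, bound the left-hand side of the multiplicity condition by $m\cdot|I\cap J|$, bound the weighted degree via Proposition~\ref{prop:delta}, and finish with Corollary~\ref{coro:interp}. The difference lies only in how the parity of $\veps$ is handled.

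The paper observes that Theorem~\ref{thm:interp} (and hence Corollary~\ref{coro:interp}) accepts \emph{any} upper bounds on $\deg(f_1),\deg(f_2)$, so one may simply relax $w_1$ or $w_2$ by $1/2$ (depending on parity) to obtain new weights $w'_1,w'_2$ satisfying $w'_1-w'_2=w$ and $w'_1+w'_2=\veps-t-1/2$ uniformly. With these weights, Proposition~\ref{prop:tlm} already says that minimizing $\wdeg_{1,w}(Q)$ is the same as minimizing $\wdeg_{1,w'_1,w'_2}(R)$, and Proposition~\ref{prop:delta} applies directly with sum $w'_1+w'_2=\veps-t-1/2$; no further manipulation is needed. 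Your approach instead keeps the parity-dependent $w_1,w_2$ and introduces the correction term $\max\bigl(0,(w_1-w_2)-w\bigr)\rho$ to pass from $\wdeg_{1,w}(Q)$ to $\wdeg_{1,w_1-w_2}(Q)$, then verifies case by case that this correction collapses to the desired $\tfrac{\rho}{2}(\veps-t-1/2)$. Both routes yield the same bound; the paper's is shorter because it absorbs the case analysis into a single weight relaxation at the outset, while yours makes the arithmetic behind the choice of $w$ more explicit.
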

\begin{proof}
First note that when $\veps$ is even, we may take $w_1+1/2$ instead of
$w_1$ as an upper bound on $\deg(f_1)$, while when $\veps$ is odd, we
may take $w_2+1/2$ instead of $w_2$ as an upper bound on
$\deg(f_2)$. In both cases, the result is adding $1/2$ to the sum
$w_1+w_2=\veps-t-1$ (Remark \ref{rem:wsum}), while replacing $w_1-w_2$
by $2\deg(h_{21})-t-1/2=:w$, regardless of the error parity. The
assertion now follows from from Corollary \ref{coro:cond}, Proposition
\ref{prop:tlm}, and Proposition \ref{prop:delta}.
\end{proof}

For the fast Chase decoding algorithm of the following section, we
take $\rho=1$, and hence only the case of odd $\rho$ is relevant. It
should be noted that the seemingly negligible difference between the
odd and even $\rho$ cases becomes significant for the fast Chase
algorithm, where $\rho=m=1$; with the condition for even $\rho$,
Theorem \ref{thm:success} ahead would not hold. 

\begin{remark}
{\rm
Corollary \ref{coro:simpcond} is valid also when $\veps\leq 
t$, in which case it holds also for $f_2=0$, as we shall now
explain. When $\veps\leq t$, (\ref{eq:listcond}) holds for
$J=\emptyset$, for which $\downed{M}(\bs{m},\rho)$ is just the set of
polynomials of $Y$-degree at most $\rho$ in $\eftwos[X,Y]$ . There are
two cases to consider, depending on the sign of $w$ (note that by
definition, $w$ cannot be $0$). If $w>0$, then the minimum
$(1,w)$-weighted degree is achieved 
exactly by any non-zero constant polynomial. Factoring, we find that
$Yf_2+f_1$ must also be a non-zero constant. In vector form,
$(f_1,f_2)= (c,0)$ for some $c\in \eftwos^*$. Applying $\mu$, we
obtain that in this case, $\sigma=c\cdot \hath_1$. On the other hand,
if $w<0$, then the minimum 
$(1,w)$-weighted degree is achieved exactly by the polynomials of the
form $cY^{\rho}$ ($c\in\eftwos^*$), and factoring gives
$Yf_2+f_1=c'Y$ for some $c'$. In vector form, $(f_1,f_2)=(0,c')$, so that
$\sigma=c'\cdot \hath_2$. Note that as
$w=\deg(h_{21})-\deg(h_{10})-1/2$, we have $w>0\iff
\lmmo{\bh_1}<_{-1}\lmmo{\bh_2}$, and the above agrees with Proposition
\ref{prop:minord}.
}
\end{remark}

To complete the description of the SD Wu list decoding algorithm, one
has to specify methods for translating the channel reliability
information into the multiplicity vector $\bs{m}$ in
Corollary \ref{coro:cond}, to find an appropriate value of
the list size $\rho$, to consider efficient interpolation
(i.e., finding the minimizing $Q$), efficient factorization,
etc.. However, this is outside the main scope of the current
paper. Also, most of the above questions can be answered using small 
modifications of existing works, such as \cite{KV03}, \cite{Wu08},
\cite{BHNW13} and others. We therefore omit further details.

\section{The fast Chase algorithm}\label{sec:chase}

\subsection{Fast Chase decoding on a tree}\label{subsec:tree}
In the original Chase-II decoding algorithm \cite{Chase}
for decoding a binary code of minimum distance $d$, all possible
$2^{\lfloor d/2 \rfloor}$ error patterns on the $\lfloor d/2 \rfloor$
least reliable coordinates are tested (i.e., subtracted from the received
word). Bounded distance decoding is performed for each tested error
pattern, resulting in a list of $\leq 2^{\lfloor d/2 \rfloor}$ 
candidate codewords. If the list is non-empty, then the most
likely codeword in the list is chosen as the decoder output. 

The type of Chase algorithm considered in
the current paper is the following variant of the Chase-II
algorithm. As in \cite{Chase}, it is assumed
that the decoder has probabilistic reliability information on the
received bits. Using this information, the decoder identifies the
$\eta$ least reliable coordinates  
for some fixed  $\eta\in \bbNp$. Labeling coordinates by elements of
$\eftwos^*$ and writing $\alpha_1,\ldots,\alpha_{\eta}$ for the least
reliable coordinates, let $U:=\{\alpha_1,\ldots,\alpha_{\eta}\}$.  
 
The Chase decoding considered in the current paper runs over all test
error patterns on $U$ that have a weight of up to some
$\rmax\in \{1,\ldots,\eta\}$, and performs (the equivalent of) bounded
distance decoding for each such pattern. When $\rmax=\eta$,
the examined error patterns are all the error patterns on $U$. 

As in \cite{Wu12}, \cite{SB21}, a directed tree  
$T$ of depth $\rmax$ is constructed as follows. The root is the
all-zero vector in $\eftwo^{\eta}$, and for all $r\in 
\{1,\ldots,\rmax\}$, the  vertices at depth $r$ are the vectors of
weight $r$ in $\eftwo^{\eta}$.  
To define the edges of $T$, for each $r\geq
1$ and for each vertex
$\bs{\beta}=(\beta_1,\ldots,\beta_{\eta})\in\eftwo^{\eta}$ at 
depth $r$ with non-zero entries at coordinates $i_1,\ldots,i_r$, we
pick a single vertex $\bs{\beta}'=(\beta'_1,\ldots,\beta'_{\eta})$ at
depth $r-1$ that is equal to $\bs{\beta}$ on all coordinates, except
for one $i_{\ell}$ ($\ell\in\{1,\ldots,r\}$), for which
$\beta'_{i_{\ell}}=0$. Note that there are $r$ distinct ways to choose
$\bs{\beta}'$ given $\bs{\beta}$, and we simply fix one such 
choice. Now the edges of $T$
are exactly all such pairs $(\bs{\beta}',\bs{\beta})$. For an example
of the tree $T$, we refer to \cite[Fig.~1]{Wu12},
\cite[Fig.~1]{SB21}. 

The edge $(\bs{\beta}',\bs{\beta})$ defined above corresponds
to adjoining exactly one additional flipped coordinate (i.e.,
$\alpha_{i_{\ell}}$). Hence, the edge
$(\bs{\beta}',\bs{\beta})$ can be alternately marked by
$(\bs{\beta'},\alpha_{i_{\ell}})$. Similarly, we can identify a 
path from the root to a vertex at depth $r\geq 1$ (and hence the 
vertex itself) with a vector
$(\alpha_{i_{1}},\ldots,\alpha_{i_{r}})$
with distinct $\alpha_{i_{\ell}}$'s.

The main ingredient of \cite{Wu12} and \cite{SB21}, as well as
of the current paper, is an efficient algorithm for
updating polynomials for adding a single
modified coordinate $\alpha_{i_r}$. The tree $T$ is then traversed depth
first, saving intermediate results on vertices whose out degree is
larger than $1$, and applying the polynomial-update algorithm on the
edges. Because the tree is traversed depth first and has depth $\rmax$,
there is a need to save at most $\rmax$ vertex calculations at each
time (one for each depth).\footnote{This observation is due to
I.~Tamo.} See, e.g., \cite[Sec.~4.3]{SB21} for details.

\subsection{The update rule on an edge of the decoding
tree}\label{subsec:update} 
In this subsection we will use the results on SD Wu list decoding from
the previous section in order to define appropriate polynomials and an
update rule for a tree-based fast Chase decoding of binary BCH codes.

\begin{definition}
{\rm
For a subset $J\subseteq \eftwos^*$, let $L(J):=\downed{M}(\one_J,1)$,
where $\one_J$ is the vector $\{m_x\}_{x\in \efq^*}$ with
$$
m_x:=\begin{cases}
1 & \text{if } x \in J,\\
0 & \text{otherwise}.
\end{cases}
$$ 
For simplicity, we will sometimes write $L(x_1,\ldots,x_{\ell})$ for
$L(\{x_1,\ldots,x_{\ell}\})$, where $\ell\in \bbN$ and
$x_1,\ldots,x_{\ell}\in \efq^*$ are distinct.
} 
\end{definition}

Writing $J^{-1}_1:=\{x|x^{-1}\in J\text{ and }\hath_1(x)\neq 0\}$
and $J^{-1}_2:=\{x|x^{-1}\in J\text{ and }\hath_1(x)= 0\}$,
it follows from Proposition \ref{prop:tlm} that $L(J)$ consists of all
polynomials $g_0(X)+Yg_1(X)$ that satisfy the following two
conditions:
\begin{equation}\label{eq:deltaa}
\forall x\in J^{-1}_1,\quad  g_0(x^2)+\frac{\hath_2(x)}{\hath_1(x)}
g_1(x^2) = 0
\end{equation}
and
\begin{equation}\label{eq:deltab}
\forall x\in J^{-1}_2,\quad g_1(x^2)=0.
\end{equation}

\begin{remark}\label{rem:lj}
{\rm
Recalling Remark \ref{rem:root}, it follows from (\ref{eq:deltaa}) and
(\ref{eq:deltab}) that $L(J)$ consists of all polynomials
$g_0(X)+Yg_1(X)$ such that $\mu(g_0\bh_1+g_1\bh_2)$ vanishes on
$J^{-1}:=\{x|x^{-1}\in J\}=J^{-1}_1\cup J^{-1}_2$. 
}
\end{remark}

For the fast Chase decoding algorithm of the current paper, $J$ will
be the support of a test error pattern. Recall that
$f_1,f_2\in\eftwos[X]$ are such that
$\sigma(X)=f_1(X^2)\hath_1(X)+f_2(X^2)\hath_2(X)$. Recall also from
Corollary \ref{coro:simpcond} that $w:=2\deg(h_{21})-t-1/2$. The
following theorem shows that when the maximum list size is $1$, the SD
Wu list decoding algorithm becomes a means for flipping coordinates:
assigning a multiplicity of $1$ to a coordinate has the same effect as
flipping it in a Chase-decoding trial.
\begin{theorem}\label{thm:success}
Suppose that $f_2(X)\neq 0$. For $J\subseteq \eftwos^*$, let $n_1$ be
the number of erroneous coordinates on $J$, and let $n_2$ be the
number of correct coordinates on $J$ (so that $|J|=n_1+n_2$). Then if
$n_1-n_2\geq \veps -t$ and $\veps\geq t+1$, then for a non-zero polynomial
$g_0(X)+Yg_1(X)$ that minimizes the $(1,w)$-weighted degree in $L(J)$,
it holds that 
\begin{equation}\label{eq:propeq}
\big(f_1(X)+Yf_2(X)\big)\big|\big(g_0(X)+Yg_1(X)\big).
\end{equation}
Hence there exists some $t(X)\in\eftwos[X]$
such that 
$$
g_0(X)+Yg_1(X) = t(X)\big(f_1(X)+Yf_2(X)\big),
$$
so that 
\begin{equation}\label{eq:mug}
\mu(g_0\bh_1 +g_1\bh_2)=t(X^2)\sigma(X).
\end{equation}
Moreover, $\gcd\big(t(X^2),\sigma(X)\big)=1$ and $t(X^2)$ has no
roots outside $J^{-1}$. 
\end{theorem}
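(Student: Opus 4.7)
The plan is to read Theorem~\ref{thm:success} as the $m = \rho = 1$ specialization of Corollary~\ref{coro:simpcond}. With $|I \cap J| = n_1$ and $|J| = n_1 + n_2$, the odd-$\rho$ case of condition (\ref{eq:listcond}) becomes
\[
n_1 \;>\; \frac{(n_1 + n_2) \cdot 2}{2 \cdot 2} + \frac{1}{2}\Bigl(\veps - t - \tfrac{1}{2}\Bigr),
\]
which simplifies to $n_1 - n_2 > \veps - t - \tfrac{1}{2}$. Since $n_1 - n_2$ and $\veps - t$ are integers, this is equivalent to the hypothesis $n_1 - n_2 \geq \veps - t$. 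Invoking Corollary~\ref{coro:simpcond} under the standing assumption $f_2 \neq 0$ then yields (\ref{eq:propeq}) directly.

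For the second assertion, note that $f_2 \neq 0$ makes $f_1 + Y f_2$ of $Y$-degree exactly $1$, while $g_0 + Y g_1$ has $Y$-degree at most $1$; hence the quotient in $\eftwos[X, Y]$ must be a polynomial $t(X) \in \eftwos[X]$, giving $g_0 = t f_1$ and $g_1 = t f_2$. Applying $\mu$ and using that $\mu$ is a homomorphism of $\eftwos[X]$-modules (where the scalar action on the codomain $\tlftwomx$ is $t \cdot_1 u = t(X^2) u(X)$), together with (\ref{eq:sig}), gives
\[
\mu(g_0 \bh_1 + g_1 \bh_2) \;=\; t \cdot_1 \mu(f_1 \bh_1 + f_2 \bh_2) \;=\; t(X^2)\, \sigma(X),
\]
which is (\ref{eq:mug}).

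The coprimality claim is the main obstacle, and I would handle it by a minimality argument. By Remark~\ref{rem:lj}, membership $g_0 + Y g_1 \in L(J)$ is equivalent to $t(X^2)\sigma(X)$ vanishing on $J^{-1}$. For $y \in J$ that is an error location $\alpha_i$, the factor $\sigma(\alpha_i^{-1})$ is already zero, imposing no constraint on $t$. For each of the $n_2$ correct coordinates $y \in J$, however, $y^{-1} \notin \roots(\sigma)$, so $t(y^{-2}) = 0$ is forced; since $y \mapsto y^{-2}$ is a bijection on $\eftwos^*$, these $n_2$ forced zeros are distinct, yielding $\deg(t) \geq n_2$. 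On the other hand, the explicit polynomial $t^*(X) := \prod_{y \text{ correct in } J}(X - y^{-2})$ of degree $n_2$ makes $t^*(X)(f_1 + Y f_2)$ a valid element of $L(J)$ (both vanishing clauses are satisfied), so the minimum of $\deg(t)$ is exactly $n_2$. Since $\wdegw(g_0, g_1) = \deg(t) + \wdegw(f_1, f_2)$, the weighted-degree minimality of $g_0 + Y g_1$ forces $\deg(t) = n_2$; hence $t$ is a nonzero scalar multiple of $t^*$, and its roots are exactly $\{y^{-2} : y \text{ correct in } J\}$. Passing to $X^2$, the roots of $t(X^2)$ are $\{y^{-1} : y \text{ correct in } J\}$, which are disjoint from $\roots(\sigma) = \{\alpha_i^{-1}\}$ because no correct coordinate is an error location. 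The characteristic-$2$ subtlety that each root of $t(X^2)$ appears with even multiplicity while $\sigma$ is squarefree does not interfere, since coprimality only sees the underlying root sets; so $\gcd(t(X^2), \sigma(X)) = 1$, as claimed.
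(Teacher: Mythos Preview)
Your argument is correct. The first two assertions are handled exactly as in the paper: specialize Corollary~\ref{coro:simpcond} to $m=\rho=1$, reduce the inequality to $n_1-n_2>\veps-t-\tfrac12$, and then apply $\mu$ to obtain (\ref{eq:mug}).

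For the coprimality claim, you take a genuinely different route from the paper. The paper argues by contradiction: if some error location $\beta$ had $(X+\beta^{-1})\mid t(X^2)$, then (using $t(X^2)=\tilde t(X)^2$) one could divide $g_0,g_1$ by $X+\beta^{-2}$ and stay in $L(J)$ --- because the only root removed from $t(X^2)\sigma(X)$ is $\beta^{-1}$, which $\sigma$ already supplies --- contradicting minimality. Your approach is instead constructive: you pin down $t$ completely by showing $\deg(t)\geq n_2$ from the vanishing constraints at correct coordinates, and $\deg(t)\leq n_2$ by exhibiting $t^*=\prod_{y\text{ correct in }J}(X-y^{-2})$ as a witness in $L(J)$; hence $t=c\,t^*$, whose roots in $X^2$ are exactly the inverses of the \emph{correct} coordinates, disjoint from $\roots(\sigma)$. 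Both arguments hinge on weighted-degree minimality, but the paper uses it locally (peel off one factor) while you use it globally (compare to an explicit competitor). Your version costs a couple more lines but yields the bonus that $t$ is identified up to scalar, which dovetails with Remark~\ref{rem:dir} and the observation $t=\gcd(g_0,g_1)$ used later in Subsection~\ref{subsec:chient}.
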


\begin{proof}
Suppose that $I\subseteq \eftwos^*$ is the set of erroneous
coordinates (so that $|I|=\veps$). Then by definition, $|I\cap
J|=n_1$, and (\ref{eq:listcond}) reads
$$
n_1>\frac{n_1+n_2}{2}+\frac{1}{2}(\veps-t-1/2),
$$
which is equivalent to $n_1-n_2>\veps-t-1/2$. Hence, this condition
holds by assumption, and the first assertion follows from Corollary
\ref{coro:simpcond}.  

Next, we would like to prove that
$\gcd\big(t(X^2),\sigma(X)\big)=1$. Suppose not. Then for some error
location $\beta$, it holds that $(X+\beta^{-1})|t(X^2)$. As before, we
may write $t(X^2)=\tl{t}(X)^2$ for some $\tl{t}(X)$, so that
$(X+\beta^{-1})^2 = X^2+\beta^{-2}$ divides $t(X^2)$. 
Hence, $X+\beta^{-2}$ divides $t(X)$. Moreover, 
\begin{equation}\label{eq:mubeta}
\mu\Big(\frac{g_0(X)}{X+\beta^{-2}}\bh_1 +
\frac{g_1(X)}{X+\beta^{-2}}\bh_2 \Big)
=\frac{t(X^2)}{(X+\beta^{-1})^2}\sigma(X)  
\end{equation}
has the same roots as $\mu(g_0\bh_1 +g_1\bh_2)$. It therefore follows 
from Remark \ref{rem:lj} that
\begin{equation}\label{eq:divided}
\frac{g_0(X)}{X+\beta^{-2}} + Y\frac{g_1(X)}{X+\beta^{-2}}\in L(J),
\end{equation}
contradicting the minimality of $g_0(X)+Yg_1(X)$.

Finally, suppose that there exists some $\beta\notin J$ such that
$\beta^{-1}$ is a root of $t(X^2)$. As above, it follows that
$(X+\beta^{-2})|t(X)$, and (\ref{eq:mubeta}) shows that the left-hand
side of (\ref{eq:divided}) is in $L(J)$, as $\mu(\cdots)$ in
(\ref{eq:mubeta})  has the same roots as $\mu(g_0\bh_1 +g_1\bh_2)$ on
$J^{-1}$. Again, this contradicts the minimality of $g_0(X)+Yg_1(X)$.
\end{proof}

In Theorem \ref{thm:success}, we have considered only the case
$f_2\neq 0$. If $f_2=0$, then it follows from Remark \ref{rem:coprime}
that $f_1$ must be some non-zero constant from $\eftwos^*$, so that
$\sigma(X)=c\cdot \hath_1(X)$ for some $c\in \efq^*$. In general
SD Wu list decoding, one has to check directly whether $\hath_1(X)$ is
a valid ELP by performing exhaustive
substitution, etc.~(even when the degree is beyond $t$), similarly to
Step 3 of 
\cite[Alg.~2]{BHNW13}. However, for the fast Chase decoding algorithm
of the current paper, this need not be checked separately, since if
$\sigma(X)=c\cdot \hath_1(X)$, the criterion for polynomial
evaluation from the following section will assure in particular that
the validity of $\hath_1(X)$ as an ELP will be checked. For further
details, see Subsection \ref{subsec:chien} ahead.    

\begin{remark}\label{rem:dir}
{\rm
\begin{enumerate}
\item Note that in Theorem \ref{thm:success}, if $n_2=0$, that
is, if $J$ is a ``direct hit'' including only  error locations, then it
follows from (\ref{eq:preinterp}), (\ref{eq:preinterpt}),
(\ref{eq:deltaa}), (\ref{eq:deltab}) that $f_1(X)+Yf_2(X)\in
L(J)$. Hence, the minimality of $g_0(X)+Yg_1(X)$ and
(\ref{eq:propeq}) imply that $(f_1,f_2)=c\cdot(g_0,g_1)$ for some
$c\in \efq^*$, when $f_2\neq 0$.
For speeding up the decoding,
we would also like to consider the case of an ``indirect hit'', that
is, the case where $n_1-n_2\geq\veps-t$, while $n_2>0$ in the
proposition. By (\ref{eq:mug}), in this case we may restore
$t(X^2)\sigma(X)$ from a minimizing element in $L(J)$, rather than 
$\sigma(X)$ itself. 

\item To restore $\sigma(X)$ itself, one possible method is as
follows. As $\gcd(f_1,f_2)=1$, in Theorem 
\ref{thm:success} we have $t(X)=\gcd(g_0,g_1)$. Hence, we may calculate
$t(X)$, and consequently $f_1,f_2$ from the output $g_0,g_1$. There is
also an additional method, that has a somewhat higher complexity, but
avoids the calculation of $t(X)=\gcd(g_0,g_1)$ and the division by $t(X)$
-- see Subsection \ref{subsec:chien} for more details.
\end{enumerate}
}
\end{remark}

For simplicity, from this point on we will sometimes identify a
polynomial $g_0(X)+Yg_1(X)$ with the vector $(g_0(X),g_1(X))$ without
further mention; it should be clear from the context which
representation is used.

Theorem \ref{thm:success} indicates that the decoder should look
for a minimizing element in $L(J)$, and such an element always appears
in a \grobner{} basis with respect to any monomial ordering that
resolves ties for the $(1,w)$-weighted degree, e.g., for the
ordering $<_w$ when working with the vector representation
$(g_0(X),g_1(X))$. The core idea of the \emph{fast} Chase 
decoding algorithm of the current paper is that a single application
of \kotter{}'s iteration is required for moving from
$L(\alpha_{i_1},\ldots,\alpha_{i_{r-1}})$ to
$L(\alpha_{i_1},\ldots,\alpha_{i_r})$, 
where $r\in\{1,\ldots,\eta\}$.\footnote{By convention, for $r=1$, 
$L(\alpha_{i_1},\ldots,\alpha_{i_{r-1}})=\eftwos[X]^2$.} 

Recalling that the edges of the decoding tree $T$ defined in the 
previous subsection correspond exactly to moving from a subset of
$\efq^*$ to a subset containing one additional element, we have the
following adaptation of \kotter{}'s iteration for the edges of $T$. 
Note that the root of $T$ contains the \grobner{} basis
$\{\bg_1=(1,0),\bg_2=(0,1)\}$ for $\eftwos[X]^2$.  

\hfill\\ 

\begin{tabular}{|c|}
\hline
{\bf Algorithm A: \kotter{}'s iteration on an edge of $T$}\\
\hline
\end{tabular}

\begin{description}

\item[\bf{Input}] 

\begin{itemize}

\item The weight $w:=2\deg(h_{21})-t-1/2$

\item A \grobner{} basis
$G=\{\bg_1=(g_{10},g_{11}),\bg_2=(g_{20},g_{21})\}$\\
for $L(\alpha_{i_1},\ldots,\alpha_{i_{r-1}})$ with respect to
$<_w$,\\ with $\lmw(\bg_j)$ containing the $j$-th unit vector for
$j\in\{1,2\}$  

\item $\hath_1(X)$, $\hath_2(X)$ 

\item The next error location, $\alpha_{i_r}$, with
$i_r\notin\{i_1,\ldots,i_{r-1}\}$ 

\end{itemize}

\item[\bf{Output}] A \grobner{} basis
$\Gplus=\{\bgplus_1=(\gplus_{10},\gplus_{11}),
\bgplus_2=(\gplus_{20},\gplus_{21})\}$ 
for \\ $L(\alpha_{i_1},\ldots,\alpha_{i_{r-1}},\alpha_{i_r})$ with 
respect to $<_w$,\\ 
with $\lmw(\bgplus_j)$ containing the $j$-th unit vector for
$j\in\{1,2\}$  

\item[\bf{Algorithm}]

\begin{itemize} 

\item For $j=1, 2$, calculate /* justification: (\ref{eq:deltaa}),
(\ref{eq:deltab}) */
$$
\Delta_j:=\begin{cases}
\displaystyle
g_{j0}(\alpha_{i_r}^{-2}) +
\frac{\hath_2(\alpha_{i_r}^{-1})}{\hath_1(\alpha_{i_r}^{-1})}
g_{j1}(\alpha_{i_r}^{-2}) & \text{if } \hath_1(\alpha_{i_r}^{-1})\neq 0\\
g_{j1}(\alpha_{i_r}^{-2}) & \text{if }\hath_1(\alpha_{i_r}^{-1})= 0 
\end{cases}
$$

\item Set $A:=\big\{j\in\{1,2\}|\Delta_j\neq 0\big\}$ 

\item For $j\in\{1,2\}\smallsetminus A$, set $\bgplus_j:=\bg_j$

\item Let $j^*\in A$ be such that  $\lmw(\bg_{j^*})=\min_{j\in
A}\{\lmw(\bg_j)\}$  

\item For $j\in A$ 

\begin{itemize}

\item If $j\neq j^*$

\begin{itemize}

\item Set 
$\bgplus_j := \frac{\Delta_{j^*}}{\Delta_j}\bg_j+ \bg_{j^*}$ 

\end{itemize}

\item Else /* $j=j^*$ */

\begin{itemize}

\item Set $\bgplus_{j^*} := (X+\alpha_{i_r}^{-2})\bg_{j^*}$

\end{itemize}

\end{itemize}

\end{itemize}

\end{description}

Note that the update for $j=j^*$ is justified by the fact that
replacing $g_{j^*}$ by $Xg_{j^*}$ has the effect of multiplying
$\Delta_{j^*}$ by $\alpha_{i_r}^{-2}$. Note also that the update
$\bgplus_j := \frac{\Delta_{j^*}}{\Delta_j}\bg_j+ \bg_{j^*}$ (which is 
equivalent to the usual update of the form $\bgplus_j :=
\bg_j+\frac{\Delta_j}{\Delta_{j^*}} \bg_{j^*}$) appears in this form
only for the purpose of the complexity analysis ahead. This update
rule assures that when $|A|=2$, both $\bg_j$ and $\bg_{j^*}$ are
multiplied once by a constant, where $j:=\{1,2\}\smallsetminus j^*$.

\begin{remark}\label{rem:alga}
{\rm
Some remarks regarding Algorithm A are in order.
\begin{enumerate}

\item By Remark \ref{rem:lj}, it is clear that there is a strict
inclusion 
$$
L(\alpha_{i_1},\ldots,\alpha_{i_{r-1}}) \supsetneq
L(\alpha_{i_1},\ldots,\alpha_{i_{r-1}},\alpha_{i_r}).  
$$
In some detail,
$v(X):=X^{2t}\cdot\prod_{j=1}^{r-1}(X^2+\alpha_{i_j}^{-2})$ is 
in the solution module $M$, and for the unique $g_0(X),g_1(X)$ such
that $\mu^{-1}(v)=g_0\bh_1+g_1\bh_2$, it holds that
$g_0+Yg_1\in L(\alpha_{i_1},\ldots,\alpha_{i_{r-1}})$, but
$g_0+Yg_1\notin L(\alpha_{i_1},\ldots,\alpha_{i_{r}})$.  
Hence, in Algorithm A, $\Delta_j$ can be zero for at most one value
of $j$, for otherwise the above two modules would share a
\grobner{} basis.

\item The evaluations $\hath_1(\alpha_{i_j}^{-1}),
\hath_2(\alpha_{i_j}^{-1})$ on
\emph{all} $\eta$ unreliable coordinates can be calculated in advance
before starting the fast Chase algorithm, and so can the quotients
$\frac{\hath_2(\alpha_{i_j}^{-1})}{\hath_1(\alpha_{i_j}^{-1})}$ (when
$\hath_1(\alpha_{i_j}^{-1})\neq 0$), and the values $\alpha_{i_j}^{-2}$
to be substituted in various polynomials. 

\item Actually, it is not essential to work with a
non-integer weight, and we could have chosen $w=2\deg(h_{21})-t-1$,
i.e., the lower of the two values in (\ref{eq:deltaw}). The reason
that this value of $w$ works also for the odd case (where we should
use $w+1$ instead) is the following fact: For integer $w$, the orders
$<_w$ and $<_{w+1}$ differ only in the way they compare monomials that
have the same $(1,w+1)$-weighted degree (see Appendix \ref{app:onew}
for details). Hence, if we only wish to minimize the
$(1,w+1)$-weighted degree, we can use $<_w$ instead of
$<_{w+1}$.\footnote{
It is \emph{not} true that the order $<_w$ can be used to 
minimize the $(1,w')$-weighted degree for \emph{all} $w'\geq w$. A
vector $\bs{h}$ in a submodule $P\subseteq K[X]^2$ (for a field $K$)
with a minimal $<_w$-leading monomial minimizes both the
$(1,w)$-weighted degree and the $(1,w+1)$-weighted degree, but not
necessarily the $(1,w+2)$-weighted degree, as it need not be
$<_{w+1}$-minimal. For example, considering $P:=XK[X]^2$, 
$(0,X)$ is the unique monomial minimizing the $(1,-1)$-weighted
degree, and both $(X,0)<_0 (0,X)$ 
have the minimal $(1,0)$-weighted degree, while $(X,0)$ has the
minimal $(1,1)$-weighted degree. 
}
Note that this observation is relevant only for $\rho=1$; in the more 
general context of list decoding, we still have to use a non-integer $w$.

\end{enumerate}
}
\end{remark}

\subsection{A criterion for polynomial evaluation, and efficient
evaluation}\label{subsec:chien}

\subsubsection{The stopping criterion}\label{subsec:stop}
In \cite{Wu12}, Wu suggested a stopping criterion for avoiding
unnecessary exhaustive substitution in polynomials. Wu's criterion is
based on the idea that a length variable in his algorithm does not
increase if the true ELP was found, and an additional flip hits a
correct error location. Here, we have a similar stopping criterion
based on the discrepancy of Algorithm A, in the lines of the criterion
in \cite{SB21}.

In detail, using the terminology of Theorem
\ref{thm:success}, suppose that for
$J=\{\alpha_{i_1},\ldots,\alpha_{i_r}\}$ it holds that $n_1-n_2\geq
\veps-t+1$, and that $\alpha_{i_r}$ is an error location. Then for
$J'=\{\alpha_{i_1},\ldots,\alpha_{i_{r-1}}\}$ we have $n_1-n_2\geq
\veps-t$, and the condition of Theorem \ref{thm:success} holds.  

By the theorem, if $\hath_1$ is not the 
ELP (up to a multiplicative constant), then after performing Algorithm
A on an edge reaching the vertex 
$(\alpha_{i_1},\ldots,\alpha_{i_{r-1}})$, we have $\mu(g^+_{j 0}\bh_1
+g^+_{j 1}\bh_2)=t(X^2)\sigma(X)$ for the unique $j\in\{1,2\}$ for
which $\wdegw(\bgplus_j)$ is minimal.\footnote{
Since $w$ is non-integer, we must have $\wdegw(\bgplus_1)\neq
\wdegw(\bgplus_2)$. In detail, since the leading monomial of
$\bgplus_2$ is on the right, $\wdegw(\bgplus_2)$ is non-integer. On
the other hand, the leading monomial of $\bgplus_1$ is on the left, and
therefore $\wdegw(\bgplus_1)$ is an integer.
}
Hence, on the edge connecting
$(\alpha_{i_1},\ldots,\alpha_{i_{r-1}})$ to
$(\alpha_{i_1},\ldots,\alpha_{i_{r}})$ it holds that $\Delta_j=0$, by
Remark \ref{rem:root} and the assumption that $\alpha_{i_r}$ is an
error location. 

Moreover, if $\hath_1=c\cdot \sigma(X)$ for some $c\neq 0$, suppose
that at least one of the 
$\eta$ unreliable coordinates, say, $\alpha_{i_1}$, is an error
location. As the root of the tree $T$ contains the \grobner{} basis
$\{(1,0),(0,1)\}$ of $\efq[X]^2$, and
$\mu(1\cdot\bh_1+0\cdot\bh_2)=\hath_1$, it holds that $\Delta_1=0$ on
the edge connecting the root to the vertex $(\alpha_{i_1})$. 

We conclude that when moving from the root of the tree to depth $1$,
the criterion for starting an exhaustive substitution is having
$\Delta_1=0$, while for $r>1$, the criterion for exhaustive
substitution on an edge connecting depth $r-1$ to depth $r$ is
$\Delta_j=0$ for the unique $j\in\{1,2\}$ for
which $\wdegw(\bg_j)$ is minimal.

if $\veps\leq t+r$  for
$r+1\leq \rmax$, and $r+1$ of the errors are in the $\eta$ unreliable
coordinates, then the required exhaustive substitution will never be
missed. Note that this slightly degrades the decoding performance, as
we need $r+1$ errors on the unreliable coordinates, instead of the $r$
required to restore the ELP on some vertex.  

It is possible that the stopping
criterion holds falsely. Heuristically, it is reasonable to assume
that the probability that one discrepancy is $0$ by accident is about
$1/q$.\footnote{
We remark that the event of meeting the stopping criterion on the
edge connecting $(\alpha_{i_1},\ldots,\alpha_{i_{r-1}})$ to
$(\alpha_{i_1},\ldots,\alpha_{i_{r-1}},\alpha_{i_r})$ is exactly the
event that the minimal element in
$L(\alpha_{i_1},\ldots,\alpha_{i_{r-1}})$ (with respect to $\lmw$)
happens to fall in
$L(\alpha_{i_1},\ldots,\alpha_{i_{r-1}},\alpha_{i_r})$.    
} When this happens, an unnecessary exhaustive substitution is
performed. This somewhat increases the decoding complexity, but has no
effect on the probability of decoding successfully.

\begin{examplenn}
{\rm
For $q=256$ ($n=255$) and $t=8$, a random error of weight $\veps=14$ was
drawn $10^4$ times. Algorithm A was run on a path corresponding to
$\veps-t=6$ totally random distinct positions, and the number of times
for which the stopping criterion was falsely met was counted. The
resulting empirical probability was calculated as the total number of
false positives on an edge, divided by $6\cdot 10^4$, and was equal
to $1/251.05$.  
}
\end{examplenn}

\subsubsection{Efficient evaluation by finding $t(X)$ and division}
\label{subsec:chient}
Let $(g_0,g_1)$ be the
vector with the minimum $(1,w)$-weighted degree in the current
\grobner{} basis. As shown in Remark \ref{rem:dir}, in case of success,
we have $(g_0,g_1)=t(X)(f_1,f_2)$, and $t(X)=\gcd(g_0,g_1)$. This
leads to the following evaluation strategy:

\begin{itemize}

\item Before starting the fast Chase decoding algorithm, calculate and
store $\{\hath_1(x)\}_{x\in\efq^*},\{\hath_2(x)\}_{x\in\efq^*}$ 

\item When traversing the decoding tree $T$, if the stopping criterion
holds, 

\begin{enumerate}

\item Calculate $t(X)=\gcd(g_0,g_1)$ using the Euclidean
algorithm. This requires $O(r^2)$ multiplications (by Proposition
\ref{prop:degs} ahead). 

\item Calculate $f_1(X):=g_0(X)/t(X)$, $f_2(X):=g_1(X)/t(X)$. This
also requires $O(r^2)$ multiplications.

\item For all $x\in \efq^*$, 

\begin{itemize}

\item Calculate $a:=f_1(x^{-2}),b:=f_2(x^{-2})$, 

\item Read the stored values $c:=\hath_1(x^{-1}),d:=\hath_2(x^{-1})$.

\item Calculate $ac+bd$. If the result is $0$, then adjoin $x$ to a
set $E$ of error locations. 

\end{itemize}

\item Calculate
$\delta:=\deg(f_1(X^2)\hath_1(X)+f_2(X^2)\hath_2(X))$ as 
$$
\delta = \max\big\{2\deg(f_1)+2\deg(h_{10})+1,
2\deg(f_2)+2\deg(h_{21})\big\}
$$
(by Proposition \ref{prop:sigdeg} ahead) and check if 
$\delta=|E|$. If equality holds, adjoin the error vector with
support $E$ to the list of output error vectors (see justification
ahead).  

\end{enumerate}

\end{itemize}

Note that the comparison of the degree of the estimated ELP to the
number of its roots is equivalent to checking that the syndrome of the
error vector defined by the estimated ELP is
equal to the syndrome of $\bs{y}$: First, the estimated ELP
$\tilde{\sigma}:=\mu(f_1\bh_1+f_2\bh_2)$ is separable, by the degree
test. Also, since $\tilde{\sigma}\in \mu(N)=M$, we may use
\cite[Prop.~4.3]{SB21} with $\tilde{\omega}:=\tilde{\sigma}'$ to show
that the error vector with locations defined by $\tilde{\sigma}$ and
values obtained by $\tilde{\sigma}$ and
$\tilde{\omega}=\tilde{\sigma}'$ through {\it Forney's formula}
(see e.g., \cite[Eq.~(3)]{SB21}) has the same syndrome as 
$\bs{y}$. Finally, observing Forney's formula for the case where
$\omega=\sigma'$ and $a_i=\alpha_i$ for all $i$ (using the terminology
of \cite{SB21}), as is the case for BCH codes, we see that all the
error values obtained with this formula must be $1$.

It remains to justify the calculation of the degree $\delta$ in
the above steps. For this, we have the following proposition.
\begin{proposition}\label{prop:sigdeg}
For any polynomials $u_1(X),u_2(X)\in \eftwos[X]$, it holds that 
\begin{multline*}
\deg\big(u_1(X^2)\hath_1(X)+u_2(X^2)\hath_2(X)\big) =\\
\max\big\{2\deg(u_1)+2\deg(h_{10})+1, 2\deg(u_2) + 2\deg(h_{21})\big\}.
\end{multline*}
\end{proposition}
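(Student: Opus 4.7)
The plan is to exploit the even/odd decomposition of $\hath_1,\hath_2$ coming from $\hath_i=\mu(\bh_i)=h_{i1}(X^2)+Xh_{i0}(X^2)$. Substituting this into the sum yields
\[
u_1(X^2)\hath_1(X)+u_2(X^2)\hath_2(X) = E(X^2) + X\cdot O(X^2),
\]
where $E(X):=u_1(X)h_{11}(X)+u_2(X)h_{21}(X)$ and $O(X):=u_1(X)h_{10}(X)+u_2(X)h_{20}(X)$. The key observation is that $E(X^2)$ contributes only to even-degree monomials of the sum while $XO(X^2)$ contributes only to odd-degree monomials, so no cross-cancellation is possible and
\[
\deg\big(u_1(X^2)\hath_1+u_2(X^2)\hath_2\big)=\max\big\{2\deg E,\, 2\deg O+1\big\}.
\]

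Next, I would read off from the assumption that $\{\bh_1,\bh_2\}$ is a \grobner{} basis with $\lm_{<_{-1}}(\bh_1)$ on the left and $\lm_{<_{-1}}(\bh_2)$ on the right the key inequalities $\deg(h_{10})\geq \deg(h_{11})$ and $\deg(h_{21})>\deg(h_{20})$ (unwinding the definitions of $\wdegpm$ and $<_{-1}$, together with the lex tie-breaker favoring $Y>X$). I will also dispose of the trivial edge cases where $u_1=0$ or $u_2=0$ (in which the even/odd parts collapse and the claim reduces to $\deg(u_i(X^2)\hath_i)$, which follows immediately from $\deg h_{10}\geq \deg h_{11}$ or $\deg h_{21}>\deg h_{20}$).

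Assuming both $u_i\neq 0$, set $a:=2\deg(u_1)+2\deg(h_{10})+1$ and $b:=2\deg(u_2)+2\deg(h_{21})$; note $a$ is odd and $b$ even, so $a\neq b$, and I split into two cases. In Case $a>b$, the chain $\deg u_1+\deg h_{10}\geq \deg u_2+\deg h_{21}>\deg u_2+\deg h_{20}$ gives a unique leading term in $O$ of degree $\deg u_1+\deg h_{10}$, so $2\deg O+1=a$; meanwhile $2\deg E\leq \max\{2\deg u_1+2\deg h_{11},b\}< a$, yielding total degree $a$. In Case $b>a$, the inequality $\deg u_2+\deg h_{21}>\deg u_1+\deg h_{10}\geq \deg u_1+\deg h_{11}$ gives a unique leading term in $E$ of degree $\deg u_2+\deg h_{21}$, so $2\deg E=b$; and $2\deg O+1\leq \max\{a,\,2\deg u_2+2\deg h_{20}+1\}\leq \max\{a,b-1\}<b$, yielding total degree $b$.

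The argument is essentially bookkeeping; the only mildly subtle point is correctly extracting the two inequalities $\deg h_{10}\geq \deg h_{11}$ and $\deg h_{21}>\deg h_{20}$ from the \grobner{}-basis hypothesis under the ordering $<_{-1}$, because the $\wdegpm$ tie-breaker on the right is responsible for the strict inequality in the second one. Once those inequalities are in hand, the parity separation between $a$ and $b$ automatically prevents any hidden cancellation, and the claim follows.
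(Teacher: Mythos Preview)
Your proof is correct and rests on the same two ingredients as the paper's: the inequalities $\deg(h_{10})\geq\deg(h_{11})$ and $\deg(h_{21})>\deg(h_{20})$ extracted from the \grobner{}-basis hypothesis, and a parity argument to rule out cancellation. The paper, however, applies the parity argument one level higher and thereby avoids your case split entirely. From the two inequalities it reads off directly that $\deg(\hath_1)=2\deg(h_{10})+1$ is odd and $\deg(\hath_2)=2\deg(h_{21})$ is even; hence the summand $u_1(X^2)\hath_1(X)$ has odd degree $a$ and the summand $u_2(X^2)\hath_2(X)$ has even degree $b$, so their leading terms cannot cancel and the degree of the sum is $\max\{a,b\}$. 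Your decomposition into $E(X^2)+XO(X^2)$ is the same parity idea pushed down to the coefficient level, which then forces you to track possible cancellation \emph{inside} $E$ and $O$ separately via the two cases $a>b$ and $b>a$; this is unnecessary once you notice that the two original summands already carry opposite parities.
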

\begin{proof}
Recall that $\hath_j(X)=h_{j1}(X^2)+Xh_{j0}(X^2)$ ($j=1,2$). Consider
the degree of $h_{11}(X^2)+Xh_{10}(X^2)$. As the leading monomial of
$\bh_1(X)$ with respect to $<_{-1}$ is on the left, we have
$\deg(h_{10})>\deg(h_{11})-1$, that is, $\deg(h_{10})\geq
\deg(h_{11})$. Hence  
$$
\deg(\hath_1)=\deg(h_{11}(X^2)+Xh_{10}(X^2))= 1+2\deg(h_{10}).
$$     
Similarly, as the leading monomial of $\bh_2$ is on the right, we have
$$
\deg(\hath_2)=\deg(h_{21}(X^2)+Xh_{20}(X^2))= 2\deg(h_{21}).
$$
Now the assertion follows from the fact that the degrees of
$u_1(X^2)\hath_1(X)$ and $u_2(X^2)\hath_2(X)$ are
distinct, as the first is odd, while the second is even.
\end{proof}

\subsubsection{Efficient evaluation without polynomial division}
\label{subsec:chiennot}
Let us now consider another efficient method to perform the exhaustive
evaluation. The current method is slightly less efficient than that of
the previous subsection, but it avoids polynomial division and gcd
calculations (Euclidean algorithm), which may be desirable in some
applications. Letting again $(g_0,g_1)$ be the vector with the minimum 
$(1,w)$-weighted degree in the current \grobner{} basis and writing
\begin{equation}\label{eq:hsig}
\hat{\sigma}(X):= \mu(g_0\bh_1 + g_1\bh_2) = g_0(X^2)\hath_1(X) +
g_1(X^2)\hath_2(X),  
\end{equation}
it follows from Theorem \ref{thm:success} that in case of success,
$\hat{\sigma}(X)=t(X^2)\sigma(X)$ for some non-zero $t(X)$ such that
$\gcd(t(X^2),\sigma(X))=1$ and $t(X^2)$ has no roots outside
$J^{-1}:=\{\alpha_{i_1}^{-1},\ldots,\alpha_{i_{r-1}}^{-1}\}$ (using the
terminology of Subsection \ref{subsec:stop}). Noting that
$\hat{\sigma}'(X)=t(X^2)\sigma'(X)$ and recalling that
$\sigma,\sigma'$ are coprime, it follows that the roots of $\sigma(X)$
are exactly those roots of $\hat{\sigma}(X)$ that are not roots of
$\hat{\sigma}'(X)$. Also, since $t(X^2)$ has no roots outside
$J^{-1}$, all the roots of $\hat{\sigma}$ outside $J^{-1}$ are also
roots of $\sigma$.
As 
$$
\hat{\sigma}'(X)=g_0(X^2)\hath'_1(X) + g_1(X^2)\hath'_2(X),
$$ 
this leads to the following evaluation strategy:

\begin{itemize}

\item Before starting the fast Chase decoding algorithm, calculate and
store 
$$
\{\hath_1(x)\}_{x\in\efq^*},\{\hath_2(x)\}_{x\in\efq^*},
\{\hath'_1(x)\}_{x\in\efq^*}, \{\hath'_2(x)\}_{x\in\efq^*}.
$$ 

\item When traversing the decoding tree $T$, if the stopping criterion
of Subsection \ref{subsec:stop} holds:

\begin{enumerate}

\item Find the estimated error locations by performing the following
actions for all $x\in \efq^*$:
\begin{itemize}

\item Calculate $a:=g_0(x^{-2}),b:=g_1(x^{-2})$.

\item Read the stored values $c:=\hath_1(x^{-1})$ and
$d:=\hath_2(x^{-1})$ and calculate $\hat{\sigma}(x^{-1}) = ac + bd$.

\item If $x^{-1}\in J^{-1}$,

\begin{itemize}

\item Read the stored values $c':=\hath'_1(x^{-1})$ and
$d':=\hath'_2(x^{-1})$ and calculate $\hat{\sigma}'(x^{-1}) = ac' +
bd'$.

\item If $\hat{\sigma}(x^{-1})=0$ and $\hat{\sigma}'(x^{-1})\neq 0$,
adjoin $x$ to a set $E$ of error locations.

\end{itemize}

\item Otherwise ($x^{-1}\notin J^{-1}$)

\begin{itemize}

\item If $\hat{\sigma}(x^{-1})=0$ adjoin $x$ to a set $E$ of error
locations. 

\end{itemize}

\end{itemize}  

\item Check if the potential error vector with support $E$ has the
same syndrome as $\bs{y}$. If it does, adjoin it to the output error
list of the decoder. This requires $t|E|$ additions in
$\eftwos$ (no multiplications in $\eftwos$ are required). 

\end{enumerate}

\end{itemize}

With this method, we do not know how to avoid
calculating the syndrome of the error vector for validation. Also, the
evaluated polynomials  
$g_0,g_1$ here typically have a higher degree than the polynomials
evaluated in the previous subsection.  The degree is higher
by $d_g:=\deg(\gcd(g_0,g_1))$, and the overall complexity increment of evaluation is
therefore $O(d_g \cdot n)$ (if the FFT of \cite{GM10} is not used). If
$d_g>0$, this is typically higher than the complexity of $O(r^2)$ 
required in the previous subsection for finding $t(X)$ and
division.\footnote{Note that when \cite{GM10} is not used for
evaluation, it is reasonable to assume that $r=O(\log(n))$, for
otherwise evaluation with \cite{GM10} is more efficient.} 

\subsection{Complexity analysis}\label{subsec:complexity}
Next, we would like to evaluate the complexity of Algorithm A. To do this,
we first have to bound the degrees of the updated polynomials.

\begin{definition}
{\rm
For $r\in\{1,\ldots,\rmax\}$ and $j\in \{1,2\}$, write
$\ared{\bg_j}=(\ared{g_{j0}},\ared{g_{j1}})$ and
$\ared{\bgplus_j}=(\ared{\gplus_{j0}},\ared{\gplus_{j1}})$ for the respective
polynomials in Algorithm $A$ when adjoining error location
$\alpha_{i_r}$. By definition,
$\varared{\bg_1}{1}=(1,0)$ and $\varared{\bg_2}{1}=(0,1)$ are the
initial values used in the first call to the algorithm. We will use
the convention $\varared{\bgplus_1}{0}:=(1,0)$ and
$\varared{\bgplus_2}{0}:=(0,1)$. 
}
\end{definition}

\begin{proposition}\label{prop:degs}
For all $r$,
\begin{equation}\label{eq:degsum}
\deg(\ared{\gplus_{10}}) + \deg(\ared{\gplus_{11}}) +
\deg(\ared{\gplus_{20}}) + \deg(\ared{\gplus_{21}}) \leq 2r-1, 
\end{equation}
where the sum on the left-hand side is taken only over non-zero
polynomials. 
\end{proposition}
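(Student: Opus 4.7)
The plan is to proceed by induction on $r$, tracking the degrees of the four polynomial components through successive calls to Algorithm A. The first step would be to establish the following invariant: for all $r \geq 0$, after $r$ applications of the algorithm one has
$$
\deg(\varared{\gplus_{10}}{r}) + \deg(\varared{\gplus_{21}}{r}) = r,
$$
with both $\varared{\gplus_{10}}{r}$ and $\varared{\gplus_{21}}{r}$ non-zero. Then the Gröbner basis property with respect to $<_w$, combined with the non-integrality of $w=2\deg(h_{21})-t-1/2$, would give the bound $\deg(\gplus_{11}) + \deg(\gplus_{20}) \leq r-1$. Adding these yields the claimed $2r-1$.

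For the invariant, the base case $r=0$ is immediate from the initialization $\bg_1=(1,0)$, $\bg_2=(0,1)$. For the inductive step, I would split according to Algorithm A. By Remark \ref{rem:alga}(1), $|A|\geq 1$. In either sub-case, the action $\bgplus_{j^*} = (X+\alpha_{i_r}^{-2})\bg_{j^*}$ increases $\deg(g_{j^*\,j^*})$ by exactly one. For $j\in A\setminus\{j^*\}$, we have $\bgplus_j = c\bg_j+\bg_{j^*}$ with $c\neq 0$. Here the key observation is that $\lmw(\bg_j)>_w\lmw(\bg_{j^*})$ by the minimality of $j^*$, which forces the leading-side component $g_{jj}$ of $\bg_j$ to strictly dominate the $j$-th component of $\bg_{j^*}$ in degree, so $\deg(\gplus_{jj})=\deg(g_{jj})$ is unchanged. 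Thus exactly one of $\deg(g_{10}),\deg(g_{21})$ increases by one per iteration, and neither ever becomes zero, so the invariant is preserved.

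For the bound on $\deg(g_{11})+\deg(g_{20})$, I would unpack what it means for $\lmw(\bg_1)=(X^{\deg(g_{10})},0)$ and $\lmw(\bg_2)=(0,X^{\deg(g_{21})})$: by the definition of $<_w$ we obtain $\deg(g_{11}) < \deg(g_{10})-w$ and $\deg(g_{20})\leq \deg(g_{21})+w$. Since $w\in \tfrac12+\bbZ$ while the polynomial degrees are integers, both inequalities sharpen to
$$
\deg(g_{11})\leq \deg(g_{10})-w-\tfrac{1}{2},\qquad \deg(g_{20})\leq \deg(g_{21})+w-\tfrac{1}{2}.
$$
Summing these and invoking the invariant gives $\deg(g_{11})+\deg(g_{20})\leq r-1$, which combined with the invariant yields the target bound.

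The main technical points to be careful about are (a) handling zero components cleanly—they are excluded from the sum on the left-hand side, which only makes the bound easier, while ensuring that $g_{10}$ and $g_{21}$ themselves remain non-zero throughout (this is part of the invariant and follows from the leading-monomial preservation argument), and (b) carefully exploiting the half-integrality of $w$ to convert the monomial-order inequalities into the sharp $-1/2$ gaps that are needed to produce the crucial $-1$ term in $2r-1$. Without the half-integer weight, the argument would only give the weaker bound $2r$.
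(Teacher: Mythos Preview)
Your overall plan matches the paper's: first establish $\deg(\gplus_{10})+\deg(\gplus_{21})\leq r$ from the \kotter{} update rules (the paper states only this inequality, but your equality is also correct, given Remark~\ref{rem:alga}(1)), then use the leading-monomial side constraints to bound $\deg(\gplus_{11})+\deg(\gplus_{20})\leq r-1$, and add. When both off-diagonal components are non-zero your half-integrality argument is fine and reaches the same conclusion as the paper; note, though, that the paper does not rely on $w\notin\bbZ$ here: summing the strict inequality $\deg(\gplus_{11})+w<\deg(\gplus_{10})$ with the weak one $\deg(\gplus_{20})-w\leq\deg(\gplus_{21})$ already gives a strict inequality between integers, hence $\leq r-1$. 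So your claim that an integer-weight argument would only give $2r$ is not accurate.

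The genuine gap is in your point (a). The assertion that a vanishing off-diagonal component ``only makes the bound easier'' fails when exactly one of $\gplus_{11},\gplus_{20}$ is zero. Say $\gplus_{11}=0$ and $\gplus_{20}\neq 0$: then the target becomes $\deg(\gplus_{20})\leq r-1$, and the only inequality at your disposal is $\deg(\gplus_{20})\leq\deg(\gplus_{21})+w-\tfrac12$; coupled with your invariant this gives what you want only when $\deg(\gplus_{10})\geq w+\tfrac12$, which is not guaranteed. For a concrete failure of your bound, take $w=3/2$ and $r=1$ with $|A|=2$ (so $j^*=1$): the outputs are $\bgplus_1=(X+\alpha_{i_1}^{-2},0)$ and $\bgplus_2=(1,c)$ for some $c\neq 0$, so the true sum is $1+0+0=1=2r-1$, yet your inequality on $\deg(\gplus_{20})$ yields only $\leq 1$ and hence a total of $\leq 2$. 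The paper closes this case by a separate short induction on $r$ proving $\deg(\ared{\gplus_{20}})\leq r-1$ unconditionally, tracking $\gplus_{20}$ through each of the three possible updates ($\Delta_2=0$; $j^*=2$; $j^*=1$ with $j=2$). You need to add this step (and its symmetric counterpart for $\gplus_{11}$).
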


\begin{proof}
For simplicity, let us define the degree of monomials in
$\eftwos[X]^2$ in the obvious way, by setting $\deg(X^k,0):=k$ and
$\deg(0,X^k):=k$.  
In each application of \kotter{}'s iteration, there is at most one
$j\in\{1,2\}$ for which $\lm(\bgplus_j)>\lm(\bg_j)$, namely
$j=j^*$. Moreover, 
$\lm(\bgplus_{j^*})=X\lm(\bg_{j^*})$, so that
$\deg\lm(\bgplus_{j^*})=\deg\lm(\bg_{j^*})+1$. Also, in the
initialization we have
$\deg\lm(\varared{\bg_1}{1})=\deg\lm(\varared{\bg_2}{1})=0$. Hence, 
for all $r$, 
\begin{equation}\label{eq:deglm}
\deg\lm(\ared{\bgplus_1})+\deg\lm(\ared{\bgplus_2})\leq r.
\end{equation}

Recall that our monomial ordering is 
$<_w$, and that in \kotter{}'s iteration, the leading
monomial of $\bgplus_j$ contains the $j$-th unit
vector. Hence 
\begin{equation}\label{eq:deglmone}
\deg\lmw(\ared{\bgplus_1})=\deg(\ared{\gplus_{10}}) >
\deg(\ared{\gplus_{11}})+w 
\end{equation}
and
\begin{equation}\label{eq:deglmtwo}
\deg\lmw(\ared{\bgplus_2}) = \deg(\ared{\gplus_{21}}) \geq
\deg(\ared{\gplus_{20}})-w.
\end{equation}
From (\ref{eq:deglm}) and the equality part of (\ref{eq:deglmone}) and
(\ref{eq:deglmtwo}), we get 
\begin{equation}\label{eq:degnonz}
\deg(\ared{\gplus_{10}})+\deg(\ared{\gplus_{21}}) \leq  r. 
\end{equation}
Suppose that both $\ared{\gplus_{11}}$ and $\ared{\gplus_{20}}$ are 
non-zero (note that $\ared{\gplus_{10}}$ and $\ared{\gplus_{21}}$,
containing the respective leading monomials, are never zero). Then from
the inequality part of (\ref{eq:deglmone}) and (\ref{eq:deglmtwo}), we
see that  
$$
\deg(\ared{\gplus_{11}})+\deg(\ared{\gplus_{20}})<
\deg(\ared{\gplus_{10}})+\deg(\ared{\gplus_{21}}) \leq r.  
$$
Summing two inequalities, we get (\ref{eq:degsum}).

Now, if $\ared{\gplus_{11}}=\ared{\gplus_{20}}=0$, then the assertion
readily follows from (\ref{eq:degnonz}), and so it remains to
consider only the case where exactly one of
$\ared{\gplus_{11}},\ared{\gplus_{20}}$ is zero, say,
$\ared{\gplus_{11}}=0$ and $\ared{\gplus_{20}}\neq 0$. We will use
induction on $r$ to prove that $\deg(\ared{\gplus_{20}})\leq r-1$. The
basis of the induction, on the root of the $T$, is clear. Suppose that
the assumption holds for $r-1$, and recall that
$\ared{\bg_j}=\varared{\bgplus_j}{r-1}$ for $j=1,2$. If $\Delta_2=0$
in iteration $r$, then there is nothing to prove. Suppose,
therefore that $\Delta_2\neq 0$. If $j^*=2$, then the assertion follows
immediately from the update rule for $j^*$ and the induction
hypothesis. Otherwise, since $\deg(\ared{g_{10}})\leq r-1$ by
(\ref{eq:degnonz}), the assertion follows by the update rule for
$j\neq j^*$ and the induction hypothesis. 
\end{proof}

Now, on an edge connecting depth $r-1$ to depth $r$ in the decoding
tree $T$, we have four polynomials
$\ared{g_{ij}}=\armoed{\gplus_{ij}}$, whose sums of degrees is at 
most $2(r-1)-1=2r-3$, by Proposition \ref{prop:degs}. For each one of
these polynomials, we have to 
perform evaluation once, and multiply by a scalar once. We also have
to calculate $\Delta_{j^*}/\Delta_j$, at the cost of a single
multiplication, assuming the inverse is calculated by a
table. Finally, there are $2$ multiplications by the pre-computed
$\hath_2(\alpha_{i_r}^{-1})/\hath_1(\alpha_{i_r}^{-1})$. In
general, for a polynomial $f$, evaluation takes $\deg(f)$
multiplications (using {\it Horner's method}), and multiplying by a
scalar takes $\deg(f)+1$ multiplications. Hence, the overall number of
multiplications for performing Algorithm A on an edge between depth
$r-1$ and $r$ is $3+2(2r-3)+4=4r+1$. In comparison, the corresponding 
complexity for \cite[Alg.~C]{SB21} is $20r+3$ (this is
``$M_C$'' from \cite{SB21}). 

Hence, the gain from moving to the binary alphabet is by
a factor of about $5$. We note that having a single \kotter{}
iteration per edge effects the complexity twice, both in halving
the degrees of the maintained polynomials, and in halving the number
of evaluations and multiplications in each stage. The reason that the
gain in comparison to the $q$-ary case is by a factor of $5$ (instead
of $4$), is that the so-called {\it derivate step} of
\cite[Alg.~C]{SB21}, which does not have an equivalent in 
Algorithm A, is slightly more complicated than the {\it root step}.

When $\rmax=\eta$, if the entire tree
$T$ is traversed, the total number of multiplications for all edge
updates is at most  
$$
\sum_{r=1}^{\eta}(4r+1)\binom{\eta}{r} =
4\sum_{r=1}^{\eta}\eta\binom{\eta-1}{r-1}+2^{\eta}-1 
= \eta 2^{\eta+1} +2^{\eta}-1
$$
(using $r\binom{\eta}{r}=\eta\binom{\eta-1}{r-1}$).

A complexity analysis of Wu's algorithm for binary BCH codes
\cite[Alg.~5]{Wu12} is missing from \cite{Wu12}. While a precise
estimation of the complexity of Wu's algorithm is outside the scope of
the current paper, we note that it is initialized by two polynomials
whose sum of degrees is $2t+1$, and on each edge, typically the sum of
degrees is increased by $2$. Hence, it seems
reasonable to replace the total degree bound of Proposition
\ref{prop:degs} by about $2(t+r)+1$ for Wu's algorithm. 

The polynomials participating on
an edge connecting a vertex at depth $r-1$ with an edge at depth $r$
therefore have a sum of degrees of $2(t+r-1)+1=2(t+r)-1$. Hence, the
evaluation part on such an edge requires $2(t+r)-1$ multiplications, while
the multiplication of polynomials by constants takes $2(t+r)+1$
multiplications (adding $2$ to account, as before, for $2$ free
coefficients). An additional single multiplication is required to
calculate a ratio of two scalars, as above. The overall number of
multiplication for such an edge is therefore $4(t+r)+1$. 

When $\rmax=\eta$, if the entire tree
$T$ is traversed, the total number of multiplications for all edge
updates is at most
$\sum_{r=1}^{\eta}(4(t+r)+1)\binom{\eta}{r}=2^{\eta+1}(\eta+2t+1/2)-4t-1$.
It follows that the algorithm of the current paper reduces the complexity
by a factor of about $(\eta+2t)/\eta=1+2t/\eta$ when $\rmax=\eta$ and
the entire tree is traversed. The complexity reduction is even higher
when $\rmax<\eta$.  

Regarding \cite{K01}, we note that from the discussion on p.~2004 of
the precursor \cite{K99}, it follows that the complexity of
polynomial updates for the fast Chase decoding of \cite{K01} is in
$O(2^{\eta+1}t)$. While it is not entirely clear what is the precise
constant involved, it seems that the overall complexity is higher than
that of \cite{Wu12} (see \cite{ZZW11}). 

Let us now consider the cost of unnecessary polynomial
evaluations, focusing only the method of Subsection
\ref{subsec:chient}. First, it can be
verified from (\ref{eq:degnonz}) that, using the terminology of
Subsection \ref{subsec:chient}, $\deg(g_0)+\deg(g_1)\leq
r-1$ on an edge connecting depth $r$ to depth $r+1$.\footnote{To see
this, we note that if $\lmw(\bg_1)<_w\lmw(\bg_2)$, then
$\deg(g_{11})+w<\deg(g_{10})\leq\deg(g_{21})+w$ (using an obvious 
notation), so that, $\deg(g_{10})+\deg(g_{11}) <
\deg(g_{10})+\deg(g_{21})\leq r$, where the last inequality is from
(\ref{eq:degnonz}). Similarly, if $\lmw(\bg_2)<_w\lmw(\bg_1)$, then
$\deg(g_{20})\leq \deg(g_{21})+w<\deg(g_{10})$, so that
$\deg(g_{20})+\deg(g_{21})<r$.} Hence, on an edge connecting depth
$r-1$ to depth $r$, we have $\deg(g_0)+\deg(g_1)\leq
r-2$.

The number of multiplications for the gcd calculation and the divisions
from Subsection \ref{subsec:chient} is therefore $O(r^2)$ (see, e.g.,
\cite[Thm.~17.3]{Shoup}). Evaluation of two polynomials whose sum of
degrees is at most $r-2$ (for $r\geq 2$) can be done using
$n\cdot\min\{4\log_2(n),r-2\}$ 
multiplications, where the minimum is between point-by-point
evaluation using Horner's method and the FFT algorithm of
\cite{GM10}. In addition, there are $2n$ multiplications in 
calculations of the form $ac+bd$.

As already mentioned, it seems reasonable to
assume that the probability of falsely meeting the stopping condition
of Subsection \ref{subsec:stop} is about $1/q$. With this assumption,
the mean contribution per-edge of unnecessary exhaustive evaluations
is 
$$
\frac{1}{q}\big(O(r^2)+2n+n\cdot\min\{4\log_2(n),r-2\}\big),
$$
which is dominated by
$\bar{N}_{\mathrm{eval}}:=2+\min\{4\log_2(n),r-2\}$. When the minimum
is $r-2$, this increases the overall mean number of multiplications by
a factor of about $5/4$ over the number $4r+1$ solely for polynomial
updates. 

Comparing to Wu's algorithm, recall first that the approximated
probability of $1/q$ applies also to the stopping condition of
\cite{Wu12}. Hence, arguing as above, the mean contribution
per-edge of unnecessary exhaustive evaluations in \cite{Wu12} is
about $\min\{2\log_2(n),t+r\}$ (note that for \cite{Wu12}, only a single
polynomial has to be evaluated). Again, when the minimum is $t+r$,
this increases the overall mean number of multiplications by a factor
of about $5/4$ over polynomial evaluation.

Finally, we note that the
complexity of the pre-computations in part 2 of Remark \ref{rem:alga}
is in $O(\eta d)$, and hence negligible.
\subsection{High-level description of the decoding process}
Let us now wrap-up the entire decoding process with the fast Chase
algorithm of the current paper.

\begin{enumerate}

\item Perform HD decoding in the following way:

\begin{itemize}

\item Calculate the syndrome polynomial $S(X)$.

\item Calculate the modified syndrome
$\hat{S}(X)=\frac{\even{S}(X)}{1+X\odd{S}(X)}\mod(X^t)$ using the
recursion (\ref{eq:recur}).

\item Find a \grobner{} basis 
$\{\bh_1=(h_{10},h_{11}),\bh_2=(h_{20},h_{21})\}$ with respect to
$<_{-1}$ for the module $N$ of Definition \ref{def:solmod}, where  
the leading monomial of $\bh_1$ is in the first coordinate and the
leading monomial of $\bh_2$ in the second coordinate. As explained in
\cite{Fitz95}, this can be practically done with any of the
syndrome-based algorithms for HD decoding of RS codes.

\item Let $j$ be such
$\lmmo(\bh_j)=\min\{\lmmo(\bh_1),\lmmo(\bh_2)\}$. Take
$\sigma:=\mu(\bh_j)=h_{j1}(X^2) + Xh_{j0}(X^2)$

\item If $\deg(\sigma)\leq t$, find the roots of $\sigma$ in $\efq^*$
by exhaustive substitution. If the number of roots equals
$\deg(\sigma)$, declare success and output the corresponding error vector. 

\end{itemize}

\item If HD decoding fails, use reliability information to identify
the set $U$ of $\eta$ least reliable coordinates. calculate
$w:=2\deg(h_{21})-t-1/2$, store the basis $\{(1,0),(0,1)\}$ at the
memory for depth $0$, and 
perform fast Chase decoding by traversing the tree $T$, depth first:

\begin{itemize}

\item When visiting the edge
$(\bs{\beta}',\bs{\beta})$ between  a vertex  $\bs{\beta}'$ at depth
$r-1$ and a 
vertex $\bs{\beta}$ at depth $r$ with an additional $1$ in coordinate
$\alpha_{i_r}$:   

\begin{itemize}

\item Perform Algorithm A, taking the inputs $\bg_1,\bg_2$ from
the memory for depth $r-1$. If the stopping criterion holds, perform
efficient exhaustive substitution, in one of the methods described in
Subsection \ref{subsec:chient} and Subsection
\ref{subsec:chiennot}. If the resulting error passes the
verification criterion described in the 
respective subsections, adjoin the error to the list of output errors. 

\item Store the outputs $\bgplus_1,\bgplus_2$ in the memory for depth
$r$.  

\end{itemize}

\end{itemize} 

\end{enumerate}

\section{Conclusions}\label{sec:conclusions}
Building on the SD Wu list decoding algorithm, we have presented a new
syndrome-based fast Chase decoding algorithm for 
binary BCH codes. The new algorithm requires only a single
\kotter{} iteration per edge of the decoding tree, as opposed to two
iterations in the corresponding algorithm
for RS codes \cite{SB21}. Also, the algorithm has a lower complexity
than that of \cite{Wu12}, as it updates low-degree coefficient
polynomials. As in \cite{SB21}, and as opposed to \cite{Wu12}, the
current fast Chase algorithm can work also if the total number of
errors is beyond $d-1$. 

We have also established an isomorphism between two solution modules
for decoding binary BCH codes. This isomorphism can be used to benefit
from the binary alphabet for reducing the complexity of HD
bounded-distance decoding in a systematic way, practically for all
syndrome-based HD decoding algorithms.   

The new fast Chase algorithm is based on the idea that for a maximum
list size  of 1, $\{0,1\}$-multiplicity assignment in the SD Wu list
decoding algorithm is equivalent to flipping bits in locations with a
non-zero multiplicity (Theorem \ref{thm:success}). This allows to
accumulate flippings through \kotter{}'s iteration, and leads to the
fast Chase decoding algorithm.

We note that the BM algorithm itself can be used to
find the \grobner{} basis $\{\bh_1,\bh_2\}$ for the module $N$. The idea
is that two vectors are extracted from the output of the algorithm, and
then at most a single leading monomial cancellation is required for
achieving the required \grobner{} basis. In fact, the calculation of
$\hath_1,\hath_2$ can be done without calculating the modified syndrome
at all, using a slightly augmented BM algorithm: In addition
to tracking the {\it connection polynomial} before the last length
change, one tracks one additional past version of the connection
polynomial. Since the proofs are rather long and this is outside the
main scope of the current paper, we omit further details. For a
precise listing of the relevant algorithms, see \cite{SK19}.

It seems plausible that the fast Chase
decoding algorithm of the current paper can also be derived through a
suitable minimization problem over a module, as done in \cite{SB21}
for RS codes. It would be interesting to formalize and prove this. It would
be even more interesting to find a way to obtain the results of
\cite{SB21} through the SD Wu list decoding algorithm for RS
codes. Beyond the interest in finding new connections between decoding
algorithms, there are also additional advantages in using the Wu list
decoding as a means for deriving a fast Chase algorithm. For example,
the case of indirect hits (Remark \ref{rem:dir}) is handled trivially,
as opposed to quite some effort in \cite[Appendix A]{SB21}.

\section*{Acknowledgment}
We thank Avner Dor and Itzhak Tamo for some very helpful
discussions. We also thank Johan Rosenkilde for pointing us to
\cite{BL94}, \cite{BL97}, \cite{N16}, and \cite{RS21}.

\appendix

\section{Proof of part 4 of Proposition \ref{cw_prop:yzhomog}} 
\label{app:homog} 
\begin{proof}
Writing $f$ as a sum of the form $\sum_{i_1,i_2,i_3}
b_{i_1,i_2,i_3}X^{i_1}Y^{i_2}Z^{i_3}$ with
$b_{i_1,i_2,\rho-i_2}:=a_{i_1,i_2}$ and with $b_{i_1,i_2,i_3}:=0$ for 
$i_3\neq \rho-i_2$, it follows Proposition \ref{cw_prop:hassecalc}
that for all $(j_1,j_2,j_3)$ and for all $\alpha\in K$
\begin{eqnarray*}
\hasse{f}{j_1,j_2,j_3}(x_0,\alpha y_0,\alpha z_0) & = & \sum_{\substack{i_1\geq
j_1,i_2\geq j_2 \\i_3=\rho-i_2\geq j_3}}b_{i_1,i_2,i_3}
\binom{i_1}{j_1}\binom{i_2}{j_2}\binom{i_3}{j_3}  x_0^{i_1-j_1}(\alpha
y_0)^{i_2-j_2}(\alpha z_0)^{\rho-i_2-j_3}\\
& = & \alpha^{\rho-j_2-j_3}\hasse{f}{j_1,j_2,j_3}(x_0,y_0,z_0).
\end{eqnarray*}
Note that as $\alpha$ is always raised to a non-negative power,
the above indeed holds also for $\alpha=0$. For this choice of
$\alpha$, we obtain
$$
\hasse{f}{j_1,j_2,j_3}(x_0,y_0,z_0) = 0\quad \implies\quad
\hasse{f}{j_1,j_2,j_3}(x_0,0,0) =0,
$$
which is sufficient for proving the assertion.\footnote{It also follows
that $\hasse{f}{j_1,j_2,j_3}(x_0,0,0)$  is non-zero only if
$j_3=\rho-j_2$ (this can also be verified directly).} 
\end{proof}

\section{Proof of Proposition \ref{prop:delta}} \label{app:nvar}
\begin{proof}
If $w'_1,w'_2,d$ are integers, then
\begin{equation} \label{eq:bothint}
\nvarsrwt(d) \geq  \sum_{j=0}^{\rho}\big(d - w'_1 j -
w'_2(\rho-j)+1\big) =
(\rho+1)\big(d+1-\frac{\rho}{2}(w'_1+w'_2)\big). 
\end{equation}

Suppose now that $w'_2$ is an integer, while $w'_1$ is not an integer,
say, $w'_1=w'_{10}+1/2$ for some $w'_{10}\in\bbN$.
In this case, if $d$ is an integer and $j$ is odd, then
\begin{multline}
\lfloor d-jw'_1-(\rho-j)w'_2+1\rfloor =  \Big\lfloor
d-jw'_{10}-\frac{j-1}{2}-(\rho-j)w'_2+\frac{1}{2}\Big\rfloor \\
=  d-jw'_{10}-\frac{j-1}{2}-(\rho-j)w'_2
=  d-jw'_1-(\rho-j)w'_2+\frac{1}{2},\label{eq:midway}   
\end{multline} 
while if $j$ is even, the floor function can be simply removed. 
Similarly, if $d$ is not an integer and $j$ is even, then
(\ref{eq:midway}) holds, while if $j$ is odd, then the floor function 
can be removed.
This means that when $w'_1$ is not an integer, we have to subtract $1/2$
either from all odd $j$ or all even $j$ 
of the sum in
(\ref{eq:bothint}). If $\rho$ is odd, we therefore subtract
exactly $(\rho+1)/2$ from
(\ref{eq:bothint}) in any possible case, while if $\rho$ is even, we
subtract at most $(\rho+2)/2$. Hence, 
$$
\nvarsrwt(d)  \geq  \begin{cases}
\displaystyle
(\rho+1)\big(d+\frac{1}{2}-\frac{\rho}{2}(w'_1+w'_2) \big) & \text{if
$\rho$ is odd}\\
\displaystyle
(\rho+1)\big(d+\frac{1}{2}-\frac{\rho}{2}(w'_1+w'_2) \big)-\frac{1}{2}
& \text{if $\rho$ is even}.
\end{cases}
$$
The case where only $w'_2$ is non-integer is handled in a similar
way. 

It follows that when exactly one of $w'_1,w'_2$ is not an integer, for
odd $\rho$, $\mindegrwt(\eneq)$ is not 
above the minimum possible $d\in\frac{1}{2}\bbN$ for which
\begin{equation}\label{eq:mind}
\eneq < (\rho+1)\big(d+\frac{1}{2}-\frac{\rho}{2}(w'_1+w'_2)\big),  
\end{equation}
while for even $\rho$, we have to replace $\eneq$ by $\eneq+1/2$ in
(\ref{eq:mind}). 
As (\ref{eq:mind}) is equivalent to
$$
d>\frac{\eneq}{\rho+1} + \frac{\rho}{2}(w'_1+w'_2)-\frac{1}{2},
$$
(\ref{eq:half}) follows. 
The proof for the case where both $w'_1$ and $w'_2$ are integers
(using (\ref{eq:bothint}) without modifications) is similar.
\end{proof}

\section{Using the monomial order $<_w$ to minimize the
$(1,w+1)$-weighted degree}\label{app:onew}
In the last part of Remark \ref{rem:alga}, it was stated that the
lower of the two weights in (\ref{eq:deltaw}) can be used for both the
even and odd cases. The following proposition makes this statement
precise.

\begin{proposition}
For $w\in \bbZ$ and for monomials $\bs{m}_1,\bs{m}_2\in K[X]^2$
(where $K$ is a field), suppose that $\bs{m}_1<_{w+1}\bs{m}_2$ but
$\bs{m}_1>_{w}\bs{m}_2$. Then
$\wdeg_{w+1}(\bs{m}_1)=\wdeg_{w+1}(\bs{m}_2)$. 
\end{proposition}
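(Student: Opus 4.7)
The plan is to carry out a straightforward case analysis based on which side (left or right) each of $\bs{m}_1$ and $\bs{m}_2$ lives on. Recall from Definition \ref{def:ord} that the comparison rules for $<_w$ and $<_{w+1}$ differ only in comparisons of a left monomial with a right monomial, since the rule for two monomials on the same side (i.e., both of the form $(X^j,0)$, or both of the form $(0,X^j)$) reduces to comparison of the exponents and is independent of $w$. Hence, in the two same-side cases, $<_w$ and $<_{w+1}$ agree, so the hypothesis of the proposition cannot hold.

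It then remains to treat the two mixed cases. Suppose first that $\bs{m}_1=(X^{j_1},0)$ and $\bs{m}_2=(0,X^{j_2})$. Using the explicit rule $(X^{j_1},0)<_w(0,X^{j_2})\iff j_1\le j_2+w$, the two assumptions $\bs{m}_1<_{w+1}\bs{m}_2$ and $\bs{m}_1>_w\bs{m}_2$ translate to $j_1\le j_2+w+1$ and $j_1>j_2+w$, forcing $j_1=j_2+w+1$. Then
\[
\wdeg_{w+1}(\bs{m}_1)=j_1=j_2+w+1=\wdeg_{w+1}(\bs{m}_2),
\]
which is the desired conclusion.

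The opposite mixed case, where $\bs{m}_1=(0,X^{j_1})$ is on the right and $\bs{m}_2=(X^{j_2},0)$ is on the left, turns out to be vacuous under the hypothesis: $\bs{m}_1>_w\bs{m}_2$ gives $j_2\le j_1+w$, while $\bs{m}_1<_{w+1}\bs{m}_2$ gives $j_2>j_1+w+1$, and these cannot hold simultaneously. Intuitively, raising the weight from $w$ to $w+1$ can only make a right-side monomial larger relative to a left-side monomial, not smaller. There is no real obstacle; the entire argument amounts to unwinding the definition of $<_w$ on monomials of $K[X]^2$, and the only mild subtlety is remembering to handle the sharp boundary case $j_1=j_2+w+1$, which is precisely where the conclusion becomes an equality of $(w+1)$-weighted degrees.
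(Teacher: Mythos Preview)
Your proof is correct and follows essentially the same approach as the paper's own proof: both reduce immediately to the two mixed cases, derive $j_1=j_2+w+1$ in the first case, and obtain a contradiction in the second. The only cosmetic difference is that you spell out explicitly why the same-side cases are excluded and why the resulting equality of exponents yields equality of $(1,w+1)$-weighted degrees.
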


\begin{proof}
The assumption implies that $\bs{m}_1,\bs{m}_2$ contain distinct unit
vectors. Hence, there are two cases two consider: (1)
$\bs{m}_1=(X^i,0)$ and $\bs{m}_2=(0,X^j)$, and (2)
$\bs{m}_1=(0,X^i)$ and $\bs{m}_2=(X^j,0)$ ($i,j\in \bbN$). In case
(1), we have $i\leq j+w+1$ and $i>j+w$, which is equivalent to
$i=j+w+1$, as required. In case (2), we have $i+w+1<j$ and $j\leq
i+w$, which is a contradiction. 
\end{proof}


\end{document}